\newtheorem{THM}{Theorem}[section]
\newtheorem{LMA}[THM]{Lemma}
\newcommand{\com}[1]{\ifnum\count13<1 #1 \fi}
\def\squarebox#1{\hbox to #1{\hfill\vbox to #1{\vfill}}}
\def\qed{\hspace*{\fill}%
        \vbox{\hrule\hbox{\vrule\squarebox{.667em}\vrule}\hrule}\smallskip}
\newenvironment{proof}{\begin{trivlist}
\item[\hspace{\labelsep}{\em\noindent Proof.~}]}{\qed\end{trivlist}}
\def\squarebox#1{\hbox to #1{\hfill\vbox to #1{\vfill}}}
\def\qed{\hspace*{\fill}%
        \vbox{\hrule\hbox{\vrule\squarebox{.667em}\vrule}\hrule}\smallskip}
\begin{document}

\title{Better Bounds for Online $k$-Frame Throughput Maximization\\in Network Switches}

\author{
	Jun Kawahara$^{1}$, 
	Koji M. Kobayashi$^{2}$, and 
	Shuichi Miyazaki$^{3}$
	\\   
	{\footnotesize 
		$^{1}$Graduate School of Information Science, Nara Institute of Science and Technology 
	}
	\\
	{\footnotesize 
		$^{2}$
		Corresponding author, 
	}
	{\footnotesize 
		National Institute of Informatics, 
	}
	\\
	{\footnotesize 
		2-1-2, Hitotsubashi, Chiyoda-ku, 
		Tokyo, 1018430, Japan, 
	}
	{\footnotesize 
		kobaya@nii.ac.jp
	}
	\\   
	{\footnotesize 
		$^{3}$Academic Center for Computing and Media Studies, Kyoto University
	}
}

\date{}

\setlength{\baselineskip}{4.85mm}
\maketitle

\begin{abstract}
\ifnum \count10 > 0
ネットワーク上におけるスイッチのキュー管理は、オンラインバッファ管理問題として定式化されており、多くのモデルが考案されている。我々は、そのモデルの1つである、$k(\geq 2)$個のパケットからなるフレーム転送量最大化問題（$k$-FTM）を扱う。この問題では、データ転送の単位である各フレームをパケットに分割し、転送完了後にフレームを再構成する際の手順に焦点を当てている。各フレームは$k$個のパケットから構成されており、そのパケットがスイッチ内のFIFOバッファに到着する。1つのフレームを構成する$k$個のパケットをバッファから転送した場合、そのフレームを再構成することが可能となる。しかし、バッファのサイズは制限されているため、到着したパケットは全てを転送できない場合がある。本問題の目的は、再構成するフレームの数を最大化することである。
Kesselmanらは本問題に対して、$k=2$の場合でさえ$k$-FTMに対する任意のアルゴリズムの競合比は発散することを示している。
そこで、入力に制限を加えた$k$-FTM（$k$-OFTM）を考え、任意のバッファの大きさ$B (\geq k)$に対して、その競合比が高々$\frac{ 2kB }{ \lfloor B/k \rfloor + k}$となるアルゴリズムを考案し、また、任意の$2B (\geq k)$に対して、任意のアルゴリズムの競合比の下限が少なくとも$\frac{ B }{ \lfloor 2B/k \rfloor}$であることを示した。
本稿において、$k$-OFTMに対して我々は以下の結果得ている。
(i) 任意の$k \geq 2$と任意の$B$の場合に、決定性アルゴリズムに対する$2k-1$となる下限を示す。
	これは、先の$\frac{ B }{ \lfloor 2B/k \rfloor }$という下限を改良している。
(ii) $B \geq 2k$の場合に、競合比が$\frac{5B + \lfloor B/k \rfloor - 4}{\lfloor \lfloor B/k \rfloor / 2 \rfloor} = O(k)$
	となる決定性アルゴリズムを設計している。これはKesselmanらの下限の値と我々の(i)の下限に漸近的に一致する。
(iii) 任意の$k \geq 3$と任意の$B$に対して、
	$k-1$となる確率アルゴリズムに対する下限を示す。
	これは、(ii)の上限と漸近的に一致している。
\fi
\ifnum \count11 > 0
\com{（■英語）}
	We consider a variant of the online buffer management problem in network switches, called the $k$-frame throughput maximization problem ($k$-FTM).  This problem models the situation where a large frame is fragmented into $k$ packets and transmitted through the Internet, and the receiver can reconstruct the frame only if he/she accepts all the $k$ packets. 
	Kesselman et~al.\ introduced this problem and showed that its competitive ratio is unbounded even when $k=2$.
	They also introduced an ``order-respecting'' variant of $k$-FTM, called $k$-OFTM, where inputs are restricted in some natural way. They proposed an online algorithm and showed that its competitive ratio is at most $\frac{ 2kB }{ \lfloor B/k \rfloor } + k$ for any $B \ge k$, where $B$ is the size of the buffer. They also gave a lower bound of $\frac{ B }{ \lfloor 2B/k \rfloor }$ for deterministic online algorithms when $2B \geq k$ and $k$ is a power of $2$. 
	
	In this paper, 
	we improve upper and lower bounds on the competitive ratio of $k$-OFTM. 
	Our main result is to improve an upper bound of $O(k^{2})$ by Kesselman et~al.\ to $\frac{5B + \lfloor B/k \rfloor - 4}{\lfloor B/2k \rfloor} = O(k)$ for $B\geq 2k$. 
	Note that this upper bound is tight up to a multiplicative constant factor since the lower bound given by
	Kesselman et~al.\ is $\Omega(k)$. 
	We also give two lower bounds. 
	First we give a lower bound of $\frac{2B}{\lfloor {B/(k-1)} \rfloor} + 1$
	on the competitive ratio of deterministic online algorithms for any $k \geq 2$ and any $B \geq k-1$, 
	which improves the previous lower bound of $\frac{B}{ \lfloor 2B/k \rfloor }$ by a factor of almost four. 
	Next, 
	we present the first nontrivial lower bound on the competitive ratio of randomized algorithms. 
	Specifically, 
	we give a lower bound of $k-1$ against an oblivious adversary for any $k \geq 3$ and any $B$. 
	Since a deterministic algorithm, as mentioned above, 
	achieves an upper bound of about $10k$, 
	this indicates that randomization does not help too much. 
\fi
\end{abstract}

\newpage

\section{Introduction} \label{Intro}
\ifnum \count10 > 0
\com{（■日本語）\\}
（■未修正）
ネットワークを介して動画のような巨大なデータを送信する需要が高まっている．
このような巨大データは多くのデータ断片から構成され，それらは{\em フレーム}と呼ばれる．
小さな転送単位を用いているネットワークでフレームを転送するために，
フレームをさらにいくつかのパケットに分割する必要がある．
受信側では，1つのフレームの全パケットを受信するならば，そのフレームを再構成できる．
このように，上記の状況ではパケット間に依存関係が存在する．
ネットワークが混雑している場合，いくつかのパケットはバッファからあふれるため，
破棄しなければならない．
上記のシナリオの下で，出来る限り多くのフレームを
再構築できるようにパケットを転送したい．
近年，
これらの問題においてバッファ管理に焦点をあててオンライン問題として定式化し，
それに対するオンラインアルゴリズムの研究が盛んに行われている．
最も基本的なモデルは次の通りである\cite{WA00}．
各パケットは何らかの価値を持っており，スイッチは大きさ$B$のバッファを持っている．
問題の入力はイベントの列から構成される．
各イベントは，到着イベントか送信イベントである．
到着イベントでは，
スイッチの入力ポートにパケットが1つ到着する．
パケットは単位長であり，その価値はパケットの優先度を示す．
スイッチのバッファに同時に$B$個以下のパケットを受理して蓄えることが出来る．
蓄えられたパケットはFIFOでスイッチから送出される．
到着イベントの際，バッファが満杯であるなら，新しく到着したパケットは非受理となる．
バッファに空きがあるならば，アルゴリズムは受理するかどうかをオンラインに判断することになる．
各送信イベントには，
FIFOキューの先頭からパケットを1つ送信する．
本問題の目的は，送信されるパケットの価値の総和の最大化である．
アルゴリズムの性能評価には競合比解析\cite{AB98,DS85}が用いられる．競合
比解析においては，オンラインアルゴリズムの得た価値を，入力が全て分かっ
ている場合の，最適なオフラインアルゴリズム（これを以下では$OPT$と書く）
の得る価値と比較する．任意の入力に対して，オンラインアルゴリズムが得る
価値が，$OPT$の得る価値の$1/c$倍以上である場合に，そのオンラインアルゴ
リズムの競合比は$c$であると言う．
Kesselmanらはフレームの再構築を考慮したオンラインバッファ管理問題を，
$k$フレーム転送量最大化問題（$k$-FTM）として
以下のように定式化した\cite{AK09}．ここで，$k$は$2$以上の整数である．
フレーム$f$内の全パケットを転送した場合，
フレーム$f$は{\em 完全である}と定義する．我々の目標はアルゴリズムの利得，すなわち完全なフレーム総数の
最大化である．
\fi
\ifnum \count11 > 0
\com{（■英語）\\}
When transmitting data through the Internet, 
each data is fragmented into smaller pieces, and such pieces are encapsulated into data packets.
Packets are transmitted to the receiver via several switches and routers over a network, 
and are reconstructed into the original data at the receiver's side. 
One of the bottlenecks in achieving high throughput is processing ability of switches and routers. 
If the arrival rate of packets exceeds the processing rate of a switch, 
some packets must be dropped. 
To ease this inconvenience, 
switches are usually equipped with FIFO buffers that temporarily store packets which will be processed later. 
In this case, 
the efficiency of buffer management policies is important since it affects the performance of the overall network. 
Aiello et~al.\ \cite{WA00} initiated the analysis of buffer management problem using the {\em competitive analysis}~\cite{AB98,DS85}: 
An input of the problem is a sequence of events where each event is an arrival event or a send event. 
At an arrival event, one packet arrives at an input port of the buffer (FIFO queue). 
Each packet is of unit size and has a positive value that represents its priority. 
A buffer can store at most $B$ packets simultaneously. 
At an arrival event, 
if the buffer is full, the new packet is rejected. 
If there is room for the new packet, 
an online algorithm determines whether to accept it or not without knowing the future events. 
At each send event, the packet at the head of the queue is transmitted. 
The gain of an algorithm is the sum of the values of the transmitted packets, 
and the goal of the problem is to maximize it. 
If, for any input $\sigma$, 
the gain of an online algorithm $ALG$ is at least $1/c$ of the gain of an optimal offline algorithm for $\sigma$, 
then we say that $ALG$ is $c$-competitive.

Following the work of Aiello et~al.\ \cite{WA00}, 
there has been a great amount of work related to the competitive analysis of buffer management. 
For example, Andelman et~al.\ \cite{NACQ03} generalized the two-value model of \cite{WA00} into the multi-value model 
in which the priority of packets can take arbitrary values. 
Another generalization is to allow {\em preemption}, 
i.e., an online algorithm can discard packets existing in the buffer. 
Results of the competitiveness on these models are given in \cite{AKB01,MS01,AKI02,NACM03,NA05,ME06}. 
Also, management policies not only for a single queue but also for the whole switch are
extensively studied, which includes multi-queue switches \cite{YA03,NACQ03,SA04,YAM04,KK09,MB10}, 
shared-memory switches \cite{EH01,AKH01,KK07}, 
CIOQ switches \cite{AKS03,YA06,AKI08,AKC08}, 
and crossbar switches \cite{AKP08,AKB08}. 
See \cite{MG10} for a comprehensive survey.

Kesselman et~al.\ \cite{AK09} proposed another natural extension, called the {\em $k$-frame throughput maximization} problem ($k$-FTM), 
motivated by a scenario of reconstructing the original data from data packets at the receiver's side. 
In this model, a unit of data, called a {\em frame}, is fragmented into $k$ packets 
(where the $j$th packet of the frame is called a $j$-packet for $j\in [1,k]$) 
and transmitted through the Internet. 
At the receiver's side, 
if all the $k$ packets (i.e., the $j$-packet of the frame for all $j$) are received, 
the frame can be reconstructed (in such a case, we say that the frame is {\em completed}); 
otherwise, 
even if one of them is missing, the receiver can obtain nothing.
The goal is to maximize the number of completed frames. 
Kesselman et~al.\ \cite{AK09} considered this scenario on a single FIFO queue. 
They first showed that the competitive ratio of any deterministic algorithm for $k$-FTM is unbounded even when $k=2$ 
(which can also be applied to randomized algorithms with a slight modification). 
However, 
their lower bound construction somehow deviates from the real-world situation, 
that is, 
although each packet generally arrives in order of departure in a network such as a TCP/IP network, 
in their adversarial input sequence 
the 1-packet of the frame $f_{i}$ arrives prior to that of the frame $f_{i'}$, 
while the 2-packet of $f_{i'}$ arrives before that of $f_{i}$. 
Motivated by this, 
they introduced a natural setting for the input sequence, 
called the {\em order-respecting} adversary, 
in which, roughly speaking, 
the arrival order of the $j$-packets of $f_{i}$ and $f_{i'}$ must obey the arrival order of the $j'$-packets of $f_{i}$ and $f_{i'}$ ($j' < j$) 
(a formal definition will be given in Sec.~\ref{model}). 
We call this restricted problem the {\em order-respecting $k$-frame throughput maximization} problem ($k$-OFTM). 
For $k$-OFTM, 
they showed that the competitive ratio of any deterministic algorithm is at least $B/\lfloor 2B/k \rfloor$ 
when $2B \geq k$ and $k$ is a power of $2$. 
As for an upper bound, 
they designed a non-preemptive algorithm called {\sc StaticPartitioning} ($SP$), 
and showed that its competitive ratio is at most $\frac{ 2kB }{ \lfloor B/k \rfloor } + k$ for any $B \ge k$. 
\fi
\subsection{Our Results}
In this paper, we present the following results:

(i) 
We design a deterministic algorithm {\sc Middle-Drop and Flush}
($MF$) for $B \geq 2k$, and show that its competitive ratio is at most
$\frac{5B + \lfloor B/k \rfloor - 4}{\lfloor B/2k \rfloor}$. 
Note that this ratio is $O(k)$, which improves $O(k^{2})$ of Kesselman et~al.\
\cite{AK09} and matches the lower bound of $\Omega(k)$ up to a constant
factor.

(ii) 
For any deterministic algorithm, 
we give a lower bound of $\frac{2B}{\lfloor {B/(k-1)} \rfloor} + 1$ on the competitive ratio
for any $k \geq 2$ and any $B \geq k-1$.  This improves the previous
lower bound of $\frac{B}{\lfloor 2B/k \rfloor}$ by a factor of almost four. 
Moreover, we show that the competitive ratio of any deterministic
online algorithm is unbounded if $B \leq k-2$.

(iii) 
In the randomized setting, 
we establish the first nontrivial lower bound of $k-1$ against an oblivious adversary for any $k \geq 3$ and any $B$.
This bound matches our deterministic upper bound mentioned in (i) up to a constant factor,
which implies that randomization does not help for this problem.
\subsection{Used Techniques} \label{sec:idea}
\ifnum \count10 > 0
\fi
\ifnum \count11 > 0
\com{（■英語）}
Let us briefly explain an idea behind our algorithm $MF$. 
The algorithm $SP$ by Kesselman et~al.\ \cite{AK09} works as follows: 
(1) 
It virtually divides its buffer evenly into $k$ subbuffers, each with size $A=\lfloor
\frac{B}{k} \rfloor$, and each subbuffer (called {\em $j$-subbuffer} for
$j \in [1,k]$) is used for storing only $j$-packets. 
(2) 
If the $j$-subbuffer overflows, i.e., 
if a new $j$-packet arrives when $A$
$j$-packets are already stored in the $j$-subbuffer, it rejects the newly arriving $j$-packet (the ``tail-drop'' policy). 
It can be shown that $SP$ behaves poorly when a lot of $j$-packets arrive at a burst, 
which increases $SP$'s competitive ratio as bad as $\Omega(k^{2})$
(such a bad example for $SP$ is given in Appendix~\ref{sec:ap.1}). 
In this paper, 
we modify the tail-drop policy and employ the ``middle-drop'' policy, which preempts the $(\lfloor A/2 \rfloor +1)$st packet in the $j$-subbuffer and accepts the newly arriving $j$-packet, 
which is crucial in improving the competitive ratio to $O(k)$, 
as explained in the following.
$MF$ partitions the whole set of given frames into {\em blocks} $BL_{1}, BL_{2},
\ldots$, each with about $3B$ frames, using the rule concerning the
arrival order of 1-packets. 
(This rule is explained in Sec.~\ref{algorithm_k} at the definition of $MF$, 
where the block $BL_{i}$ corresponds to the set of frames with the {\em block number} $i$.) 
Each block is categorized into {\em good} or {\em bad}: 
At the beginning of the input, 
all the blocks are good. 
At some moment during the execution of $MF$, 
if there is no more possibility of completing at least $\lfloor A/2 \rfloor$ frames of a block $BL_{i}$ 
(as a result of preemptions and/or rejections of packets in $BL_{i}$), 
then $BL_{i}$ turns bad. 
In such a case, 
$MF$ completely gives up $BL_{i}$ and preempts all the packets belonging to $BL_{i}$ in its buffer if any
(which is called the ``flush'' operation). 
Note that at the end of input, 
$MF$ completes at least $\lfloor A/2 \rfloor$ frames of a good block.

Consider the moment when the block $BL_{i}$ turns bad from good, 
which can happen only when preempting a $j$-packet $p$ (for some $j$) of $BL_{i}$ from the $j$-subbuffer. 
Due to the property of the middle-drop policy, 
we can show that there exist two integers $i_{1}$ and $i_{2}$ ($i_{1} < i < i_{2}$) such that  
(i) just after this flush operation, $BL_{i_{1}}$ and $BL_{i_{2}}$ are
good and all the blocks $BL_{i_{1}+1}, BL_{i_{1}+2}, \ldots,
BL_{i_{2}-1}$ are bad, and
(ii) just before this flush operation, all the $j$-packets of $BL_{i}$
(including $p$) each of which belongs to a frame 
that still has a chance of being completed are located between $p_{1}$ and $p_{2}$, 
where $p_{1}$ and $p_{2}$ are $j$-packets in the buffer belonging to $BL_{i_{1}}$ and $BL_{i_{2}}$, respectively. 
The above (ii) implies that 
even though $i_{2}$ may be much larger than $i_{1}$ 
(and hence there may be many blocks between $BL_{i_{1}}$ and $BL_{i_{2}}$), 
the arrival times of $p_{1}$ and $p_{2}$ are close 
(since $p_{1}$ is still in the buffer when $p_{2}$ arrived). 
This means that $j$-packets of $BL_{i_{1}}$ through $BL_{i_{2}}$ arrived at a burst within a very short span, 
and hence any algorithm (even an optimal offline algorithm $OPT$) cannot accept
many of them. 
In this way, 
we can bound the number of packets accepted by $OPT$ 
(and hence the number of frames completed by $OPT$) 
between two consecutive good blocks. 
More precisely, 
if $BL_{i_1}$ and $BL_{i_2}$ are consecutive good blocks at the end of the input, 
we can show that the number of frames in $BL_{i_1}, BL_{i_1+1}, \ldots, BL_{i_2-1}$ completed by $OPT$ is at most $5B+A-4=O(B)$ using (i). 
Recall that $MF$ completes at least $\lfloor A/2 \rfloor=\Omega(B/k)$ frames of $BL_{i_1}$ since $BL_{i_1}$ is good, 
which leads to the competitive ratio of $O(k)$. 
\fi

\subsection{Related Results}
In addition to the above mentioned results,
Kesselman et~al.\ \cite{AK09} proved that for any $B$, the competitive
ratio of a preemptive greedy algorithm for $k$-OFTM is unbounded when $k \ge 3$.
They also considered offline version of $k$-FTM and proved the approximation hardness.
Recently, 
Kawahara and Kobayashi~\cite{JK13} proved that the optimal competitive ratio of 2-OFTM is 3, which is achieved by a greedy algorithm.

Scalosub et~al.\ \cite{GS10} proposed a generalization of $k$-FTM, called
the {\em max frame goodput} problem.  In this problem, a set of frames
constitute a {\em stream}, and a constraint is imposed on the arrival
order of packets within the same stream.  They established an
$O((kMB+M)^{k+1})$-competitive deterministic algorithm, where $M$
denotes the number of streams.  Furthermore, they showed that the
competitive ratio of any deterministic algorithm is $\Omega(kM/B)$.

Emek et~al.\ \cite{YE10} introduced the {\em online set packing} problem.
This problem is different from $k$-FTM in that each frame may consist of
different number (at most $k_{\mbox{\scriptsize max}}$) of packets.
Also, a frame $f$ consisting of $s(f)$ packets can be reconstructed if
$s(f)(1 - \beta)$ packets are transmitted, where $\beta$ ($0 \le \beta <
1$) is a given parameter.  There is another parameter $c$ representing
the  capacity of a switch.  At an arrival event, several packets arrive at
an input port of the queue. 
A switch can transmit $c$ of them instantly, 
and operates a buffer management algorithm for the rest of the packets, 
that is, decides whether to accept them 
(if any). 
Emek et~al.\ designed a randomized algorithm {\sc Priority},
and showed that it is $k_{\mbox{\scriptsize max}}
\sqrt{\sigma_{\mbox{\scriptsize max}}}$-competitive when $\beta = 0$ and
$B=0$, where $\sigma_{\mbox{\scriptsize max}}$ is the maximum number of
packets arriving simultaneously.  They also derived a lower bound of
$k_{\mbox{\scriptsize max}} \sqrt{\sigma_{\mbox{\scriptsize max}}} (\log
\log k / \log k)^2$ for any randomized algorithm.  If the number of
packets in any frame is exactly $k$, Mansour et~al.\ \cite{YMO11} showed
that for any $\beta$ the competitive ratio of {\sc Priority} is $8k
\sqrt{\sigma_{\mbox{\scriptsize max}} (1 - \beta) / c} $.  Moreover,
some variants of this problem have been studied \cite{MH11,YMC11}. 
%

\section{Model Description and Notation} \label{model}

%

%
\ifnum \count10 > 0
（■未修正）
本節では、$k$-OFTMの正式な定義を与える。
到着するパケットはバッファに蓄えられる。
各パケットの大きさは1であり、
そのバッファはFIFOキューで構成され、同時に高々$B$個のパケットを蓄えることが出来る。
入力はフェイズの列である。
各フェイズは、2つのサブフェイズからなる。
{\em 到着サブフェイズ}と{\em 送信サブフェイズ}からなる。 
到着サブフェイズでは、
幾つかのパケットがバッファに到着する。
アルゴリズムの役割は、任意の到着するパケット$p$に対して、
$p$を{\em 受理する}、もしくは$p$を{\em 非受理する}か決定することである。
アルゴリズムは、バッファ内に存在する任意のパケット$p'$を破棄し、
到着するパケットを受理するために、バッファに空きを作ることが出来る。
（$p'$を{\em プリエンプトする}と言う。）
バッファに空きがなければ、到着したパケットは非受理される。
受理されたパケットは、キューの最後尾に蓄えられる。
同じサブフェイズに到着して受理されたパケットは、任意の順序でキューに入れることができる。
到着サブフェイズの後に、送信フェイズが起こる。
送信サブフェイズには、バッファが空でなければ、
キューの先頭のパケットが送信される。
一般性を失わないので、各フェイズは整数時間に起こるとする。
各フレーム$f$は、$k$個のパケット$p_{1}, \ldots, p_{k}$からなる。
（$k$は2以上の整数である。）
構成する$k$個のパケットについて$\mbox{arr}(p_1) \le \cdots \le \mbox{arr}(p_k)$が成立する様な、
任意のフレーム$f = \{ p_{1}, \ldots, p_{k} \}$に対して、
パケット$p_{i}$を{\em $i$-パケット}（$i \in [1, k]$）と呼ぶ。 
任意の2つのフレーム$f_{i} = \{ p_{i,1}, \ldots, p_{i,k} \}, f_{i'} = \{ p_{i',1}, \ldots, p_{i',k} \}$
（ただし、任意の$\ell \in [1, k]$に対して、$p_{i,\ell}$と$p_{i',\ell}$は$\ell$-パケットである。）、
任意の整数$j,\, j' = 1,\ldots,k$に対して、
$\mbox{arr}(p_{i,j}) \le \mbox{arr}(p_{i',j}) \Leftrightarrow \mbox{arr}(p_{i,j'}) \le \mbox{arr}(p_{i',j'})$
が成立する。
この性質を{\em 順序遵守}と呼ぶ。
任意のアルゴリズム$ALG$、$ALG$の任意のフレーム$f = \{ p_{1}, \ldots, p_{k} \}$に対して、 
$f$を構成する全てのパケット、すなわち、$p_{1}, \ldots, p_{k}$が$ALG$に転送される場合、
$ALG$の$f$は{\em 完全である}という。
アルゴリズムの{\em 価値}は、完全なフレームの数である。 
それ故に、本問題の目的は、
完全なフレームの数の最大化である。
$V_{A}(\sigma)$ (${\mathbb E}[V_{ALG}(\sigma)]$, respectively)は、入力$\sigma$に対して決定性（確率, respectively）アルゴリズム$A$の得る価値(価値の期待値, respectively)を表す。
$OPT$を最適なオフライン・アルゴリズムとする。
任意の入力$\sigma$に対して、
$V_{A}(\sigma) \geq V_{OPT}(\sigma)/c$
(${\mathbb E}[V_{ALG}(\sigma)] \geq V_{OPT}(\sigma)/c$, respectively)
が成立するとき、
$A$の{\em 競合比}は$c$であるという。
一般性を失わないので、
$OPT$は決してプリエンプトを行わないと仮定する。
また、
$OPT$は決して不完全なフレームを構成するパケットを受理しないと仮定する。
表記を簡単にするため、
同じフレームに属する任意の$i$-パケット$p$と$j(\ne i)$-パケット$p'$に対して、
$p$の$j$-パケット
（$p'$の$i$-パケット）
と言うこととする。
また、
任意の$ALG (\in \{ MF, OPT \})$に対して、
$ALG$のバッファに到着するパケットを、
$ALG$のパケットと呼ぶ。
更に、必ずパケットは1つ以上到着する入力のみを考える。
以下の解析において用いる記法を導入する。
任意の整数時間$t$に対して、
$t$における到着サブフェイズに到着した$n$個のパケット$p_{1}, p_{2}, \ldots, p_{n}$を
順にバッファに入れるかどうか$MF$が決定するとする。
このとき、
配送フェイズにパケットが転送される瞬間を$t_{d}$で表し、
{\em delivery event time}と呼ぶ。
$p_{i} \hspace{1mm} (1 \leq i \leq n)$の受理するかどうかを判断する瞬間を$t_{p_{i}}$で表す。
更に、
$t_{p_{i}} \hspace{1mm} (1 \leq i \leq n)$を{\em arrival event time}と呼ぶ。
delivery event timeもしくは、arrival event timeのことを{\em event time}と呼ぶ。
event timeでない時刻をnon-event timeと呼ぶ。
時刻$t$に起こる全てのevent timeに対して、
$t_{p_{1}} < t_{p_{2}} < \cdots < t_{p_{n}} < t_{d}$
が成立すると定義する。
また、
任意の整数時間$t, t'(>t)$に対して、
$t$の任意のevent time $e$
$t'$の任意のevent time $e'$
に対して、
$e < e'$が成立すると定義する。
任意のevent time $e$、non-event time $d$ に対して、
$d$の直前のevent timeを$e'$とすると、
$e \le e'$ ならば $e<d$
が成立すると定義する。
また、任意のevent time $e$、non-event time $d$ に対して、
$d$の直後のevent timeを$e''$とすると、
$e'' \le e$ ならば $d<e$
が成立すると定義する。
また、
任意のevent time $e$に対して、
$e$より後であり、次のevent timeより前の任意の瞬間を$e+$、
$e$より前であり、1つ前のevent timeより後の任意の瞬間を$e-$で表す。
また、簡単のため、
任意の整数時間$t'$に対して、
$t'+ = t'_{d}+$と定義する。
また、解析のために、任意の到着するパケット$p$に対して、
$OPT$は$MF$と同時に受理するかどうかの判断を行うものとする。
これらを定義する目的は、
パケットが受理されたり、転送されたりする瞬間の前後におけるバッファ内に存在するパケットの位置を、
解析のために明確にする必要があるからである。
\fi
\ifnum \count11 > 0
\com{（■英語）}
In this section, we give a formal description of the {\em
order-respecting $k$-frame throughput maximization} problem ($k$-OFTM).
A {\em frame} $f$ consists of $k$ packets $p_{1}, \ldots, p_{k}$.  We
say that two packets $p$ and $q$ belonging to the same frame are {\em
corresponding}, or $p$ {\em corresponds to} $q$.  There is one buffer
(FIFO queue), which can store at most $B$ packets simultaneously.  An
input is a sequence of {\em phases} starting from the 0th phase.  The
$i$th phase consists of the $i$th {\em arrival subphase} followed by the
$i$th {\em delivery subphase}.  At an arrival subphase, some packets
arrive at the buffer, and the task of an algorithm is to decide for each
arriving packet $p$, whether to {\em accept} $p$ or {\em reject} $p$.
An algorithm can also discard a packet $p'$ existing in the current
buffer in order to make space (in which case we say that the algorithm
{\em preempts} $p'$).  If a packet $p$ is rejected or preempted, we say
that $p$ is {\em dropped}.  If a packet is accepted, it is stored at the
tail of the queue.  Packets accepted at the same arrival subphase can be
inserted into the queue in an arbitrary order.  At a delivery subphase,
the first packet of the queue is transmitted if the buffer is nonempty.
For a technical reason, 
we consider only the inputs in which at least one packet arrives.

If a packet $p$ arrives at the $i$th arrival subphase, 
we write $\mbox{arr}(p)=i$. 
%
%
For any frame $f = \{ p_{1}, \ldots, p_{k} \}$ 
such that $\mbox{arr}(p_1) \le \cdots \le \mbox{arr}(p_k)$, 
we call $p_{i}$ the {\em $i$-packet} of $f$. 
Consider two frames $f_{i} = \{ p_{i,1}, \ldots, p_{i,k} \}$ and
$f_{i'} = \{ p_{i',1}, \ldots, p_{i',k} \}$ such that
$\mbox{arr}(p_{i,1}) \le \cdots \le \mbox{arr}(p_{i,k})$ and
$\mbox{arr}(p_{i',1}) \le \cdots \le \mbox{arr}(p_{i',k})$. 
If for any $j$ and $j'$, $\mbox{arr}(p_{i,j}) \le \mbox{arr}(p_{i',j})$ if and only
if $\mbox{arr}(p_{i,j'}) \le \mbox{arr}(p_{i',j'})$, 
then we say that
$f_{i}$ and $f_{i'}$ are {\em order-respecting}. 
If any two frames in an input sequence $\sigma$ are order-respecting, 
we say that $\sigma$ is {\em order-respecting}. 
If all the packets constituting a frame $f$ are transmitted, 
we say that $f$ is {\em completed}, otherwise, 
$f$ is {\em incompleted}.  The goal of $k$-FTM is to maximize the number of
completed frames.  $k$-OFTM is $k$-FTM where inputs are restricted to
order-respecting sequences.

For an input $\sigma$, the {\em gain} of an algorithm $ALG$ is the
number of frames completed by $ALG$ and is denoted by $V_{ALG}(\sigma)$.
If $ALG$ is a randomized algorithm, the gain of $ALG$ is defined as an
expectation ${\mathbb E}[V_{ALG}(\sigma)]$, where the expectation is
taken over the randomness inside $ALG$. 
If $V_{ALG}(\sigma) \geq V_{OPT}(\sigma)/c$ 
(${\mathbb E}[V_{ALG}(\sigma)] \geq V_{OPT}(\sigma)/c$) for an arbitrary input $\sigma$, 
we say that $ALG$ is $c$-{\em competitive}, where $OPT$ is an optimal offline algorithm for $\sigma$. 
Without loss of generality, we can assume that $OPT$
never preempts packets and never accepts a packet of an
incompleted frame.
\fi
%

\section{Upper Bound} \label{UBk}
\ifnum \count10 > 0
（■未修正）
\fi
\ifnum \count11 > 0
\com{（■英語）}
In this section, we present our algorithm {\sc Middle-Drop and Flush}
($MF$) and analyze its competitive ratio.
\fi
%

\subsection{Algorithm} \label{algorithm_k}
\ifnum \count10 > 0
（■未修正）
本節では、我々が考案したアルゴリズムMiddle-Drop and Flush Algorithm($MF$)の定義を与える。
$MF$は1-packetに対するGreedy Algorithm($GR_{1}$)の動作をシミュレートして、
サブルーチンとして利用している。
まず、
$GR_{1}$の定義を与える。
\begin{tabularx}{155mm}{|X|}
	\hline
		{\bf Greedy Algorithm for 1-packets:}
			1-packet$p$が到着した場合、バッファに空きがあるならば、$p$を受理する。
			そうでないならば、$p$をrejectする。
			$j$-packet \hspace{1mm} $(j = 2, \ldots, k)$ が到着した場合、それをrejectする。
\\
	\hline 
\end{tabularx}
$MF$は、
$GR_{1}$が$3B$個の1-packetを受理するごとに、
$MF$は$A$個の1-packetを受理する。
（補題~\ref{LMA:ap.1}参照。）
また、任意の$j (\in [1, k])$に対して、
$MF$はバッファを各$j$-packetのために分割して、
各$j$-packetを少なくとも$A$個ずつ同時に保持できる様に設計されている。
次に、$MF$の定義のためにいくつか定義を与える。
任意のアルゴリズム$ALG (\in \{MF, OPT\})$、
任意のnon-event time $d$、$ALG$の任意の$j$-packet $p$、
$p$の属するフレーム$f$に対して、
$f$の任意のpacketが$d$において$ALG$にdropされていない場合、
$ALG$の$p$は$d$に{\em valid}であると言う。
また、
$ALG$の$f$もまた$d$に{\em valid}であると言う。
$p$の属するフレームは$t$の時点で、completeになり得るフレームである。
（■subbuffer）
更に、
任意の$j$-packet $p'$に対して、
$p'$は時刻$d$に$MF$のバッファ内に存在し、
$MF$のバッファ内の$[1, \ell_{MF}(d, p')]$の位置に$y$個の$j$-packetが存在すると仮定する。
このとき、
${\ell}_{j}(d, p') = y$
と定義する。
$MF$は内部変数として、
{\tt Counter}と{\tt Block}を用いる。
{\tt Counter}は、
$GR_{1}$が受理するpacketの数を数える為の変数である。
{\tt Block}は、各frameを1-packetの到着順に従って分類するための変数である。
任意のnon-event time $d$に対して、
$b(d)$は$d$における$MF$の内部変数{\tt Block}の値を表す。
任意の1-packet $p_{1}$と$p_{1}$のarrival event time $e$に対して、
$g(p_{1}) = b(e-)$と定義する。
また、
$p_{1}$の任意の$j$-packet $p_{j}$に対して、
$g(p_{j}) = b(e-)$と定義する。
更に、
任意のpacket $p$と$p$を含むframe $f$に対して、
$g(f) = g(p)$と定義する。
また、$g(p'')$を$f$(もしくは$p''$)の{\em 通し番号}と呼ぶ。
任意のnon-event time $d$、
任意のアルゴリズム$ALG (\in \{MF, OPT \})$、
任意の通し番号$u$に対して、
$h_{ALG, u}(d)$は、
$d$における$g(f) = u$が成立する$ALG$のvalidなフレーム$f$の数を表す。
それでは、
$MF$の定義を与える。
以下の手続きは、
arrival subphaseにおいて、
各到着したパケット毎に行われる。
ただし、
同時刻に到着した、
任意の$i$-packetと任意の$j(>i)$-packetに対して、
$i$-packetに対する手続きを先に行う。
また、
同時刻に到着した、
任意の$i(\geq 2)$-packet $p, p'(\ne p)$に対して、
通し番号の昇順に手続きを行う。
$MF$を理解するために、
Appendix~\ref{sec:ap.2}に、
$MF$の実行例を与える。
\fi
\ifnum \count11 > 0
\com{（■英語）}
We first give notation needed to describe $MF$. 
Suppose that $n$ packets $p_{1}, p_{2}, \ldots, p_{n}$ arrive at $MF$'s buffer at the $i$th arrival subphase.
For each packet, 
$MF$ decides whether to accept it or not one by one (in some order defined later). 
Let $t_{p_{j}}$ denote the time when $MF$ deals with the packet $p_{j}$, and 
let us call $t_{p_{j}}$ the {\em decision time} of $p_{j}$. 
Hence if $p_{1}, p_{2}, \ldots, p_{n}$ are processed in this order, 
we have that $t_{p_{1}} < t_{p_{2}} < \cdots < t_{p_{n}}$. 
(For convenience, in the later analysis, 
we assume that $OPT$ also deals with $p_{j}$ at the same time $t_{p_{j}}$.) 
\com{（■アルゴリズムの通し番号のところとか、L3.6で使ってる）}
Also, 
let us call the time when $MF$ transmits a packet from the head of its buffer at the $i$th delivery subphase the {\em delivery time} of the $i$th delivery subphase. 
A decision time or a delivery time is called an {\em event time}, 
and any other moment is called a {\em non-event time}. 
Note that during the non-event time, 
the configuration of the buffer is unchanged. 
For any event time $t$,
$t+$ denotes any non-event time between $t$ and the next event time.
Similarly, $t-$ denotes any non-event time between $t$ and the previous
event time.

Let $ALG$ be either $MF$ or $OPT$. 
For a non-event time $t$ and a packet $p$ of a frame $f$, 
we say that $p$ is {\em valid} for $ALG$ at $t$ 
if $ALG$ has not dropped any packet of $f$ before $t$, 
i.e., $f$ still has a chance of being completed. 
In this case 
we also say that the frame $f$ is {\em valid} for $ALG$ at $t$. 
Note that a completed frame is valid at the end of the input. 
For a $j$-packet $p$ and a non-event time $t$,
if $p$ is stored in $MF$'s buffer at $t$,
we define ${\ell}(t, p)$ as ``1+(the number of $j$-packets located in front of $p$)'',
that is, $p$ is the ${\ell}(t, p)$th $j$-packet in $MF$'s queue.
If $p$ has not yet arrived at $t$, we define ${\ell}(t, p) = \infty$. 
During the execution, 
$MF$ virtually runs the following greedy algorithm $GR_{1}$ on the same input sequence. 
Roughly speaking, 
$GR_{1}$ is greedy for only 1-packets and ignores all $j (\geq 2)$-packets. 
Formally, 
$GR_{1}$ uses a FIFO queue of the same size $B$. 
At an arrival of a packet $p$, $GR_{1}$ rejects it 
if it is a $j$-packet for $j \geq 2$. 
If $p$ is a 1-packet, $GR_{1}$ accepts it whenever there is a space in the queue. 
At a delivery subphase,
$GR_{1}$ transmits the first packet of the queue as usual.
$MF$ uses two internal variables {\tt Counter} and {\tt Block}. 
{\tt Counter} is used to count the number of packets accepted by $GR_{1}$ modulo $3B$. 
{\tt Block} takes a positive integer value; it is initially one and is increased by one each time {\tt Counter} is reset to zero. 
Define $A = \lfloor B/k \rfloor$. 
$MF$ stores at most $A$ $j$-packets for any $j$. 
For $j=1$, 
$MF$ refers to the behavior of $GR_{1}$ in the following way:
Using two variables {\tt Counter} and {\tt Block}, 
$MF$ divides 1-packets accepted by $GR_{1}$ into blocks according to their arrival order,
each with $3B$ 1-packets.
$MF$ accepts the first $A$ packets of each block and rejects the rest.
For $j \geq 2$, $MF$ ignores $j$-packets that are not valid.
When processing a valid $j$-packet $p$, 
if $MF$ already has $A$ $j$-packets in its queue, 
then $MF$ preempts the one in the ``middle'' among those $j$-packets and accepts $p$. 
For a non-event time $t$, 
let $b(t)$ denote the value of {\tt Block} at $t$. 
For a packet $p$, we define the {\em block number} $g(p)$ of $p$ as follows. 
For a 1-packet $p$, 
$g(p) = b(t-)$ where $t$ is the decision time of $p$, 
and for some $j (\geq 2)$ and a $j$-packet $p$ , 
$g(p) = g(p')$ where $p'$ is the 1-packet corresponding to $p$. 
Hence, all the packets of the same frame have the same block number. 
We also define the block number of frames in a natural way, namely, 
the block number $g(f)$ of a frame $f$ is the (unique) block number of the packets constituting $f$.  For a non-event
time $t$ and a positive integer $u$, let $h_{ALG, u}(t)$ denote the
number of frames $f$ valid for $ALG$ at $t$ such that $g(f)=u$.

Recall that at an arrival subphase, 
more than one packet may arrive at a queue. 
$MF$ processes the packets ordered non-increasingly first by their frame indices and then by block numbers. 
If both are equal, they are processed in arbitrary order. 
That is, 
$MF$ processes these packets by the following rule: 
Consider an $i$-packet $p$ and an $i'$-packet $p'$. 
If $i<i'$, $p$ is processed before $p'$ and if $i'<i$, $p'$ is processed before $p$. 
If $i=i'$, then $p$ is processed before $p'$ if $g(p) < g(p')$ and $p'$ is processed before $p$ if $g(p') < g(p)$. 
If $i=i'$ and $g(p) = g(p')$, 
the processing order is arbitrary.
The formal description of $MF$ is as follows.  
%
%
To illustrate an execution of $MF$, 
we give an example in Appendix~\ref{sec:ap.2}. 
\fi
\ifnum \count10 > 0
\noindent\vspace{-1mm}\rule{\textwidth}{0.5mm} 
\vspace{-3mm}
{\bf Middle-Drop and Flush Algorithm}\\
\rule{\textwidth}{0.1mm}
{\bf Initialize:} {\tt Counter} $:= 0$, {\tt Block} $:= 1$. \\
整数時間$t$に$j$-packet $p$がarriveしたとする。\\
{\bf Case 1: $j = 1$、すなわち、$p$が1-packetの場合: }\\
\hspace{3mm}
	{\bf Case 1.1: $GR_{1}$が$p$をrejectする場合:}\\
	\hspace{6mm}
		$MF$は$p$をrejectする。\\
\hspace{3mm}
	{\bf Case 1.2: $GR_{1}$が$p$をacceptする場合:}\\
	\hspace{6mm}
		{\tt Counter} $:=$ {\tt Counter} $+ 1$. \\
	\hspace{6mm}
		{\bf Case 1.2.1: {\tt Counter} $\leq A$の場合:}\\
		\hspace{9mm}
			$MF$は$p$をacceptする。（$MF$は必ず$p$をaccept出来る。Lemma~\ref{LMA:ap.1}参照。）\\
	\hspace{6mm}
		{\bf Case 1.2.2: $A < $ {\tt Counter} $< 3B$の場合:}\\
		\hspace{9mm}
			$MF$は$p$をrejectする。\\
	\hspace{6mm}
		{\bf Case 1.2.3: {\tt Counter} $= 3B$の場合:}\\
		\hspace{9mm}
			$MF$は$p$をrejectする。{\tt Counter} $:= 0$とし、{\tt Block} $:= {\tt Block} + 1$とする。\\
{\bf Case 2: $j \geq 2$、すなわち、$p$が1-packetでない場合:} \\
\hspace{3mm}
	{\bf Case 2.0${}^\dagger$: \com{（■g削除）}
		$g(p) < {\tt Block}$が成立し、
		かつ時刻$t_{p}-$に$g(p) = g(f)$が成立するvalidなフレーム$f$の数が高々$\lfloor A/2 \rfloor - 1$である場合
	:} \\
	\hspace{6mm}
		$MF$は$p$をrejectする。\\
\hspace{3mm}
	{\bf Case 2.1: 時刻$t_{p}-$に$p$がvalidでない場合:} \\
	\hspace{6mm}
		$MF$は$p$をrejectする。\\
\hspace{3mm}
	{\bf Case 2.2: 時刻$t_{p}-$に$p$がvalidである場合:} \\
	\hspace{6mm}
		{\bf Case 2.2.1: 時刻$t_{p}-$にバッファ内の$j$-packetの数が高々$A-1$である場合:} \\
		\hspace{9mm}
			$MF$は$p$をacceptする。\\
	\hspace{6mm}
		{\bf Case 2.2.2: そうでない場合、すなわち、
					時刻$t_{p}-$にバッファ内の$j$-packetの数が少なくとも$A$である場合:} \\
			\hspace{9mm}
				$t_{p}-$の$MF$のバッファ内の先頭から$\lfloor A/2 \rfloor + 1$番目の$j$-packet、
				すなわち、$\ell(t_{p}-, p') = \lfloor A/2 \rfloor + 1$が成立する$p'$、をpreemptし、
				$p$をacceptする。
				（$p'$を$t_{p}$における{\em causing packet}と呼ぶ。）
				また、$t_{p}-$におけるバッファ内に$p'$と同じフレームのpacketが存在するなら、
				それらを全てpreemptする。
				\\
			\hspace{9mm}
				{\bf Case 2.2.2.1: $h_{MF, g(p')}(t_{p}-) \leq \lfloor A/2 \rfloor$である場合:} \\
				\hspace{12mm}\com{（■g削除）}
					バッファ内の$g(p'') = g(p')$が成立するpacket$p''$を全てpreemptする。\\
			\hspace{9mm}
				{\bf Case 2.2.2.2: $h_{MF, g(p')}(t_{p}-) \geq \lfloor A/2 \rfloor + 1$である場合:} \\
				\hspace{12mm}
					何もしない。\\
\rule{\textwidth}{0.1mm}
\fi
\ifnum \count11 > 0
\com{\noindent（■英語）\\}
\noindent\vspace{-1mm}\rule{\textwidth}{0.5mm} 
\vspace{-3mm}
{\bf Middle-Drop and Flush}\\
\rule{\textwidth}{0.1mm}
{\bf Initialize:} {\tt Counter} $:= 0$, {\tt Block} $:= 1$. \\
Let $p$ be a $j$-packet to be processed. \\
{\bf Case 1: $\boldsymbol{j = 1}$:} \\
\hspace*{3mm}
	{\bf Case 1.1:} If $GR_{1}$ rejects $p$, reject $p$. \\
\hspace*{3mm}
	{\bf Case 1.2:} If $GR_{1}$ accepts $p$, 
					set {\tt Counter} $:=$ {\tt Counter} $+ 1$ and do the following. \\
	\hspace*{6mm}
		{\bf Case 1.2.1:} If ${\tt Counter} \leq A$, accept $p$. 
		(We prove in Lemma~\ref{LMA:ap.1}(c) that $MF$'s buffer has a\\
	\hspace*{12mm}  space whenever ${\tt Counter} \leq A$.)\\ 
	\hspace*{6mm}
		{\bf Case 1.2.2:} If $A < {\tt Counter} < 3B$, reject $p$. \\
	\hspace*{6mm}
		{\bf Case 1.2.3:} If ${\tt Counter} = 3B$, 
							reject $p$ and set ${\tt Counter} := 0$ and ${\tt Block} := {\tt Block} + 1$. \\ 
{\bf Case 2: $\boldsymbol{j \geq 2}$:} \\
\hspace*{3mm}
	{\bf Case 2.1:} If $p$ is not valid for $MF$ at $t_{p}-$, reject $p$. \\
\hspace*{3mm}
	{\bf Case 2.2:} If $p$ is valid for $MF$ at $t_{p}-$, do the following.\\
	\hspace*{6mm}
		{\bf Case 2.2.1:} If the number of $j$-packets in $MF$'s buffer at $t_{p}-$ is at most $A-1$, accept $p$.\\
	\hspace*{6mm}
		{\bf Case 2.2.2:} If the number of $j$-packets in $MF$'s buffer at $t_{p}-$ is at least $A$, then preempt the\\
			\hspace*{12mm}  $j$-packet $p'$ such that ${\ell}(t_{p}-, p') =
							\lfloor A/2 \rfloor + 1$, and accept $p$.  Preempt all the packets \\
			\hspace*{12mm}  corresponding to $p'$ (if any).\\
		\hspace*{9mm}
			{\bf Case 2.2.2.1:} If $h_{MF, g(p')}(t_{p}-) \leq \lfloor A/2 \rfloor$,
								preempt all the packets $p''$ in $MF$'s buffer such\\
			\hspace*{15mm}  that $g(p'')=g(p')$.  (Call this operation ``flush''.)\\
		\hspace*{9mm}
			{\bf Case 2.2.2.2:} If $h_{MF, g(p')}(t_{p}-) \geq \lfloor A/2 \rfloor + 1$, 
								do nothing. \\
\rule{\textwidth}{0.1mm}
\fi
%

\subsection{Overview of the Analysis} \label{overview_k}
\ifnum \count10 > 0
（■）
いくつか定義を与える。
任意のnon-event time $d$に対して、
$c(d)$は$d$における$MF$の内部変数{\tt Counter}の値を表す。
入力が終了した後の任意の時刻を$\tau$とする。
$c(\tau) = 0$ならば、$M = b(\tau) - 1$と定義し、
$c(\tau) > 0$ならば、$M = b(\tau)$と定義する。
\com{（■g削除）}
入力が終了した時点において、
$g(f) = u$が成立し、
$MF$がcompleteしたframe $f$の数が少なくとも$\lfloor A/2 \rfloor$であれば、
$u$は{\em 良い}通し番号であるという。
\com{（■g削除）}
ここで、
${G} = \{ u \mid $u$ \mbox{ は良い通し番号} \} \cup \{ M \}$
と定義する。
${G}$に含まれる通し番号の数を$m$個とする。
$a_{j} \hspace{1mm} (j \in [1, m])$
は、通し番号を表す。
ただし、$a_{1} < a_{2} < \ldots < a_{m} = M$が成立する。
また、
$M \geq 2$の場合、
$MF$は通し番号が1のframeを少なくとも$\lfloor A/2 \rfloor$はcompleteするので、
$a_1 = 1$が成立する。
このとき、
$V_{OPT}(\sigma) = \sum_{i = 1}^{M} h_{OPT, i}(\tau)$
かつ
$V_{MF}(\sigma) = \sum_{i = 1}^{M} h_{MF, i}(\tau)
		\geq \sum_{i = 1}^{m} h_{MF, a_{i}}(\tau)$
が成立する。
$a_{i}$の定義より、
任意の$i \in [1, m-1]$に対して、
$h_{MF, a_{i}}(\tau) \geq \lfloor A/2 \rfloor$
が成立する。
補題~\ref{LMA:k.1}より、
$c(\tau) = 0$
もしくは
$c(\tau) \in [\lfloor A/2 \rfloor, 3B]$ならば、
$h_{MF, M}(\tau) \geq \lfloor A/2 \rfloor$
が成立する。
$c(\tau) \in [1, \lfloor A/2 \rfloor-1]$ならば、
$h_{MF, M}(\tau) = c(\tau)$が成立し、
$M \geq 2$ならば、
$h_{MF, M}(\tau) + B - 1 \geq h_{OPT, M}(\tau)$
が成立し、
$M = 1$ならば、
$h_{MF, M}(\tau) \geq h_{OPT, M}(\tau)$
が成立する。
また、
補題~\ref{LMA:k.2}より、
$\sum_{j = a_{1}}^{a_{2} - 1} h_{OPT, j}(\tau) \leq 4B + A - 3$
と
任意の$i \in [2, m-1]$に対して、
$\sum_{j = a_{i}}^{a_{i+1} - 1} h_{OPT, j}(\tau) \leq 5B + A - 4$
が成立する。
更に、
$h_{OPT, M}(\tau) \leq 4B-1$
が成立する。
以上の式より、
$M = 1$ならば、
$\frac{V_{OPT}(\sigma)}{V_{MF}(\sigma)} 
	\leq \max \{ \frac{ c(\tau) }{ c(\tau) }, \frac{  4B-1 }{ \lfloor A/2 \rfloor } \}
	= \frac{ 4B+A-3 }{ \lfloor A/2 \rfloor }
$
が成立する。
また、
$M \geq 2$ならば、
$V_{OPT}(\sigma) = \sum_{u = 1}^{M} h_{OPT, u}(\tau) 
		 = \sum_{i = 1}^{m-1} \sum_{j = a_{i}}^{a_{i+1} - 1} h_{OPT, j}(\tau) + h_{OPT, a_{m}}(\tau)
		 \leq (m-1)(5B + A - 4) - B + 1 + h_{OPT, M}(\tau)
		 \leq (m-1)(5B + A - 4) + h_{MF, M}(\tau)
		 $, 
かつ
$V_{MF}(\sigma) = \sum_{i = 1}^{m} h_{MF, a_{i}}(\tau)
		\geq (m-1) \lfloor A/2 \rfloor + h_{MF, M}(\tau)$
が成立する。
よって、
$\frac{V_{OPT}(\sigma)}{V_{MF}(\sigma)} 
	\leq \frac{ (m - 1)(5B + A - 4) + h_{MF, M}(\tau) }{ (m-1) \lfloor A/2 \rfloor + h_{MF, M}(\tau) }
	\leq \frac{ (m - 1)(5B + A - 4) + 1 }{ (m-1) \lfloor A/2 \rfloor + 1}
$
が成立する。
以上より、
次の定理が得られる。
\fi
\ifnum \count11 > 0
\com{（■英語）}
Let $\tau$ be any fixed time after the end of the input, and 
let $c$ denote the value of {\tt Counter} at $\tau$. 
Also, we define $M = b(\tau) - 1$ if $c = 0$ and $M = b(\tau)$ otherwise. 
Then, 
note that for any frame $f$, 
$1\leq g(f) \leq M$. 
Define the set $G$ of integers as 
$G = \{ M \} \cup \{ i \mid $ there are at least $\lfloor A/2 \rfloor$ frames $f$ completed by $MF$ such that $g(f)=i \}$ 
and let $m = |G|$. 
For each $j \in [1, m]$, 
let $a_{j}$ be the $j$th smallest integer in $G$. 
All the block numbers in $G$ are called {\em good}, 
which almost corresponds to good blocks in Sec.~\ref{sec:idea}. 
The other block numbers are called {\em bad}. 
Note that $a_{j}$ denotes the $j$th good block number. 
Note that $a_{m} = M$. 
Also, we prove $a_{1} = 1$ in Lemma~\ref{LMA:k.035}. 
Now since at the end of the input any valid frame is completed, 
we have $V_{OPT}(\sigma) = \sum_{i = 1}^{M} h_{OPT, i}(\tau)$ 
and 
$V_{MF}(\sigma) = \sum_{i = 1}^{M} h_{MF, i}(\tau) \geq \sum_{i = 1}^{m} h_{MF, a_{i}}(\tau)$. 

By the definition of $G$, 

\hspace{6mm}
$h_{MF, a_{i}}(\tau) \geq \lfloor A/2 \rfloor$ for any $i \in [1, m-1]$.\hspace{3mm} (1)

\noindent 
Lemma~\ref{LMA:k.1} focuses on the $m$th good block number, i.e., $M$. 
Since it has some exceptional properties, 
we discuss the number of completed frames with block number $M$ independently of the other good block numbers as follows: 

\hspace{6mm} 
If $c \in \{0\} \cup [\lfloor A/2 \rfloor, 3B-1]$,
then $h_{MF, M}(\tau) \geq \lfloor A/2 \rfloor$.\hspace{3mm} (2) 

\hspace{6mm} 
If $c \in [1, \lfloor A/2 \rfloor-1]$ and $M \geq 2$, 
then $h_{MF, M}(\tau) + B - 1 \geq h_{OPT, M}(\tau)$.\hspace{3mm} (3) 

\hspace{6mm} 
If $c \in [1, \lfloor A/2 \rfloor-1]$ and $M = 1$, 
then $h_{MF, M}(\tau) \geq h_{OPT, M}(\tau)$.\hspace{3mm} (4)

\noindent 
Also, 
in Lemma~\ref{LMA:k.2}, 
we evaluate the number of $OPT$'s completed frames from a viewpoint of good block numbers: 

\hspace{6mm} 
$h_{OPT, M}(\tau) \leq 4B-1$,\hspace{3mm} (5) 

\hspace{6mm} 
$\sum_{j = a_{1}}^{a_{2} - 1} h_{OPT, j}(\tau) \leq 4B + A - 3$,\hspace{3mm} (6) 

and 

\hspace{6mm} 
$\sum_{j = a_{i}}^{a_{i+1} - 1} h_{OPT, j}(\tau) \leq 5B + A - 4$ 
for any $i \in [2, m-1]$.\hspace{3mm} (7) 
Using the above inequalities, 
we can get the competitive ratio of $MF$ according to the values of $M$ and $c$. 
First, 
note that if $M = 1$ then $c \geq 1$ because at least one packet arrives. 
Thus $V_{OPT}(\sigma) > 0$.
Now if $M=1$ and $c \in [1, \lfloor A/2 \rfloor-1]$, 
then $\frac{V_{OPT}(\sigma)}{V_{MF}(\sigma)} =
\frac{h_{OPT, 1}(\tau)}{h_{MF, 1}(\tau)} \leq 1$ by (4). 
If $M=1$ and $c \in [\lfloor A/2 \rfloor, 3B-1]$, 
then $\frac{V_{OPT}(\sigma)}{V_{MF}(\sigma)} = \frac{h_{OPT, 1}(\tau)}{h_{MF,
1}(\tau)} \leq \frac{4B-1}{\lfloor A/2 \rfloor} < \frac{5B+A-4}{\lfloor A/2 \rfloor}$ by (2) and (5).
If $M \ge 2$ and $c \in \{ 0 \} \cup [\lfloor A/2 \rfloor, 3B-1]$,
\begin{eqnarray*}
	V_{OPT}(\sigma) 
		& = & \sum_{i = 1}^{M} h_{OPT, i}(\tau) 
		  =  \sum_{i = 1}^{m-1} \sum_{j = a_{i}}^{a_{i+1} - 1} h_{OPT, j}(\tau) + h_{OPT, a_{m}}(\tau) \\ 
		& \leq & (m-1)(5B + A - 4) - B + 1 + (4B - 1) 
		  <  m(5B + A - 4)
\end{eqnarray*}
by (5), (6), and (7) (note that $a_{1} = 1$ and $a_{m} = M$). 
Also,
$V_{MF}(\sigma) \geq \sum_{i = 1}^{m} h_{MF, a_{i}}(\tau) \geq m\lfloor A/2 \rfloor$ by (1) and (2). 
Therefore,
$\frac{V_{OPT}(\sigma)}{V_{MF}(\sigma)} < \frac{5B + A - 4}{ \lfloor A/2 \rfloor }$. 
Finally, 
if $M \ge 2$ and $c \in [1, \lfloor A/2 \rfloor-1]$,
\begin{eqnarray*}
	V_{OPT}(\sigma) 
		& = & \sum_{i=1}^{M} h_{OPT, i}(\tau) 
		 =  \sum_{i=1}^{m-1} \sum_{j = a_{i}}^{a_{i+1}-1} h_{OPT, j}(\tau) + h_{OPT, a_{m}}(\tau)  \\
		& \leq & (m-1)(5B+A-4)-B+1+h_{OPT, M}(\tau) 
		 \leq  (m-1)(5B+A-4) + h_{MF, M}(\tau)
\end{eqnarray*}
by (3), (6) and (7). 
Also, 
$V_{MF}(\sigma) = \sum_{i = 1}^{m} h_{MF, a_{i}}(\tau) \geq (m-1) \lfloor A/2 \rfloor + h_{MF, M}(\tau)$ by (1).
Therefore,
\begin{equation*}
	\frac{V_{OPT}(\sigma)}{V_{MF}(\sigma)} 
		 \leq  \frac{ (m-1)(5B+A-4) + h_{MF, M}(\tau) }{ (m-1) \lfloor A/2 \rfloor + h_{MF, M}(\tau) } 
		 <  \frac{5B+A-4}{\lfloor A/2 \rfloor}.
\end{equation*}
We have proved that in all the cases
$\frac{V_{OPT}(\sigma)}{V_{MF}(\sigma)} < \frac{5B+A-4}{\lfloor A/2 \rfloor}$. 
By noting that $\frac{5B+A-4}{\lfloor A/2 \rfloor} =
\frac{5B + \lfloor B/k \rfloor - 4}{\lfloor B/2k \rfloor}$, we have the
following theorem:
\fi
\begin{THM}\label{thm:1}
	\ifnum \count10 > 0
	$B / k \geq 2$のとき、
	$MF$の競合比は高々$\frac{5B + \lfloor B/k \rfloor - 4}{ \lfloor B/2k \rfloor }$である。
	\fi
	\ifnum \count11 > 0
	\com{（■英語）}
	When $B / k \geq 2$, 
	the competitive ratio of $MF$ is at most $\frac{5B + \lfloor B/k \rfloor - 4}{ \lfloor B/2k \rfloor }$. 
	\fi
\end{THM}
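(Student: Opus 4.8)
The plan is to derive the competitive ratio by combining the structural inequalities (1)--(7), which the overview section has already set up, via a case analysis on the two key quantities: $M$ (the largest block number) and $c$ (the final value of {\tt Counter}). The entire argument reduces to showing that in every case the ratio $V_{OPT}(\sigma)/V_{MF}(\sigma)$ is bounded by $\frac{5B+A-4}{\lfloor A/2\rfloor}$, where $A=\lfloor B/k\rfloor$; the final cosmetic step is to observe that $\lfloor\lfloor B/k\rfloor/2\rfloor = \lfloor B/2k\rfloor$ when $B/k\geq 2$, which rewrites the bound into the form stated in the theorem.

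**First I would** dispose of the base cases where $M=1$. Here $c\geq 1$ since at least one packet arrives, so $V_{OPT}(\sigma)>0$ and the ratio is well-defined. If $c\in[1,\lfloor A/2\rfloor-1]$, inequality (4) gives $h_{OPT,1}(\tau)\leq h_{MF,1}(\tau)$, so the ratio is at most $1$. If $c\in[\lfloor A/2\rfloor,3B-1]$, I combine (2), which guarantees $h_{MF,1}(\tau)\geq\lfloor A/2\rfloor$, with (5), which bounds $h_{OPT,1}(\tau)\leq 4B-1$, to get a ratio of at most $\frac{4B-1}{\lfloor A/2\rfloor}$, which is strictly below the target.

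**Next I would** handle the main regime $M\geq 2$, splitting on whether $c\in\{0\}\cup[\lfloor A/2\rfloor,3B-1]$ or $c\in[1,\lfloor A/2\rfloor-1]$. In both subcases I telescope $V_{OPT}(\sigma)=\sum_{i=1}^{M}h_{OPT,i}(\tau)$ by grouping block numbers between consecutive good block numbers $a_i,\dots,a_{i+1}-1$, then bound each group: the first group by (6), the intermediate groups by (7), and the final term $h_{OPT,a_m}=h_{OPT,M}$ by either (5) or (3) depending on $c$. This yields $V_{OPT}(\sigma)\leq(m-1)(5B+A-4)-B+1+h_{OPT,M}(\tau)$. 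On the algorithm's side, I lower-bound $V_{MF}(\sigma)\geq\sum_{i=1}^{m}h_{MF,a_i}(\tau)$ using (1) for the first $m-1$ good blocks and either (2) or the retained $h_{MF,M}(\tau)$ term for the $m$th. Dividing and simplifying (using $a_1=1$, $a_m=M$) gives the bound; in the $c\in[1,\lfloor A/2\rfloor-1]$ subcase one observes the common additive term $h_{MF,M}(\tau)$ in numerator and denominator only improves the ratio, so it stays below $\frac{5B+A-4}{\lfloor A/2\rfloor}$.

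**The hard part will be** not this final assembly — which is essentially the arithmetic already displayed in the overview — but rather establishing the supporting inequalities (1)--(7), particularly Lemma~\ref{LMA:k.2} which underlies (5)--(7). The crux there is the combinatorial claim sketched in Section~\ref{sec:idea}: that between two consecutive good block numbers, $OPT$ can complete only $O(B)$ frames because the bad blocks in between correspond to a burst of same-indexed packets arriving within a short time span (bracketed by two valid packets $p_1,p_2$ that were simultaneously in $MF$'s buffer), so no algorithm — including $OPT$ — can accept more than a buffer's worth of them. Pinning down the two witness blocks $i_1<i<i_2$ guaranteed by the middle-drop policy, and verifying that the interval property (ii) forces the burst bound, is where the real work lies; given those lemmas, Theorem~\ref{thm:1} follows by the case analysis above.
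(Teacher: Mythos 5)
Your proposal is correct and takes essentially the same route as the paper's own proof: the identical case analysis on $M$ and $c$, combining inequalities (1)--(7) with the same telescoping of $V_{OPT}$ over consecutive good block numbers, the same mediant-style handling of the common term $h_{MF,M}(\tau)$, and the same final rewriting $\lfloor \lfloor B/k \rfloor /2 \rfloor = \lfloor B/2k \rfloor$. You also correctly locate the real work in establishing (1)--(7), i.e., Lemmas~\ref{LMA:k.035}, \ref{LMA:k.1}, and \ref{LMA:k.2} (the latter via the burst argument of Lemmas~\ref{LMA:k.21} and \ref{LMA:k.22}), which is precisely where the paper spends its effort.
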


The rest of Sec.~\ref{UBk} is devoted to the proofs of Lemmas~\ref{LMA:k.035}, \ref{LMA:k.1}, and \ref{LMA:k.2}.

\subsection{Analysis of $MF$} \label{sec:analysis}
\ifnum \count10 > 0
まず、
$MF$の実行可能性を示そう。
Specifically, we show the executability of Case 1.2.1. 
それと同時に
以下の補題では、$GR_{1}$がacceptする1-packetと
$MF$のそれとの関係を明らかにする。
定義を与える。
$ALG$がpacket $p$を$i$番目のdelivery subphaseにtransmitするとき、
$\mbox{del}_{ALG}(p) = i$と表記する。 
\fi
\ifnum \count11 > 0
\com{（■英語）}
First, 
we guarantee the feasibility of $MF$ in the following lemma. 
Specifically, we show the executability of Case 1.2.1. 
At the same time, 
we also show the connection between 1-packets  accepted by $GR_{1}$ and those by $MF$. 
To prove the following lemma, 
we give the notation. 
If an algorithm $ALG$ transmits a packet $p$ at the $i$th delivery subphase, 
we write $\mbox{del}_{ALG}(p) = i$. 
\fi
%
\ifnum \count14 > 0
%
The proofs of some of the following lemmas are included in Appendix~\ref{sec:ap.3}.
\fi
%

%
\begin{LMA}\label{LMA:ap.1}
	\ifnum \count10 > 0
	任意の正の整数$z$に対して、
	$GR_{1}$が$z$個の1-packetを受理するとし、
	$GR_{1}$が$i$番目にacceptするpacketを$p_{i} \hspace{1mm} (i \in [1, z])$で表す。
	このとき、以下が成立する。
	(a) 
	$MF$が$p_{j}$をacceptする様な
	任意の$j (\in [1, z - 2B + 1])$に対して、
	$\mbox{del}_{MF}(p_{j}) < \mbox{arr}(p_{j + 2B - 1})$
	が成立する。
	(b) 
	任意の$u (\geq 0)$に対して、
	$MF$はパケット集合$\{ p_{3Bu + 1}, \ldots, p_{3Bu + A} \}$
	\com{（■$\{ p_{\gamma u + 1}, \ldots, p_{\gamma u + A} \}$）}
	のパケットを受理することが出来る。
	また、
	それらの通し番号は、$u+1$である。
	(c) 
	1-packetの受理を判断する直前に、
	$0 \leq {\tt Counter} \leq A-1$が成立している場合、
	$MF$はその1-packetを受理するための、バッファの空きが存在する。
	\fi
	\ifnum \count11 > 0
	\com{（■英語）}
	Suppose that $GR_{1}$ accepts $z (\geq 2B)$ 1-packets. 
	Let $p_{i} \hspace{1mm} (i \in [1, z])$ denote the $i$th 1-packet accepted by $GR_{1}$. 
	Then, the following holds. 
	(a) 
	For any $j (\in [1, z - 2B + 1])$ such that $MF$ accepts $p_{j}$, 
	$\mbox{del}_{MF}(p_{j}) < \mbox{arr}(p_{j + 2B - 1})$. 
	(b) 
	For any $u (\geq 0)$, $MF$ accepts all the packets in $\{ p_{3Bu + 1}, \ldots, p_{3Bu + A} \}$, 
	and their block number is $u+1$. 
	(c) 
	If $0 \leq {\tt Counter} \leq A-1$ just before the decision time of a 1-packet $p$, 
	$MF$'s buffer has a space to accept $p$. 
	\fi
\end{LMA}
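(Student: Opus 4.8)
The plan is to prove the three parts in a dependency order where each feeds the next, and in fact to prove them simultaneously by induction on the index of the accepted 1-packet. The cleanest route is to first establish (a), then derive (c) from (a), and finally obtain (b) as a bookkeeping consequence of (c) together with the definition of the \texttt{Counter}/\texttt{Block} mechanism. The key structural fact I would exploit throughout is that $GR_1$ and $MF$ share the same buffer capacity $B$, that $GR_1$ accepts a 1-packet exactly when its own queue has space, and that $MF$ accepts a 1-packet only in Case 1.2.1, i.e.\ a strict subset of the 1-packets $GR_1$ accepts (and at most $A$ per block). So at any time the number of 1-packets $MF$ is holding is bounded by the number $GR_1$ is holding.

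\textbf{Proving (a).}
First I would pin down what $\mbox{del}_{MF}(p_j)$ means in terms of delivery subphases. Since $GR_1$ uses a FIFO queue of size $B$, between the arrival of $p_j$ and the arrival of $p_{j+2B-1}$ there are $2B-1$ further 1-packet acceptances by $GR_1$; because the queue holds at most $B$ packets and is FIFO, $p_j$ must have been transmitted by $GR_1$ before $p_{j+2B-1}$ arrives (otherwise $GR_1$ would be holding more than $B$ packets, using that each delivery subphase removes at most one packet while the $2B-1$ intervening acceptances would overflow). This is the standard FIFO-overflow counting argument. The point is then to transfer this bound from $GR_1$ to $MF$: I would argue that the number of delivery subphases that have occurred is a global quantity shared by both algorithms, so if $GR_1$ has delivered $p_j$ by the arrival time of $p_{j+2B-1}$, then at least as many delivery subphases have elapsed as $MF$ needs to flush $p_j$ out of its own (no larger) FIFO queue. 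The main care is that $MF$ holds a \emph{subset} of the 1-packets in its buffer alongside $j$-packets for $j\ge 2$, so I must check that $MF$'s FIFO ordering does not let $p_j$ linger longer than $GR_1$'s copy; here the FIFO discipline plus the fact that $MF$'s buffer is also of size $B$ and also advances one packet per delivery subphase is what closes the gap.

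\textbf{Proving (c) and (b).}
For (c), suppose $0 \le {\tt Counter} \le A-1$ just before the decision time of a 1-packet $p$. I want to show $MF$'s buffer is not full. The idea is that the packets currently in $MF$'s buffer are of two kinds: 1-packets and $(j\ge2)$-packets, and I bound each. The number of $j$-packets for each fixed $j$ is at most $A$ by the invariant maintained in Case 2.2 (the middle-drop policy never lets a $j$-subbuffer exceed $A$), giving at most $(k-1)A$ packets with $j\ge 2$; the number of 1-packets currently held is controlled via (a), since any 1-packet accepted $2B-1$ acceptances ago has already been delivered, so $MF$ holds at most $O(B)$ 1-packets. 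Summing and using $A=\lfloor B/k\rfloor$ so that $kA\le B$ should yield that the total is strictly below $B$, leaving room for $p$. This summation is the delicate quantitative step and is where I expect the main obstacle: the counting must be tight enough that $(k-1)A$ plus the live 1-packet count stays under $B$, and getting the off-by-constant terms right (the roles of $2B-1$ in (a) and the block size $3B$) is exactly what makes the bound work out. Finally, (b) follows by induction on $u$: given (c), every packet in $\{p_{3Bu+1},\dots,p_{3Bu+A}\}$ is accepted because \texttt{Counter} runs through $1,\dots,A$ within that block (Case 1.2.1 applies for each), and the block-number assignment $g(\cdot)=u+1$ is immediate from the fact that \texttt{Block} is incremented precisely when \texttt{Counter} hits $3B$, so these $A$ packets all fall in the $(u+1)$st block. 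I would present (b) as the clean consequence that packages the \texttt{Counter}/\texttt{Block} arithmetic with the feasibility guaranteed by (c).
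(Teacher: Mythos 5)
Your skeleton (FIFO counting between $GR_1$ and $MF$, then block-wise bookkeeping) is close to the paper's, but two steps fail as written. In (a), your transfer from $GR_1$ to $MF$ rests on the claim that $MF$'s copy of $p_j$ does not linger longer than $GR_1$'s copy. That claim is false: $MF$'s buffer may hold up to $(k-1)A$ packets of index $\geq 2$, all of which $GR_1$ rejects, so at the common decision time $t_{p_j}$ the packet $p_j$ can sit behind far more packets in $MF$'s FIFO queue than in $GR_1$'s; since both dequeue one packet per delivery subphase, $\mbox{del}_{MF}(p_j)$ can strictly exceed $\mbox{del}_{GR_1}(p_j)$. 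Moreover, knowing merely that $GR_1$ delivered $p_j$ before $\mbox{arr}(p_{j+2B-1})$ certifies only that at least one delivery subphase elapsed, whereas $MF$ may need up to $B$ of them, so the inference ``at least as many delivery subphases have elapsed as $MF$ needs'' is a non sequitur. The paper argues in the opposite direction and never compares the two delivery times: from $t_{p_j}+$, $MF$ transmits $p_j$ after at most $B-1$ further delivery subphases (FIFO, buffer size $B$), and in that window $GR_1$ can accept at most $(B-1)+(B-1)=2B-2$ new 1-packets ($B-1$ vacancies at $t_{p_j}+$, since $GR_1$ just accepted $p_j$, plus at most $B-1$ deliveries), so $p_{j+2B-1}$ cannot yet have arrived. (One could also salvage your direction by a different count — the $2B-1$ acceptances $p_j,\ldots,p_{j+2B-2}$ force at least $B$ delivery subphases before $t_{p_{j+2B-1}}$, which covers the at most $B$ that $MF$ needs — but that is not the argument you gave.)

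Your route to (c) is both quantitatively too weak and circular. Bounding the live 1-packets by ``$O(B)$'' gives a total occupancy of $(k-1)A+O(B)$, which does not stay below $B$; no choice of constants rescues this. What is actually true is that at a decision time with ${\tt Counter}\leq A-1$, $MF$ holds at most ${\tt Counter}\leq A-1$ 1-packets, because every 1-packet accepted in an earlier block is already gone: by (a), $\mbox{del}_{MF}(p_{3B(u-1)+A}) < \mbox{arr}(p_{3B(u-1)+A+2B-1}) \leq \mbox{arr}(p_{3Bu+1})$ since $A\leq B$. But to know that the 1-packets $MF$ accepted in block $u$ are exactly $p_{3B(u-1)+1},\ldots,p_{3B(u-1)+A}$, you need (b) for earlier blocks — so (c) cannot precede (b) as your order (a)$\rightarrow$(c)$\rightarrow$(b) demands. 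The paper instead proves (b) by induction on $u$ directly from (a): at $t_{p_{3Bu+1}}-$ no 1-packets remain in $MF$'s buffer, and while accepting the next $A$ packets the buffer holds at most $A-1$ 1-packets plus at most $A$ $j$-packets for each $j\geq 2$, for a total of at most $kA-1\leq B-1$; statement (c) is then read off this induction. Your opening remark about proving the parts by a simultaneous induction was the right instinct; the linear dependency you then committed to is exactly what breaks.
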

%
%
\begin{proof}
	\ifnum \count10 > 0
	(a) 
	補題の条件をみたす1-packet $p_{j}$と
	non-event time $t_{p_{j}}+$を考えよ。
	$t_{p_{j}}+$は、
	$GR_{1}$と$MF$が$p_{j}$をacceptする瞬間である
	ことに注意。
	$MF$はその定義より、
	acceptした1-packetを必ずtransmitする。
	また、
	バッファのサイズは$B$であり、
	各サブフェイズに転送可能なpacketは1つだけであるので、
	$MF$は$B$フェイズ内に$p_{j}$を転送する。
	すなわち、
	$t_{p_{j}}+$から$p_{j}$をtransmitする前の間の、
	delivery subphaseの数は、高々$B-1$である。
	結果として、
	$GR_{1}$は、この期間に、
	高々$B-1$個のpacketsを転送できる。
	一方、
	$GR_{1}$も$p_{j}$をacceptするので、
	$t_{p_{j}}+$における$GR_{1}$のバッファ内のpacket数は少なくとも1個である。
	すなわち、
	$t_{p_{j}}+$における$GR_{1}$のバッファ内の空きは、
	高々$B-1$である。
	以上より、
	$t_{p_{j}+}$から$MF$が$p_{j}$を転送するnon-event timeの間に、
	$GR_{1}$がaccept可能なpacketの数は、
	高々$(B-1)+(B-1)=2B-2$である。
	つまり、
	$GR_{1}$が$p_{j + 2B - 1}$をバッファに入れる前に、
	$MF$は$p_{j}$をtransmitする。
	よって、
	$\mbox{del}_{MF}(p_{j}) < \mbox{arr}(p_{j + 2B - 1})$
	が成立する。
	(b) 
	まず、
	$MF$は各$x (\in [1, k])$-packetを$A$個までaccept可能であることに注意せよ。
	Cases 1.2.1, 1.2.2, 1.2.3より、
	$MF$はpacket $p_{1}, p_{2}, \cdots, p_{A}$をacceptし、
	$p_{A+1}, p_{A+2}, \cdots, p_{3B}$をrejectする。
	この間、
	${\tt Block}$は1であり、
	$MF$が$p_{3B}$をrejectした直後、 
	${\tt Block}$は$1$になり、
	${\tt Counter}$は$0$となる. 
	(a)の証明より、
	$\mbox{del}_{MF}(p_{A}) < \mbox{arr}(p_{A + 2B - 1})$
	が成立するので、
	$\mbox{del}_{MF}(p_{A}) < \mbox{arr}(p_{3B+1})$
	が成立する。
	すなわち、
	$t_{p_{3B+1}}-$の時点で$MF$のバッファ内には1-packetは1つも存在しない。
	ゆえに、
	$MF$は、パケット$p_{3B+1}, p_{3B+2}, \cdots, p_{3B+A}$を受理する。
	よって、帰納的に、
	(b)を示すことが出来る。
	(c)
	各$u = 0, 1, \ldots$に対して、
	$p_{3Bu + 1}, \ldots, p_{3Bu + A}$のdecision timeの直前に、
	$0 \leq {\tt Counter} \leq A-1$が成立している。
	よって、
	(b)の証明より、(c)が成立する。
	\fi
	\ifnum \count11 > 0
	\com{（■英語）}
	(a) 
	Consider a 1-packet $p_{j}$ satisfying the condition of the lemma, 
	and consider the non-event time $t_{p_{j}}+$, 
	i.e., the moment just after $GR_{1}$ and $MF$ accept $p_{j}$. 
	By definition, 
	$MF$ certainly transmits any 1-packet inserted into its buffer. 
	In addition, 
	$MF$ will transmit $p_{j}$ within $B$ phases, 
	since the buffer size is $B$, and 
	only one packet can be transmitted in one phase. 
	That is, 
	the number of delivery subphases between $t_{p_{j}}+$ and the moment before $MF$ transmits $p_{j}$ 
	is at most $B-1$, 
	which means that 
	$GR_{1}$ can also transmit at most $B-1$ packets during this period. 
	On the other hand, 
	$GR_{1}$ accepts $p_{j}$ as well, and 
	there exists at least one packet in $GR_{1}$'s buffer at $t_{p_{j}}+$. 
 	Hence there are at most $B-1$ vacancy in $GR_{1}$'s buffer at this moment. 
	Therefore, 
	the number of packets $GR_{1}$ can accept 
	between $t_{p_{j}}+$ and the moment before $MF$ transmits $p_{j}$ 
	is at most $(B-1)+(B-1)=2B-2$. 
	In other words,
 	$MF$ transmits $p_{j}$ before $GR_{1}$ accepts $p_{j + 2B - 1}$. 
	This proves $\mbox{del}_{MF}(p_{j}) < \mbox{arr}(p_{j + 2B - 1})$.
	(b) 
	First, recall that $MF$ can always store up to $A$ $x$-packets for $x \in [1, k]$. 
	Due to Cases 1.2.1, 1.2.2 and 1.2.3, 
	$MF$ accepts packets $p_{1}, p_{2}, \cdots, p_{A}$, 
	and rejects $p_{A+1}, p_{A+2}, \cdots, p_{3B}$. 
	During this period, 
	${\tt Block}$ stays $1$. 
	Just after $MF$ rejects $p_{3B}$, 
	${\tt Block}$ is incremented to $2$, 
	and ${\tt Counter}$ is reset to $0$. 
	Since $\mbox{del}_{MF}(p_{A}) < \mbox{arr}(p_{A + 2B - 1})$ by the proof of the part (a) of this lemma, 
	$\mbox{del}_{MF}(p_{A}) < \mbox{arr}(p_{3B+1})$, 
	which means that there exists no 1-packet in $MF$'s queue at the non-event time $t_{p_{3B+1}}-$.
	Hence, 
	$MF$ starts accepting $p_{3B+1}, p_{3B+2}, \cdots, p_{3B+A}$. 
	By continuing this argument, we can prove part (b).
	(c) 
	$0 \leq {\tt Counter} \leq A-1$ holds 
	just before the decision times of $p_{3Bu + 1}, \ldots, p_{3Bu + A}$ 
	for each $u = 0, 1, \ldots$. 
	Thus,
	(c) is immediate from the proof of part (b). 
	\fi
\end{proof}
%
%

%
\ifnum \count10 > 0
さて、
次に$MF$が敢えて定期的に1-packetを受理しようとする
理由を次の2つの補題において示そう。
これらの補題では、
簡単に言えば、
validな任意の2つの$j(\geq 2)$-パケットに対して、
先に到着するpacketの到着番号は、
後から到着するpacketのそれ以下であるというものである。
この補題は単純だが、$MF$の解析において最も核の部分である。
\fi
\ifnum \count11 > 0
\com{（■英語）}
Now we clarify the reason $MF$ dares to accept 1-packets in moderation in the following two lemmas. 
Briefly speaking, 
these lemmas show that for any two valid $j(\geq 2)$-packets, the block number of the earlier one is equal to or less than that of the later one. 
The lemmas are quite simple but the core of our analysis of $MF$. 
\fi
%

%
\begin{LMA}\label{LMA:k.00}
	\ifnum \count10 > 0
	(■)
	任意の$x \in [1, k]$に対して、
	$p$を$MF$がacceptする$1$-packetとし、
	$p'$を$p$の$x$-packetとする。
	$q$を任意の$1$-packetとする。
	$q'$を$q$の$x$-packetとする。
	ただし、
	$g(p) < g(q)$が成立する。
	このとき、
	(a) 
	$\mbox{arr}(p) < \mbox{arr}(q)$
	が成立し、
	(b)
	$\mbox{arr}(p') \leq \mbox{arr}(q')$
	が成立する。
	\fi
	\ifnum \count11 > 0
	\com{（■英語）}
	(a) Let $p$ be any $1$-packet accepted by $MF$ and $q$ be any $1$-packet such that $g(p) < g(q)$.
	Then, $\mbox{arr}(p) < \mbox{arr}(q)$.
	(b) 
	For $x \in [2, k]$, let $p'$ and $q'$ be any $x$-packets such that $g(p') < g(q')$ and 
	suppose that $MF$ accepts the 1-packet corresponding to $p'$.
	Then, $\mbox{arr}(p') \leq \mbox{arr}(q')$.
	\fi
\end{LMA}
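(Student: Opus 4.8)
The plan is to prove part (a) from the counter--block bookkeeping of $MF$ and then derive part (b) as a short consequence of part (a) together with the order-respecting property of the input.

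For part (a), the guiding observation is that {\tt Block} is nondecreasing along the (totally ordered) sequence of decision and delivery times, so $g(p) = b(t_p-)$ and $g(q) = b(t_q-)$ with $g(p) < g(q)$ force $p$'s decision time to precede $q$'s. First I would record that, since $MF$ accepts the $1$-packet $p$, this happens through Case~1.2.1, so immediately after $p$ is processed we have ${\tt Counter} \le A$. Next I would bound the number of times {\tt Counter} can be incremented inside a single arrival subphase: increments occur only when $GR_1$ accepts a $1$-packet, no deliveries occur during an arrival subphase, and $GR_1$'s buffer has capacity $B$, so at most $B$ increments happen per arrival subphase. Since {\tt Block} advances only when {\tt Counter} reaches $3B$ and resets, and reaching $3B$ from a value $\le A \le B/2 < B$ needs more than $B$ further increments, {\tt Block} cannot advance during the remainder of the subphase that contains $t_p$. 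Hence any $q$ with $g(q) > g(p)$ cannot lie in $p$'s subphase (a $q$ processed after $p$ would share $p$'s unchanged block number, and a $q$ processed before $p$ would have $g(q) \le g(p)$ by monotonicity), so $q$'s decision falls in a strictly later subphase and $\mbox{arr}(p) < \mbox{arr}(q)$.

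For part (b), let $p$ and $q$ be the $1$-packets corresponding to $p'$ and $q'$. By the definition of block numbers of $j$-packets, $g(p) = g(p') < g(q') = g(q)$, and by hypothesis $MF$ accepts $p$, so part (a) yields $\mbox{arr}(p) < \mbox{arr}(q)$. I would then invoke the order-respecting property applied to the two frames containing these packets, with the index pair $(1,x)$: $\mbox{arr}(p) \le \mbox{arr}(q)$ holds if and only if $\mbox{arr}(p') \le \mbox{arr}(q')$. Since $\mbox{arr}(p) < \mbox{arr}(q)$ gives the left-hand inequality, we obtain $\mbox{arr}(p') \le \mbox{arr}(q')$, as claimed.

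The main obstacle I anticipate is the per-subphase increment bound in part (a): I must argue carefully that, because an arrival subphase contains no delivery, $GR_1$ can accept at most $B$ $1$-packets in it, and then combine this with the gap $3B - A > B$ to rule out a block advance after the accepted packet $p$. Once this quantitative step is in place, both the monotonicity argument for (a) and the order-respecting reduction for (b) are routine.
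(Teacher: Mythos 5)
Your proof is correct, but part (a) takes a genuinely different route from the paper's. The paper proves (a) through Lemma~\ref{LMA:ap.1}: since $MF$ accepts $p$, $p$ is among the first $A$ packets of its block in $GR_{1}$'s acceptance order, while the first 1-packet $\hat{q}$ of block $g(q)$ is at least $3B(g(q)-g(p))+1-A > 2B$ positions later; Lemma~\ref{LMA:ap.1}(a) then shows $MF$ \emph{transmits} $p$ before $\hat{q}$ even arrives, giving $\mbox{arr}(p) < \mbox{arr}(\hat{q}) \leq \mbox{arr}(q)$. You instead argue locally within a single arrival subphase: after $p$ is accepted via Case~1.2.1 we have ${\tt Counter} \leq A$, at most $B$ further increments of {\tt Counter} can occur in that subphase (no deliveries, $GR_{1}$'s capacity is $B$ --- in fact at most $B-1$, since $p$ occupies a slot), and $3B - A > B$, so {\tt Block} cannot roll over before the subphase ends; combined with monotonicity of {\tt Block} this forces $q$ into a strictly later subphase. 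Both arguments are sound, and your handling of the edge cases (a $q$ processed before $p$, the fact that no block advance happens at $t_{p}$ itself because acceptance means Case~1.2.1) is complete. Your version is more elementary and self-contained --- it avoids the delivery-timing machinery of Lemma~\ref{LMA:ap.1}(a) entirely and works for arbitrary $q$ without introducing the auxiliary packet $\hat{q}$. The paper's version is heavier but extracts a stronger temporal-separation fact (that $p$ has already left the buffer before any packet of a higher block arrives), which is the same mechanism the paper reuses elsewhere in the analysis; your subphase-counting argument yields only the arrival-order conclusion, which is all this lemma needs. Part (b) is identical to the paper's: reduce to the corresponding 1-packets and apply the order-respecting condition with the index pair $(1,x)$.
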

%
%
\begin{proof}
	\ifnum \count10 > 0
	(a) 
	各1-packetの通し番号は、単調増加である。
	すなわち、
	$g(p) < g(q)$ならば、
	$\mbox{arr}(p) \leq \mbox{arr}(q)$
	が成立する。
	よって、
	$\mbox{arr}(p) \neq \mbox{arr}(q)$. 
	の場合のみ示せばよい。
	$p$は$GR_{1}$もacceptする。
	また、
	blockが$g(q)$である最初の1-packet $\hat{q}$も、
	$GR_{1}$はacceptし、
	$\mbox{arr}(\hat{q}) \leq \mbox{arr}(q)$
	が成立する。
	そこで、
	$p$と$\hat{q}$が$GR_{1}$によって、
	$i$番目と$j$番目にacceptされるpacketと仮定せよ。
	補題~\ref{LMA:ap.1}(b)と
	仮定より、$g(p) < g(q) (= g(\hat{q}))$, 
	が成立するので、
	$j-i \geq (3B(g(\hat{q})-1)+1)-(3B(g(p)-1)+A) =3B(g(\hat{q})-g(p))+1-A \geq 3B+1-A > 2B$
	が成立する。
	このとき、
	補題~\ref{LMA:ap.1}(a)より、
	$p$は$\hat{q}$の到着時刻より前に、転送されている。
	よって、
	$\mbox{arr}(p) < \mbox{arr}(\hat{q})$が成立する。
	すなわち、
	$\mbox{arr}(p) < \mbox{arr}(q)$
	が成立する。
	(b) 
	(a)より、$\mbox{arr}(p) < \mbox{arr}(q)$が成立する。
	到着するパケットは、order-respectingなので、
	$\mbox{arr}(p') \leq \mbox{arr}(q')$
	が成立する。
	\fi
	\ifnum \count11 > 0
	\com{（■英語）}
	{\bf (a)} 
	Note that $p$ is accepted by also $GR_{1}$. 
	Also, 
	let $\hat{q}$ be the first 1-packet with block number $g(q)$. 
	Clearly $GR_{1}$ accepts $\hat{q}$ and 
	$\mbox{arr}(\hat{q}) \leq \mbox{arr}(q)$. 
	Suppose that $p$ and $\hat{q}$ are the $i$th and the $j$th packets, respectively, accepted by $GR_{1}$. 
	By Lemma~\ref{LMA:ap.1}(b) and the assumption that $g(p) < g(q) (= g(\hat{q}))$, 
	$j-i \geq (3B(g(\hat{q})-1)+1)-(3B(g(p)-1)+A) =3B(g(\hat{q})-g(p))+1-A \geq 3B+1-A > 2B$. 
	Then 
	by Lemma~\ref{LMA:ap.1}(a), 
	$p$ is transmitted by $MF$ before $\hat{q}$ arrives. 
	Therefore, 
	$\mbox{arr}(p) < \mbox{arr}(\hat{q})$, 
	which means that 
	$\mbox{arr}(p) < \mbox{arr}(q)$.
	This completes the proof.
	{\bf (b)} 
	Let $p_1$ and $q_1$ be the 1-packets corresponding to $p'$ and $q'$, respectively.
	Since $g(p') < g(q')$, $g(p_1) < g(q_1)$.
	Therefore, $\mbox{arr}(p_1) < \mbox{arr}(q_1)$ by (a).
	Since the input is order-respecting, 
	$\mbox{arr}(p') \leq \mbox{arr}(q')$. 
	\fi
\end{proof}
%
%

%
\begin{LMA}\label{LMA:k.01}
	\ifnum \count10 > 0
	任意の任意のnon-event time $t$、
	$x \in [1, k]$、
	$MF$の任意の$x$-packet $p, q(\ne p)$に対して、
	$d$に$p$が$MF$のバッファ内に存在しており、
	$d$に$q$が$MF$のバッファ内に存在するか、まだ到着していないとする。
	このとき、
	${\ell}(t, p) < {\ell}(t, q)$ならば、
	$g(p) \leq g(q)$が成立する。
	\fi
	\ifnum \count11 > 0
	\com{（■英語）}
	Suppose that $t$ is a non-event time. 
	For any $x \in [1, k]$, 
	let $p$ be an $x$-packet stored in $MF$'s buffer at $t$, and 
	let $q$ be an $x$-packet which is stored in $MF$'s buffer at $t$ or has not arrived yet at $t$. 
	If ${\ell}(t, p) < {\ell}(t, q)$, 
	then $g(p) \leq g(q)$. 
	\fi
\end{LMA}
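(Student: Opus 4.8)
The claim is that if two $x$-packets $p,q$ satisfy $\ell(t,p) < \ell(t,q)$ at some non-event time $t$ (where $p$ is in $MF$'s buffer and $q$ is either in the buffer or has not yet arrived), then their block numbers satisfy $g(p) \le g(q)$. Intuitively, $\ell(t,p) < \ell(t,q)$ says $p$ sits in front of $q$ in the FIFO queue (or $q$ has not yet arrived, so $\ell(t,q)=\infty$), and FIFO order tracks arrival order, while Lemma~\ref{LMA:k.00} has just told us that arrival order is consistent with block number. So the statement is really a packaging of Lemma~\ref{LMA:k.00}(a),(b) in terms of queue positions.

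\textbf{The plan.} I would argue by contradiction: suppose $g(p) > g(q)$, i.e.\ $g(q) < g(p)$, and derive $\ell(t,p) \ge \ell(t,q)$, contradicting the hypothesis. The central fact to establish is an arrival-order inequality. I would split on whether $x=1$ or $x\ge 2$ so as to invoke the correct part of Lemma~\ref{LMA:k.00}. For $x=1$, part (a) applies; for $x\ge 2$, I would use part (b), first checking its hypothesis that the $1$-packet corresponding to the earlier-arriving packet is accepted by $MF$. From $g(q)<g(p)$ I want to conclude $\mbox{arr}(q) \le \mbox{arr}(p)$ (with strict inequality in the $x=1$ case). Once I have $\mbox{arr}(q)\le\mbox{arr}(p)$, I translate this back into queue positions: since $MF$ uses a FIFO queue and both $p,q$ are $x$-packets, the relative order of two $x$-packets in the queue is determined by their arrival (decision) order, so an earlier-arriving $x$-packet occupies a smaller $\ell$-position. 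Hence $\ell(t,q)\le\ell(t,p)$, contradicting $\ell(t,p)<\ell(t,q)$.

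\textbf{Handling the two subtleties.} The first subtlety is the case $q$ has not yet arrived at $t$: then $\ell(t,q)=\infty$, so $\ell(t,p)<\ell(t,q)$ holds automatically, and I must still show $g(p)\le g(q)$. Here the contrapositive route is cleanest: if $g(p)>g(q)$ then by Lemma~\ref{LMA:k.00} $q$ arrives no later than $p$; but $p$ is already in the buffer at $t$ and $q$ has not arrived, forcing $\mbox{arr}(p) < \mbox{arr}(q)$ (an arrived packet precedes a non-arrived one), a contradiction. The second subtlety is verifying the hypothesis of Lemma~\ref{LMA:k.00}(b): when $x\ge2$ and I apply the lemma to the earlier packet, I need that $MF$ accepted its corresponding $1$-packet. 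Because $p$ is currently stored in $MF$'s buffer, $p$ is valid for $MF$, so every packet of $p$'s frame—in particular its $1$-packet—was accepted by $MF$; this supplies exactly the needed hypothesis when $p$ is the one playing the role of the earlier packet.

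\textbf{Main obstacle.} I expect the delicate point to be the careful bookkeeping around the within-subphase processing order and the translation from arrival times to $\ell$-positions. Two $x$-packets may arrive in the \emph{same} arrival subphase, in which case $\mbox{arr}(p)=\mbox{arr}(q)$ and I can only conclude $\mbox{arr}(q)\le\mbox{arr}(p)$, not strict inequality; I then need the tie-breaking rule (for $x\ge2$, packets are processed in nondecreasing block-number order, so the smaller block number is enqueued first and gets the smaller position). Reconciling this with middle-drop preemptions—which can remove an interior $x$-packet and thereby shift positions—requires arguing that preemption never reverses the relative order of two \emph{surviving} $x$-packets, only shifts both their indices down uniformly. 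I would keep the argument at the level of "earlier decision time implies smaller FIFO position among surviving $x$-packets," invoking that $MF$ neither reorders the queue nor moves a later packet ahead of an earlier one, and cite Lemma~\ref{LMA:k.00} for the arrival comparison.
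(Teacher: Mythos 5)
Your overall route is, in substance, the contrapositive of the paper's own argument: the paper argues directly that $\ell(t,p) < \ell(t,q)$ forces $MF$ to process $p$ before $q$, hence $\mbox{arr}(p) \leq \mbox{arr}(q)$, and then invokes the contrapositive of Lemma~\ref{LMA:k.00}, disposing of the tie $\mbox{arr}(p) = \mbox{arr}(q)$ (for $x \neq 1$) by the rule that $MF$ processes packets with smaller block number first --- the same tie-breaking observation you make. So the architecture is fine. The flaw is in your hypothesis check for Lemma~\ref{LMA:k.00}. Under your contradiction hypothesis $g(q) < g(p)$, in any application of that lemma the packet playing the role of the \emph{earlier} (smaller-block-number) packet is $q$, not $p$: part (a) requires the smaller-block 1-packet itself to be accepted by $MF$, and part (b) requires $MF$ to accept the 1-packet corresponding to the smaller-block $x$-packet. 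You verify acceptance for $p$'s frame (``when $p$ is the one playing the role of the earlier packet''), but under your own assumption $p$ never plays that role, so the validity of $p$ contributes nothing to the lemma's hypothesis.

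When $q$ is stored in the buffer at $t$ the repair is immediate: your validity argument applies verbatim to $q$ (every packet in $MF$'s buffer is valid, so $q$'s corresponding 1-packet was accepted). But in the branch where $q$ has not yet arrived --- exactly the branch you settle with the sentence ``by Lemma~\ref{LMA:k.00} $q$ arrives no later than $p$'' --- the hypothesis can genuinely fail: $q$'s corresponding 1-packet may have been rejected by $MF$ (Cases 1.1, 1.2.2 or 1.2.3), so neither part of Lemma~\ref{LMA:k.00} applies and the deduction is unjustified as stated. The claim is still true there, but it needs a different argument: $g(q) < g(p)$ means the 1-packet of $q$ has strictly smaller block number than the 1-packet of $p$; since {\tt Block} is nondecreasing over time, the 1-packet of $q$ is processed, and hence arrives, no later than that of $p$, and the order-respecting property then transfers this to the $x$-packets, giving $\mbox{arr}(q) \leq \mbox{arr}(p)$ with no acceptance hypothesis at all --- contradicting that $p$ has arrived by $t$ while $q$ has not. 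With that substitution, and with the validity check redirected from $p$ to $q$ in the in-buffer case, your proof goes through and matches the paper's.
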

%
%
\begin{proof}
	\ifnum \count10 > 0
	${\ell}(t, p) < {\ell}(t, q)$
	が成立するので、
	$MF$は$q$より$p$の方が先に処理している。
	すなわち、
	$\mbox{arr}(p) \leq \mbox{arr}(q)$
	が成立する。
	$x = 1$の場合、
	補題~\ref{LMA:k.00}の対偶より、
	$\mbox{arr}(p) \leq \mbox{arr}(q)$
	が成立する場合、
	$g(p) \leq g(q)$
	が成立する。
	$x \ne 1$かつ$\mbox{arr}(p) < \mbox{arr}(q)$の場合、
	同様に、
	補題~\ref{LMA:k.00}の対偶より、
	$g(p) \leq g(q)$
	が成立する。
	$x \ne 1$かつ$\mbox{arr}(p) = \mbox{arr}(q)$の場合、
	定義より、
	$MF$は小さい通し番号のパケットを先に処理するので、
	$g(p) \leq g(q)$
	が成立する。
	\fi
	\ifnum \count11 > 0
	\com{（■英語）}
	Since ${\ell}(t, p) < {\ell}(t, q)$, 
	$MF$ processes $p$ earlier than $q$, which means that $\mbox{arr}(p) \leq \mbox{arr}(q)$. 
	Thus, in the case of $x = 1$, 
	if $\mbox{arr}(p) \leq \mbox{arr}(q)$, 
	then $g(p) \leq g(q)$ by the contrapositive of Lemma~\ref{LMA:k.00}. 
	In the same way, 
	using the contrapositive of Lemma~\ref{LMA:k.00}, 
	$g(p) \leq g(q)$
	if $x \ne 1$ and $\mbox{arr}(p) < \mbox{arr}(q)$. 
	In the case where both $x \ne 1$ and $\mbox{arr}(p) = \mbox{arr}(q)$, 
	$MF$ processes a packet with a smaller block number earlier by definition, and hence $g(p) \leq g(q)$. 
	\fi
\end{proof}
%
%

%
\ifnum \count10 > 0
次の補題では、
通し番号1がかならずgoodであることを示す。
\ref{sec:idea}節でも軽く触れたが、
この補題によって、
各badの通し番号は、
入力が終了した時点においてある2つのgood通し番号の間の値であるということを示すことが出来る。
（$M$も$G$の定義より、goodである）
\fi
\ifnum \count11 > 0
\com{（■英語）}
In the next lemma, 
we prove that the block number 1 is always good. 
As we described briefly in Sec.~\ref{sec:idea}, 
we can easily show that each bad block number lies between some two good block numbers using the lemma. 
(Recall that $M$ is good by the definition of $G$.)
\fi
%

%
\begin{LMA}\label{LMA:k.035}
	\ifnum \count10 > 0
	$a_1 = 1$が成立する。
	\fi
	\ifnum \count11 > 0
	\com{（■英語）}
	$a_1 = 1$. 
	\fi
\end{LMA}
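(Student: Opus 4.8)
The statement $a_1 = 1$ says that the smallest good block number equals $1$; by the definition of $G$, this is equivalent to showing that block number $1$ is good, i.e., that $MF$ completes at least $\lfloor A/2 \rfloor$ frames whose block number is $1$. The plan is to track what happens to the $3B$ 1-packets that $GR_1$ accepts in the first block, of which $MF$ accepts the first $A$ (by Lemma~\ref{LMA:ap.1}(b), these are $p_1,\dots,p_A$ with block number $1$). The key question is how many of these $A$ frames survive to completion, and the only way a block-$1$ frame can fail to be completed is if one of its packets is dropped — either because some $j$-packet $(j\ge 2)$ is middle-dropped as a causing packet, or because a flush (Case 2.2.2.1) wipes out all block-$1$ packets.

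**Main line of argument.** First I would handle the degenerate possibility that block number $1$ is the \emph{only} block number, i.e. $M=1$; in that case there is nothing to compare against and the claim should follow from the overview's inequalities (2)–(4), which already isolate the behaviour of the top block. The substantive case is $M \ge 2$, where I must rule out a flush ever destroying block $1$. Here the crucial tool is Lemma~\ref{LMA:k.01}: when $MF$ middle-drops the causing packet $p'$ at position $\lfloor A/2\rfloor + 1$ in the $j$-subbuffer, every $j$-packet ahead of $p'$ (positions $1,\dots,\lfloor A/2\rfloor$) has block number $\le g(p')$, and every $j$-packet behind it has block number $\ge g(p')$. A flush of block number $g(p')$ occurs only in Case 2.2.2.1, i.e. when $h_{MF,g(p')}(t_{p}-) \le \lfloor A/2\rfloor$. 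I would argue that if $g(p') = 1$, the $\lfloor A/2\rfloor$ valid block-$1$ $j$-packets sitting in front of $p'$ would force $h_{MF,1}(t_p-) \ge \lfloor A/2\rfloor + 1$ (counting $p'$ itself together with those ahead of it that are valid), landing us in Case 2.2.2.2 where \emph{nothing} is flushed. This monotonicity — that the causing packet can never be the \emph{lowest} surviving block number once at least $\lfloor A/2\rfloor$ frames of block $1$ are still in play — is what keeps block $1$ from turning bad.

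**Anticipated obstacle and bookkeeping.** The hard part will be making precise the relationship between ``$\lfloor A/2\rfloor$ $j$-packets of block $1$ are physically in front of $p'$'' and ``$h_{MF,1}(t_p-) \ge \lfloor A/2\rfloor + 1$,'' because $h_{MF,1}$ counts \emph{valid frames} of block $1$, not $j$-packets per se. I would need Lemma~\ref{LMA:k.01} to guarantee the block numbers of the packets ahead of $p'$ are all $\le 1$, hence exactly $1$, and then argue that each such $j$-packet belongs to a distinct frame still valid at $t_p-$, so these frames plus the frame of $p'$ contribute at least $\lfloor A/2\rfloor + 1$ to the count. A subtlety is that some of these leading frames might have already lost another packet; I would need to confirm that a $j$-packet present in $MF$'s buffer at $t_p-$ implies its frame is valid at $t_p-$ (no packet of it dropped yet), which follows from $MF$'s never keeping a packet of an invalidated frame. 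Finally, I would package these observations into a clean statement that block number $1$ accrues at least $\lfloor A/2\rfloor$ completed frames by the end of input, and therefore $1 \in G$, giving $a_1 = 1$.
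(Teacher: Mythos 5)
Your proposal is correct and takes essentially the same route as the paper's proof: Lemma~\ref{LMA:ap.1}(b) supplies the $A$ accepted block-1 1-packets, the degenerate case $M=1$ is disposed of separately (trivially, since $M \in G$ by definition), and in the case $M \geq 2$ the ascending ordering of block numbers from Lemma~\ref{LMA:k.01} forces the $\lfloor A/2 \rfloor$ $j$-packets ahead of a block-1 causing packet to have block number exactly $1$ and to lie in distinct frames still valid at $t_{p}-$, so at least $\lfloor A/2 \rfloor$ valid block-1 frames survive every preemption event and block number $1$ is good. Your explicit step that $h_{MF,1}(t_{p}-) \geq \lfloor A/2 \rfloor + 1$ puts the algorithm in Case 2.2.2.2, so no flush of block $1$ can occur, is the same counting that the paper compresses into its assertion $h_{MF,1}(t+) \geq \lfloor A/2 \rfloor$.
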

%
%
\begin{proof}
	\ifnum \count10 > 0
	$M = 1$が成立する場合、
	定義より、$M \in G$が成立するので、
	$a_{1} = 1$が成立する。
	$M \geq 2$の場合を考える。
	$MF$は、
	補題~\ref{LMA:ap.1}(b)より、
	通し番号が$1$である1-packetを$A$個acceptする。
	もし、
	$MF$が一度も通し番号が$1$のpacketをpreemptしないならば、
	明らかに題意がみたされる。
	そこで、
	event time $t$に、
	$MF$が通し番号が$1$である$x (\in [2, k])$-packetをpreemptすると仮定せよ。
	Case 2.2.2の定義より、
	$t-$において$MF$のバッファ内には$A$個の$x$-packetが存在する。
	また、
	補題~\ref{LMA:k.01}より、
	$MF$のバッファ内の$x$-packetの通し番号は単調増加である。
	よって、
	${\ell}(t+, p) \in [1, \lfloor A/2 \rfloor]$
	が成立する
	各$x$-packet $p$に対して、
	$g(p) = 1$が成立している。
	よって、
	$h_{MF, 1}(t+) \geq \lfloor A/2 \rfloor$
	が成立する。
	これは題意をみたす。
	\fi
	\ifnum \count11 > 0
	\com{（■英語）}
	When $M=1$, 
	clearly $a_1=1$ because $M\in G$ by definition. 
	When $M \ge 2$, 
	we show that at any moment there are at least $\lfloor A/2 \rfloor$ frames $f$ with block number $1$ such that $f$ is valid for $MF$. 
	$MF$ accepts at least $\lfloor A/2 \rfloor$ 1-packets with block number $1$ according to Lemma~\ref{LMA:ap.1}(b). 
	If $MF$ does not preempt any packet with block number $1$, 
	the statement is clearly true. 
	Then, 
	suppose that at an event time $t$, 
	$MF$ preempts an $x (\in [2, k])$-packet with block number $1$. 
	By Case 2.2.2 in $MF$, 
	$MF$ stores $A$ $x$-packets in its buffer at $t-$. 
	Moreover, 
	all the $x$-packets in $MF$'s buffer are queued in ascending order of block number by Lemma~\ref{LMA:k.01}. 
	Thus, 
	for each $x$-packet $p$ such that ${\ell}(t+, p) \in [1, \lfloor A/2 \rfloor]$, 
	$g(p) = 1$. 
	As a result, 
	$h_{MF, 1}(t+) \geq \lfloor A/2 \rfloor$, 
	which proves the lemma. 
	\fi
\end{proof}
%
%

%
\ifnum \count10 > 0
以下では、
$OPT$がcompleteするframeの数を評価する。
その為の有力なツールを示そう。
以下の補題では、
ある一定時間の間に$OPT$が受理し得る1-packetの数を、
$GR_{1}$の受理する1-packetの数で評価する。
%
%

%
\fi
\ifnum \count11 > 0
\com{（■英語）}
In the following lemmas, 
we evaluate the number of frames completed by $OPT$. 
Then we show the next lemma that is a useful tool to do so. 
Specifically, 
we bound the number of 1-packets accepted by $OPT$ during a time interval from above by the number of 1-packets accepted by $GR_{1}$ in the lemma. 
\fi
%

%
\begin{LMA}\label{LMA:k.20}
	\ifnum \count10 > 0
	$t_{1}, t_{2}(>t_{1})$を任意のnon-event timeとする。
	時間$[t_{1}, t_{2}]$の間に、
	$GR_{1}$が1-packetを$w (\geq 1)$個受理すると仮定せよ。
	また、
	時間$[t_{1}, t_{2}]$の間に、
	$GR_{1}$が最初にacceptするpacketを$p$とせよ。
	このとき、
	時間$[t_{p}-, t_{2}]$の間に
	$OPT$がacceptする1-packetの数は高々$w+B-1$個である。
	とくに、
	$t_{1}$が入力の始まる前の時刻である場合、
	時間$[t_{p}-, t_{2}]$の間に
	$OPT$がacceptする1-packetの数は高々$w$個である。
	\fi
	\ifnum \count11 > 0
	\com{（■英語）}
	Let $t_{1}, t_{2}(>t_{1})$ be any non-event times. 
	Suppose that $GR_{1}$ accepts $w (\geq 1)$ 1-packets during time $[t_{1}, t_{2}]$. 
	Also, 
	let $p$ be the first 1-packet accepted by $GR_{1}$ during time $[t_{1}, t_{2}]$. 
	Then, 
	the number of 1-packets accepted by $OPT$ during time $[t_{p}-, t_{2}]$ is at most $w+B-1$. 
	In particular, 
	when $t_{1}$ is a time before the beginning of the input, 
	the number of 1-packets accepted by $OPT$ during time $[t_{p}-, t_{2}]$ is at most $w$. 
	\fi
\end{LMA}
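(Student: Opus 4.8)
The plan is to compare $OPT$ and $GR_{1}$ directly on the stream of 1-packets, exploiting that $GR_{1}$ is greedy and that its buffer holds nothing but 1-packets. First I would reduce the statement to a clean monovariant. For a non-event time $t$, let $\Delta(t)$ denote the number of 1-packets accepted by $GR_{1}$ up to $t$ minus the number accepted by $OPT$ up to $t$. Since acceptances occur only at arrival subphases (deliveries change no acceptance count), $\Delta$ changes only when a 1-packet is processed: by $+1$ when $GR_{1}$ accepts it and $OPT$ does not, by $-1$ when $OPT$ accepts it and $GR_{1}$ rejects it, and by $0$ otherwise. Because $p$ is the first 1-packet $GR_{1}$ accepts in $[t_{1},t_{2}]$ and $GR_{1}$ accepts exactly $w$ of them there, the number of 1-packets $OPT$ accepts during $[t_{p}-,t_{2}]$ equals $w+(\Delta(t_{p}-)-\Delta(t_{2}))$. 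Hence it suffices to bound the net decrease $\Delta(t_{p}-)-\Delta(t_{2})$ of $\Delta$ over the interval.

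Next I would bound this net decrease by $B-1$. The key structural fact is the greedy property: a $-1$ step can occur only at an arrival where $GR_{1}$'s buffer is full, i.e.\ holds exactly $B$ 1-packets, so at such a moment $OPT$ still has room and therefore holds at most $B-1 \leq$ $GR_{1}$'s count of 1-packets. Intuitively $OPT$ can ``get ahead'' of the saturated $GR_{1}$ only by filling its own free slots, and each time a delivery frees a slot in the full queue $GR_{1}$ must re-accept a 1-packet (raising $\Delta$) before it can be full again. I would make this precise with a charging argument: injectively match each 1-packet accepted by $OPT$ in the interval either to the identical 1-packet accepted by $GR_{1}$, or, when $GR_{1}$ rejected it, to a distinct 1-packet residing in $GR_{1}$'s buffer at $t_{p}-$ or to one $GR_{1}$ accepts later. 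Since $GR_{1}$ has a free slot at $t_{p}-$ (it accepts $p$ at $t_{p}$, so $GR_{1}$'s queue holds at most $B-1$ 1-packets there), at most $B-1$ of $OPT$'s accepted 1-packets are left unmatched against $GR_{1}$'s $w$ interval acceptances, which gives $\Delta(t_{p}-)-\Delta(t_{2})\leq B-1$ and hence the bound $w+B-1$.

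For the ``in particular'' claim, when $t_{1}$ precedes the whole input, $p$ is the first 1-packet of $\sigma$; since no $j(\geq 2)$-packet can arrive before its corresponding 1-packet, both $GR_{1}$ and $OPT$ have empty buffers at $t_{p}-$, so $\Delta(t_{p}-)=0$. Here the greedy property yields the global invariant $\Delta(t)\geq 0$ (whenever $GR_{1}$ rejects a 1-packet it is full and holds $B$ of them, at least as many as $OPT$), so $\Delta(t_{2})\geq 0$, the net decrease is at most $0$, and the bound improves to $w$.

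The hard part will be the charging/invariant step, because $OPT$'s buffer also stores $j(\geq 2)$-packets: $OPT$'s 1-packet deliveries can be blocked by a non-1-packet at the head, and $GR_{1}$'s queue may empty out during the interval while $OPT$ still transmits a 1-packet. I expect these are precisely the delivery events that threaten the monovariant, and I would handle them by noting that a blocked or idle step never creates a $-1$ acceptance step (so it only helps the inequality), and that an empty $GR_{1}$ queue means every 1-packet seen so far has already been transmitted by $GR_{1}$, so $OPT$'s transmitted 1-packets form a subset of $GR_{1}$'s. If keeping the matching injective across such events proves delicate, the delivery-timing bound $\mbox{del}_{MF}(p_{j})<\mbox{arr}(p_{j+2B-1})$ of Lemma~\ref{LMA:ap.1}(a), which controls how fast $GR_{1}$ clears its queue, can be invoked to separate the full-buffer episodes and guarantee distinct charges.
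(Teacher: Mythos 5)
Your overall accounting is the right one, and in skeleton it matches the paper's proof: bound $OPT$'s 1-packet acceptances over $[t_p-,t_2]$ by $GR_1$'s $w$ interval acceptances plus a slack equal to $GR_1$'s buffer occupancy at $t_p-$, which is at most $B-1$ because $GR_1$ accepts $p$ there, and which is $0$ when $t_1$ precedes the input. The paper formalizes this by flow conservation on buffer occupancies rather than by an injective charging, but that is a cosmetic difference. The genuine gap is exactly in the step you yourself flag as hard, and your proposed patches for it do not work as stated. First, the claim ``an empty $GR_1$ queue means every 1-packet seen so far has already been transmitted by $GR_1$, so $OPT$'s transmitted 1-packets form a subset of $GR_1$'s'' is false: $OPT$ transmits 1-packets that $GR_1$ rejected and therefore never held. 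Second, Lemma~\ref{LMA:ap.1}(a) bounds $\mbox{del}_{MF}$, i.e.\ it concerns $MF$'s deliveries, not $GR_1$'s, so it cannot be invoked to control how fast $GR_1$ clears its queue (nor is any such control needed). Third, your justification of the global invariant $\Delta(t)\geq 0$ compares \emph{buffer occupancies} at a rejection instant ($GR_1$ full with $B$ 1-packets versus $OPT$'s at most $B-1$), but $\Delta$ compares \emph{cumulative acceptances}; to pass from one to the other you must also compare cumulative transmission counts, and that comparison is precisely what the blocked-head phenomenon (a 1-packet of $OPT$ stuck behind $j(\geq 2)$-packets) threatens. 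The same unproved transmission comparison is hidden in your charging step, since between two full episodes of $GR_1$ the gap can also be eroded by deliveries in which $GR_1$ transmits while the comparison process does not.

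The missing idea, which is the one device the paper introduces, is the auxiliary offline algorithm $OPT_1$ that accepts exactly the 1-packets accepted by $OPT$ and nothing else. Since the lemma counts only acceptances, replacing $OPT$ by $OPT_1$ is lossless; and because $OPT_1$'s queue contains only 1-packets, the dominance $f_{GR_1}(t)\geq f_{OPT_1}(t)$ holds at every non-event time by a trivial induction ($GR_1$ rejects only when $f_{GR_1}=B$, and at a delivery $GR_1$ transmits whenever $OPT_1$ does, so $y\geq y'$ for the cumulative transmission counts). Flow conservation over $[t_p-,t_2]$ then gives, with $x$, $x'$, $x''$ the $GR_1$-only, $OPT_1$-only, and common acceptances, $0\leq f_{GR_1}(t_2)-f_{OPT_1}(t_2)\leq f_{GR_1}(t_p-)-f_{OPT_1}(t_p-)+x-x'$, hence $x'+x''\leq f_{GR_1}(t_p-)-f_{OPT_1}(t_p-)+w\leq w+B-1$, and $\leq w$ when both buffers are empty at $t_p-$. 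With this substitution your monovariant and charging go through essentially verbatim; without it, the delivery-timing cases in your last paragraph remain uncontrolled, so the proposal as written is incomplete.
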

%
%
\begin{proof}
	\ifnum \count10 > 0
	$OPT_{1}$を$OPT$のacceptする1-packetのみacceptするオフラインアルゴリズムとする。
	時間$[t_{p}-, t_{2}]$の間に
	$GR_{1}$（$OPT_{1}$）がacceptし、$OPT_{1}$($GR_{1}$)がrejectする1-packetの数を$x$($x'$)とし、
	$GR_{1}$と$OPT_{1}$がacceptする1-packetの数を$x''$とする。
	定義より、
	$x + x'' = w$が成立する。
	以下では、$x' + x''$を上から抑えることを目指す。
	任意のnon-event time $t$、
	任意のアルゴリズム$ALG' (\in \{ OPT_{1}, GR_{1} \})$に対して、
	$t$において$ALG'$がバッファに保持するpacketの数を$f_{ALG'}(t)$で表すとする。
	$GR_{1}$は1-packetをgreedyにacceptし、
	$OPT_{1}$は1-packetのみをacceptするので、
	任意のnon-event time $t$に対して、
	$f_{GR_{1}}(t) - f_{OPT_{1}}(t) \geq 0$
	が成立する。
	また、
	時間$[t_{p}-, t_{2}]$の間に
	$GR_{1}$($OPT_{1}$)がtransmitするpacketの数を$y$($y'$)とする。
	$GR_{1}$と$OPT_{1}$の定義より、
	$y \geq y'$が成立する。
	また、
	$f_{GR_{1}}(t_{2}) = f_{GR_{1}}(t_{p}-) + x + x'' - y$
	と
	$f_{OPT_{1}}(t_{2}) = f_{OPT_{1}}(t_{p}-) + x' + x'' - y'$
	が成立する。
	以上の式より、
	$0 \leq f_{GR_{1}}(t_{2}) - f_{OPT_{1}}(t_{2}) 
		= f_{GR_{1}}(t_{p}-) + x + x'' - y - (f_{OPT_{1}}(t_{p}-) + x' + x'' - y')
		= f_{GR_{1}}(t_{p}-) - f_{OPT_{1}}(t_{p}-) + x - x' - y + y'
		\leq f_{GR_{1}}(t_{p}-) - f_{OPT_{1}}(t_{p}-) + x - x'
	$
	が成立する。
	すなわち、
	$x' \leq f_{GR_{1}}(t_{p}-) - f_{OPT_{1}}(t_{p}-) + x$
	が成立する。
	よって、
	$x' + x'' \leq f_{GR_{1}}(t_{p}-) - f_{OPT_{1}}(t_{p}-) + x + x''
		= f_{GR_{1}}(t_{p}-) - f_{OPT_{1}}(t_{p}-) + w
	$
	が成立する。
	更に、
	$p$をacceptする$GR_{1}$のバッファは$t_{p}-$においてfullではないので、
	$f_{GR_{1}}(t_{p}-) - f_{OPT_{1}}(t_{p}-) \leq B - 1$
	が成立する。
	\com{（■$[t_{1}, t_{2}]$の間で考えると、ここでつまる）}
	よって、
	$x' + x'' \leq B - 1 + w$
	が成立する。
	$t_{1}$が入力が始まる前の時間であるとしよう。
	このとき、
	$p$が一番最初の1-packetであるので、
	$f_{GR_{1}}(t_{p}-) = f_{OPT_{1}}(t_{p}-) = 0$
	が成立する。
	よって、
	$x' + x'' \leq x + x'' = w$
	が成立する。
	\fi
	\ifnum \count11 > 0
	\com{（■英語）}
	Define $OPT_{1}$ as the offline algorithm that accepts only all the 1-packets accepted by $OPT$. 
	Let $x$ (respectively $x'$) 
	be the number of 1-packets accepted by $GR_{1}$ but not accepted by $OPT_{1}$ 
	(respectively accepted by $OPT_{1}$ but not accepted by $GR_{1}$) 
	during time $[t_{p}-, t_{2}]$. 
	Also, 
	let $x''$ be the number of 1-packets accepted by both $GR_{1}$ and $OPT_{1}$ 
	during time $[t_{p}-, t_{2}]$. 
	Since $GR_{1}$ accepts $w$ packets during time $[t_{p}-, t_{2}]$, 
	$x + x'' = w$. 
	In what follows, we bound $x' +x''$ from above.
	For a non-event time $t$ and an algorithm $ALG' (\in \{ OPT_{1}, GR_{1} \})$, 
	let $f_{ALG'}(t)$ denote the number of 1-packets in $ALG'$'s buffer at $t$. 
	Since $GR_{1}$ accepts 1-packets greedily and $OPT_{1}$ accepts only 1-packets, 
	$f_{GR_{1}}(t) - f_{OPT_{1}}(t) \geq 0$ holds for any $t$. 
	Let $y$ (respectively $y'$) 
	denote the number of 1-packets transmitted by $GR_{1}$ (respectively $OPT_{1}$) 
	during time $[t_{p}-, t_{2}]$. 
	Since $f_{GR_{1}}(t) - f_{OPT_{1}}(t) \geq 0$ for any $t$, 
	$GR_{1}$ transmits a 1-packet whenever $OPT_{1}$ does so, 
	and hence $y \geq y'$. 
	By an easy calculation,  
	$f_{GR_{1}}(t_{2}) = f_{GR_{1}}(t_{p}-) + x + x'' - y$ 
	and 
	$f_{OPT_{1}}(t_{2}) = f_{OPT_{1}}(t_{p}-) + x' + x'' - y'$. 
	By the above equality and inequalities, 
	$0 \leq f_{GR_{1}}(t_{2}) - f_{OPT_{1}}(t_{2}) 
		= f_{GR_{1}}(t_{p}-) + x + x'' - y - (f_{OPT_{1}}(t_{p}-) + x' + x'' - y')
		= f_{GR_{1}}(t_{p}-) - f_{OPT_{1}}(t_{p}-) + x - x' - y + y'
		\leq f_{GR_{1}}(t_{p}-) - f_{OPT_{1}}(t_{p}-) + x - x'
	$. 
	That is, 
	$x' \leq f_{GR_{1}}(t_{p}-) - f_{OPT_{1}}(t_{p}-) + x$. 
	Hence, 
	$x' + x'' \leq f_{GR_{1}}(t_{p}-) - f_{OPT_{1}}(t_{p}-) + x + x''
		= f_{GR_{1}}(t_{p}-) - f_{OPT_{1}}(t_{p}-) + w
	$ 
	holds. 
	Furthermore, 
	$f_{GR_{1}}(t_{p}-) - f_{OPT_{1}}(t_{p}-) \leq B - 1$ 
	since $GR_{1}$ accepts $p$, 
	namely, 
	$GR_{1}$'s buffer is not full just before the decision time of $p$.
	Thus, 
	$x' + x'' \leq B - 1 + w$. 
	Finally 
	we consider the case where $t_{1}$ is a time before the beginning of the input. 
	Since $f_{GR_{1}}(t_{p}-) = f_{OPT_{1}}(t_{p}-) = 0$, 
	$x' + x'' \leq f_{GR_{1}}(t_{p}-) - f_{OPT_{1}}(t_{p}-) + w = w
	$ 
	holds. 
	\fi
\end{proof}
%
%

%
\ifnum \count10 > 0
We are ready to give one of the two key lemmas, 
which evaluates the number of $OPT$'s packets whose block number is $M$.
どれほどバースト的にパケットが到着しても、
$MF$は決して通し番号$M$のpacketをflushしないという定義になっている。
このことから、以下の例外的な性質を示すことが出来る。
\fi
\ifnum \count11 > 0
\com{（■英語）}
We are ready to give one of the two key lemmas, 
which evaluates the number of $OPT$'s packets whose block number is $M$.
Even if many packets arrive at a burst, 
$MF$ never flushes any packets with block number $M$ by the definition of $MF$. 
Thus, 
we can get the following exceptional properties. 
\fi
%

%
\begin{LMA}\label{LMA:k.1}
	\ifnum \count10 > 0
	(a) 
	$c = 0$
	もしくは
	$c \in [\lfloor A/2 \rfloor, 3B - 1]$の場合、
	$h_{MF, M}(\tau) \geq \lfloor A/2 \rfloor$
	が成立する。
	(b)
	$c \in [1, \lfloor A/2 \rfloor-1]$の場合、
	$M \geq 2$ならば、
	$h_{MF, M}(\tau) + B - 1 \geq h_{OPT, M}(\tau)$
	が成立する。
	(c) 
	$c \in [1, \lfloor A/2 \rfloor-1]$の場合、
	$M = 1$ならば、
	$h_{MF, M}(\tau) \geq h_{OPT, M}(\tau)$
	が成立する。
	\fi
	\ifnum \count11 > 0
	\com{（■英語）}
	(a) 
	If either $c = 0$ or $c \in [\lfloor A/2 \rfloor, 3B-1]$,
	$h_{MF, M}(\tau) \geq \lfloor A/2 \rfloor$. 
	(b) 
	If $c \in [1, \lfloor A/2 \rfloor-1]$ and $M \geq 2$, 
	$h_{MF, M}(\tau) + B - 1 \geq h_{OPT, M}(\tau)$. 
	(c) 
	If $c \in [1, \lfloor A/2 \rfloor-1]$ and $M = 1$, 
	$h_{MF, M}(\tau) \geq h_{OPT, M}(\tau)$. 
	\fi
\end{LMA}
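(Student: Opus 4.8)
The plan is to isolate the one structural fact that drives all three parts — that block number $M$, being the largest block number in the input, is never flushed — and then handle part (a) (where $MF$ accepts at least $\lfloor A/2\rfloor$ block-$M$ 1-packets) separately from parts (b) and (c) (where it accepts only $c<\lfloor A/2\rfloor$ of them). First I would record two preliminaries. By Lemma~\ref{LMA:k.01} the $j$-packets in $MF$'s buffer are always queued in nondecreasing order of block number, and $M$ is the maximum block number occurring. Also, a $j$-packet ($j\ge 2$) residing in $MF$'s buffer always belongs to a valid frame: the only ways a valid frame dies are a middle-drop (Case 2.2.2), the accompanying preemption of the corresponding packets, or a flush (Case 2.2.2.1), and each of these removes \emph{all} buffered packets of that frame at once, while a packet of an already invalid frame is rejected in Case 2.1. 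Hence $h_{MF,u}(t)$ counts exactly the distinct frames of block number $u$ currently buffered. The core claim is then: whenever $MF$ enters Case 2.2.2 with the causing packet $p'$ at position $\lfloor A/2\rfloor+1$ having $g(p')=M$, every packet at positions $\lfloor A/2\rfloor+1,\dots,A$ also has block number $M$ (ascending order and maximality of $M$), giving $\lceil A/2\rceil$ distinct valid block-$M$ frames; together with the arriving valid packet $p$ with $g(p)=M$, whose frame is not yet among them, this forces $h_{MF,M}(t_p-)\ge\lceil A/2\rceil+1\ge\lfloor A/2\rfloor+1$, so Case 2.2.2.2 fires and block $M$ is not flushed.

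For part (a), I would first note that $MF$ accepts at least $\lfloor A/2\rfloor$ 1-packets of block $M$: exactly $A$ when $c=0$ (a full block, by Lemma~\ref{LMA:ap.1}(b)), and $\min\{c,A\}\ge\lfloor A/2\rfloor$ when $c\in[\lfloor A/2\rfloor,3B-1]$. Thus $h_{MF,M}$ reaches $\lfloor A/2\rfloor$ at some moment. By the preliminary observation $h_{MF,M}$ can decrease only through a middle-drop of a block-$M$ packet, and by the core claim such a middle-drop can occur only when $h_{MF,M}(t_p-)\ge\lfloor A/2\rfloor+1$ and lowers the count by exactly one; hence $h_{MF,M}$ can never fall below $\lfloor A/2\rfloor$ once it has attained it, so $h_{MF,M}(\tau)\ge\lfloor A/2\rfloor$.

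For parts (b) and (c) I would use $c\in[1,\lfloor A/2\rfloor-1]$ to show first that $h_{MF,M}(\tau)=c$: since there are at most $c$ block-$M$ $j$-packets for each $j$, and they occupy the highest positions with $c\le\lceil A/2\rceil-1$, none of them ever sits at the middle position $\lfloor A/2\rfloor+1$, so none is ever middle-dropped; as block $M$ is never flushed, none of the $c$ frames whose 1-packet $MF$ accepted is ever invalidated. To bound $h_{OPT,M}(\tau)$ I would use that $OPT$ accepts only packets of completed frames, so $h_{OPT,M}(\tau)$ equals the number of block-$M$ 1-packets accepted by $OPT$, and estimate this through Lemma~\ref{LMA:k.20}. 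For (c), $M=1$ forces every 1-packet to have block number $1$, and the first arriving 1-packet is accepted by $GR_1$ (its buffer is initially empty), so no 1-packet precedes it; taking $t_1$ before the input in Lemma~\ref{LMA:k.20} with $w=c$ gives at most $c$ such 1-packets, matching $h_{MF,1}(\tau)=c$. For (b), all block-$M$ 1-packets arrive after block $M$ begins, $GR_1$ accepts exactly $c$ of them, and applying Lemma~\ref{LMA:k.20} to an interval ending at $\tau$ is meant to yield the bound $c+B-1$.

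I expect the genuine difficulty to lie entirely in part (b): pinning the additive constant down to $B-1$ rather than $B$. The buffer-occupancy comparison underlying Lemma~\ref{LMA:k.20} produces the clean $-1$ only when the interval starts at a moment where $GR_1$'s buffer is not full — the decision time of a packet that $GR_1$ accepts — yet I also need the interval to capture \emph{every} block-$M$ 1-packet that $OPT$ accepts, including those arriving at the very start of block $M$ while $GR_1$'s buffer may still be full from block $M-1$. Reconciling these two demands, namely choosing the interval so that the initial occupancy gap is at most $B-1$ while no $OPT$-accepted block-$M$ 1-packet is omitted, is the step that will require the most care.
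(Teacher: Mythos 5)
Your parts (a) and (c) are correct and essentially coincide with the paper's own proofs. For (a), the paper likewise combines Lemma~\ref{LMA:ap.1}(b) with the ascending-order property of Lemma~\ref{LMA:k.01} and the maximality of $M$ to conclude that after any preemption event the $x$-packets at positions $\lfloor A/2 \rfloor+1,\ldots,A$ all have block number $M$, whence $h_{MF,M}(t+)\geq \lfloor A/2 \rfloor$; your extra observation that $h_{MF,M}(t_{p}-)\geq \lceil A/2 \rceil+1$, so that Case~2.2.2.2 rather than a flush fires, makes explicit a point the paper leaves implicit. For (c), your use of the ``$t_{1}$ before the beginning of the input'' case of Lemma~\ref{LMA:k.20} with $w=c$ is exactly the paper's argument.

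Part (b), however, is not actually proved in your proposal: you correctly establish $h_{MF,M}(\tau)=c$ (your positional argument that no block-$M$ packet can ever occupy the middle slot when $c\leq \lfloor A/2 \rfloor-1$ is sound, and again more explicit than the paper), but you end by \emph{posing} the interval-selection dilemma rather than resolving it. The paper commits to a concrete choice: let $p'$ be the first and $p''$ the $h_{MF,M}(\tau)$th 1-packet with block number $M$ accepted by $MF$ (equivalently by $GR_{1}$, since $c<\lfloor A/2 \rfloor\leq A$), and apply Lemma~\ref{LMA:k.20} with $t_{1}=t_{p'}-$ and $t_{2}=t_{p''}+$, so that $w=h_{MF,M}(\tau)$ and the first $GR_{1}$-accepted packet of the interval is $p'$ itself; the additive $B-1$ then comes precisely from the fact that $GR_{1}$'s buffer is not full at $t_{p'}-$. (The packets arriving after $t_{p''}$ cause no trouble: Lemma~\ref{LMA:k.20} holds for any $t_{2}$, so one may extend the interval to $\tau$ without changing $w=c$.) What the paper thereby implicitly asserts is that every block-$M$ 1-packet accepted by $OPT$ is processed within this window. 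The scenario you single out---block-$M$ 1-packets arriving just after the {\tt Counter} reset, while $GR_{1}$'s buffer is still full from block $M-1$ and hence before $p'$---is passed over in silence in the paper's proof (as is the analogous point in the $u=M$ case of Lemma~\ref{LMA:k.2}). So your instinct that this is the delicate spot is well founded; but a proof must either rule such packets out or absorb them into the bound, and your proposal does neither, so part (b) remains open in your write-up.
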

%
%
\begin{proof}
	\ifnum \count10 > 0
	(a) 
	この証明は補題~\ref{LMA:k.035}と同じである。
	$c = 0$、
	もしくは$c \in [\lfloor A/2 \rfloor, 3B-1]$ならば、
	$MF$は、
	補題~\ref{LMA:ap.1}(b)より、
	通し番号が$M$である1-packetを少なくとも$\lfloor A/2 \rfloor$個はacceptする。
	もし、
	$MF$が一度も通し番号が$M$のpacketをpreemptしないならば、
	明らかに題意がみたされる。
	そこで、
	event time $t$に、
	$MF$が通し番号が$M$である$x (\in [2, k])$-packetをpreemptすると仮定せよ。
	Case 2.2.2の定義より、
	$t-$において$MF$のバッファ内には$A$個の$x$-packetが存在する。
	また、
	補題~\ref{LMA:k.01}より、
	$MF$のバッファ内の$x$-packetの通し番号は単調増加である。
	よって、
	${\ell}(t+, p) \in [\lfloor A/2 \rfloor + 1, A]$
	が成立する
	各$x$-packet $p$に対して、
	$g(p) = M$が成立している。
	よって、
	$h_{MF, M}(t+) \geq \lfloor A/2 \rfloor$
	が成立する。
	これは題意をみたす。
	(b) 
	$c \in [1, \lfloor A/2 \rfloor-1]$が成立するので、
	補題~\ref{LMA:ap.1}(b)より、
	$MF$が受理する通し番号が$M$の1-packetと$GR_{1}$が受理する1-packetは等しい。
	よって、
	$MF$がacceptする通し番号が$M$の最初の1-packetを$p'$とし、
	$h_{MF, M}(\tau)$番目の1-packetを$p''$とする。
	ここで、
	補題~\ref{LMA:k.20}において、
	$t_{1} = t_{p'}-$とし、
	$t_{2} = t_{p''}+$とする。
	このとき、
	$w = h_{MF, M}(\tau)$が成立し、
	$h_{MF, M}(\tau) + B - 1 \geq h_{OPT, M}(\tau)$
	が成立する。
	(c) 同様に補題~\ref{LMA:k.20}より、
	$h_{MF, M}(\tau) \geq h_{OPT, M}(\tau)$が成立する。
	\fi
	\ifnum \count11 > 0
	\com{（■英語）}
	(a) 
	The proof of (a) is almost the same as that of Lemma~\ref{LMA:k.035}. 
	By the assumption that $c = 0$ or $c \in [\lfloor A/2 \rfloor, 3B-1]$, 
	$MF$ accepts at least $\lfloor A/2 \rfloor$ 1-packets with block number $M$ according to Lemma~\ref{LMA:ap.1}(b). 
	If $MF$ does not preempt any packet with block number $M$, the statement is clearly true. 
	Then, 
	suppose that at an event time $t$, $MF$ preempts an $x (\in [2, k])$-packet with block number $M$. 
	By Case 2.2.2 in $MF$, 
	$MF$ stores $A$ $x$-packets in its buffer at $t-$. 
	Moreover, 
	all the $x$-packets in $MF$'s buffer are queued in ascending order by their block numbers by Lemma~\ref{LMA:k.01}. 
	Thus, 
	for each $x$-packet $p$ such that ${\ell}(t+, p) \in [\lfloor A/2 \rfloor + 1, A]$, 
	$g(p) = M$. 
	As a result, 
	$h_{MF, M}(t+) \geq \lfloor A/2 \rfloor$, 
	which proves the lemma. 
	(b) 
	Since $c \in [1, \lfloor A/2 \rfloor-1]$ by the assumption of (b), 
	all the 1-packets with block number $M$ which are accepted by $MF$ are the same as those of $GR_{1}$. 
	Then, 
	let $p'$ ($p''$) be the first ($h_{MF, M}(\tau)$th) 1-packet accepted by $MF$ such that the block number of $p'$ ($p''$) is $M$. 
	Thus, 
	by applying Lemma~\ref{LMA:k.20} with 
	$t_{1} = t_{p'}-$ and $t_{2} = t_{p''}+$, i.e., $w = h_{MF, M}(\tau)$, 
	we have that 
	$h_{MF, M}(\tau) + B - 1 \geq h_{OPT, M}(\tau)$. 
	(c) 
	In the same way as (b), 
	we obtain 
	$h_{MF, M}(\tau) \geq h_{OPT, M}(\tau)$ by Lemma~\ref{LMA:k.20}. 
	\fi
\end{proof}
%
%

%
\ifnum \count10 > 0
以下では、
$M$だけでなく他の通し番号の$OPT$にcompleteされるframeの数を評価する。
そのために、
別の便利なツールを示しておこう。
\fi
\ifnum \count11 > 0
\com{（■英語）}
In what follows, 
we discuss the number of $OPT$'s completed frames. 
Now we get other useful tools in the next two lemmas. 
\fi
%

%
\begin{LMA}\label{LMA:k.02}
	\ifnum \count10 > 0
	任意のnon-event time $t$、
	任意の$x \in [2, k]$、
	$t$において、validな$MF$の任意の$x$-packet $p$に対して、
	$\mbox{arr}(p) < \mbox{arr}(q)$が成立する様な、
	$OPT$の受理する1から$(g(p)-1)$の通し番号の$x$-packet $q$の数は、
	高々$B$個である。
	\fi
	\ifnum \count11 > 0
	\com{（■英語）}
	For any non-event time $t$ and $x \in [2, k]$, 
	let $p$ be an $x$-packet valid for $MF$ at $t$. 
	Then the number of $x$-packets $q$ 
	such that
	$OPT$ accepts $q$, $\mbox{arr}(p) < \mbox{arr}(q)$, 
	and $g(q) \in [1, g(p)-1]$ is at most $B$. 
	\fi
\end{LMA}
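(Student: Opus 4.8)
The plan is to reduce the count of $x$-packets to a count of their corresponding 1-packets and then show that all those 1-packets are forced into a single arrival subphase, where $OPT$'s buffer capacity caps them at $B$. Let $p_{1}$ be the 1-packet of $p$'s frame, and write $u^{\ast}=g(p)=g(p_{1})$ (all packets of a frame share a block number). If $u^{\ast}=1$ then $[1,g(p)-1]$ is empty and the claim is trivial, so assume $u^{\ast}\geq 2$. For each $x$-packet $q$ counted by the lemma, $OPT$ accepts $q$; since $OPT$ never accepts a packet of an incompleted frame, $OPT$ completes $q$'s frame and thus accepts its corresponding 1-packet $q_{1}$, with $g(q_{1})=g(q)\leq u^{\ast}-1$. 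Distinct frames give distinct 1-packets, so the map $q\mapsto q_{1}$ is injective, and it suffices to bound the number of such $q_{1}$.

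Next I would pin down the arrival time of every such $q_{1}$. For the lower bound, from $\mbox{arr}(p)<\mbox{arr}(q)$ and the order-respecting property applied to the 1-packets of the two frames, $\mbox{arr}(p_{1})\leq \mbox{arr}(q_{1})$. For the upper bound I would use that block numbers of 1-packets are non-decreasing in the arrival subphase: the variable {\tt Block} only ever increases, and 1-packets are processed in order of their arrival subphases, so $g$ restricted to 1-packets is monotone. Hence if $q_{1}$ arrived in a strictly later subphase than $p_{1}$ we would get $g(q_{1})\geq g(p_{1})=u^{\ast}$, contradicting $g(q_{1})\leq u^{\ast}-1$; therefore $\mbox{arr}(q_{1})\leq \mbox{arr}(p_{1})$. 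Combining the two bounds yields $\mbox{arr}(q_{1})=\mbox{arr}(p_{1})$, i.e. every relevant $q_{1}$ arrives in the single arrival subphase $\mbox{arr}(p_{1})$. (Alternatively, the upper bound can be obtained from Lemma~\ref{LMA:ap.1}(b): block $u^{\ast}$ begins at $GR_{1}$'s $(3B(u^{\ast}-1)+1)$st acceptance, and since $p_{1}$ lies among the first $A$ packets of its block, $\mbox{arr}(p_{1})\geq \mbox{arr}(p_{3B(u^{\ast}-1)})\geq \mbox{arr}(q_{1})$ for any block-$\leq u^{\ast}-1$ 1-packet $q_{1}$.)

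Finally I would invoke the buffer constraint on $OPT$. During one arrival subphase no packet is delivered and $OPT$ never preempts, so $OPT$'s buffer occupancy only increases and stays at most $B$; hence $OPT$ accepts at most $B$ packets in the subphase $\mbox{arr}(p_{1})$, and in particular at most $B$ of the 1-packets $q_{1}$. By the injectivity of $q\mapsto q_{1}$, the number of counted $x$-packets $q$ is at most $B$, as required.

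I expect the main obstacle to be the careful justification of the single-subphase claim rather than the final capacity count. In particular one must handle that each $q_{1}$ need not be accepted by $MF$ (indeed $g(q_{1})<g(p_{1})$ together with $\mbox{arr}(p_{1})\leq\mbox{arr}(q_{1})$ shows, via the contrapositive of Lemma~\ref{LMA:k.00}(a), that $MF$ rejects $q_{1}$), so the argument must rely only on the block-number assignment $g(\cdot)=b(\cdot-)$ and the monotonicity of {\tt Block}, which hold for every 1-packet regardless of whether it is accepted by $GR_{1}$ or by $MF$. Pinning the arrival subphase exactly, and confirming the per-subphase acceptance bound for $OPT$ under the ``no preemption'' assumption, are the two places that need the most care.
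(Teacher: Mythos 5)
Your proposal is correct and takes essentially the same route as the paper's proof: both map each counted $x$-packet $q$ to its corresponding 1-packet $q_{1}$ (which $OPT$ must accept), derive $\mbox{arr}(p_{1}) \leq \mbox{arr}(q_{1})$ from the order-respecting property and $\mbox{arr}(q_{1}) \leq \mbox{arr}(p_{1})$ from the monotonicity of block numbers of 1-packets in arrival order, conclude that all relevant $q_{1}$ arrive at the single subphase $\mbox{arr}(p_{1})$, and bound their number by the buffer size $B$. Your additional remarks—that the map $q \mapsto q_{1}$ is injective, that the monotonicity of $g$ on 1-packets holds regardless of acceptance by $GR_{1}$ or $MF$, and that $OPT$'s no-preemption assumption caps per-subphase acceptances at $B$—are sound elaborations of steps the paper treats more tersely.
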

%
%
\begin{proof}
	\ifnum \count10 > 0
	$p$の1-packetを$p_{1}$とする。
	$OPT$のacceptする任意の$x$-packetを$q'$とし、
	その1-packetを$q'_{1}$とする。
	$OPT$はcompleteしないframeのpacketをaccpetしないので、
	$q'_{1}$は$OPT$にacceptされる。
	到着するpacketはorder-respectingなので、
	$\mbox{arr}(p_{1}) > \mbox{arr}(q'_{1})$
	が成立する場合、
	$\mbox{arr}(p) \geq \mbox{arr}(q')$
	が成立する。
	すなわち、このとき、
	$q'$は$q$の補題の2番目の条件をみたさない。
	また、
	1-packetの通し番号は、その到着順に関して非減少であるので、
	$\mbox{arr}(p_{1}) < \mbox{arr}(q'_{1})$
	が成立する場合、
	$g(p_{1}) \leq g(q'_{1})$
	が成立する。
	すなわち、
	その様な$q'$は$q$の3番目の条件をみたさない。
	以上より、
	$\mbox{arr}(p_{1}) = \mbox{arr}(q'_{1})$
	が成立する様な$q'$のみが$q$の条件を全てみたしている。
	バッファの大きさは$B$なので、
	$\mbox{arr}(p_{1}) = \mbox{arr}(q'_{1})$
	が成立する様なpacket $q'_{1}$の数は高々$B$個である。
	\fi
	\ifnum \count11 > 0
	\com{（■英語）}
	Let $p_{1}$ be the 1-packet corresponding to $p$, 
	$q'$ be an $x$-packet accepted by $OPT$, 
	and $q'_{1}$ be the 1-packet corresponding to $q'$.
	As we assume that $OPT$ never accepts a packet of an incompleted frame,
	$q'_{1}$ is accepted by $OPT$. 
	Since the input is order-respecting, 
	$\mbox{arr}(p) \geq \mbox{arr}(q')$ 
	if $\mbox{arr}(p_{1}) > \mbox{arr}(q'_{1})$, 
	that is, 
	such $q'$ does not satisfy the second condition of $q$ in the statement of this lemma. 
	Since the block numbers of 1-packets are monotonically non-decreasing in an arrival order, 
	$g(p_{1}) \leq g(q'_{1})$ 
	if $\mbox{arr}(p_{1}) < \mbox{arr}(q'_{1})$, 
	namely, 
	such $q'$ does not satisfy the third condition of $q$. 
	Thus, 
	only $q'$ such that 
	$\mbox{arr}(p_{1}) = \mbox{arr}(q'_{1})$ can satisfy all the conditions of $q$. 
	Since the buffer size is $B$, 
	the number of such $q'_{1}$ accepted by $OPT$ is at most $B$, 
	which completes the proof.
	\fi
\end{proof}
%
%

%
\ifnum \count10 > 0
次の補題では、
2つの$x(\geq 2)$-packet $p$と$p'$が同時に$MF$のバッファ内に存在する場合を考える。
このとき、
どんなアルゴリズムでも（$OPT$でさえも）
$g(p)$から$g(p')-1$の通し番号をもつ$x$-packetをあまり受理できないことを示す。
具体的には、
その数は$5B+A-4$である。
すなわち、
$OPT$は、
$g(p)$から$g(p')-1$の通し番号をもつframeは$O(B)$個しかcompleteできない
ということを意味する。
\fi
\ifnum \count11 > 0
\com{（■英語）}
In the next lemma, 
we consider the case where there exist two $x(\geq 2)$-packets $p$ and $p'$ in $MF$'s buffer at the same time. 
Then any algorithm (even $OPT$) cannot accept many $x$-packets whose block numbers lie between $g(p)$ and $g(p')-1$. 
Specifically, its number is at most $5B+A-4$. 
Hence, 
that means the number of $OPT$'s completed frames with block numbers lying between $g(p)$ and $g(p')-1$ is $O(B)$. 
\fi
%

%
\begin{LMA}\label{LMA:k.21}
	\ifnum \count10 > 0
	任意の$x \in [2, k]$、
	同時刻にMFのバッファ内に存在する$x$-packets $p, p' (\ne p)$
	(ただし、$g(p) + 2 \leq g(p')$が成立する。)
	に対して、
	$g(p) \geq 2$ならば、
	$OPT$がacceptし、
	その通し番号が$g(p)$thから$(g(p')-1)$である様な$x$-packetの数は、高々$5B + A - 4$である。
	また、
	$g(p) = 1$ならば、
	$OPT$がacceptし、
	その通し番号が$1$から$(g(p')-1)$である様な$x$-packetの数は、高々$4B + A - 3$である。
	\fi
	\ifnum \count11 > 0
	\com{（■英語）}
	For any $x \in [2, k]$, and any two $x$-packets $p$ and $p'$ such that $g(p') - g(p) \geq 2$, 
	suppose that both $p$ and $p'$ are stored in $MF$'s buffer at the same time. 
	Then, 
	if $g(p) \geq 2$, 
	the number of $x$-packets $\tilde{p}$ such that $g(\tilde{p}) \in [g(p), g(p')-1]$, and $\tilde{p}$ is accepted by $OPT$ is at most $5B + A - 4$. 
	Moreover, 
	if $g(p) = 1$, 
	the number of $x$-packets $\hat{p}$ such that $g(\hat{p}) \in [1, g(p')-1]$, and $\hat{p}$ is accepted by $OPT$ is at most $4B + A - 3$. 
	\fi
\end{LMA}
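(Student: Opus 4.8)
The plan is to fix a non-event time $t$ at which both $p$ and $p'$ lie in $MF$'s buffer and to classify the $x$-packets $q$ accepted by $OPT$ with $g(q)\in[g(p),g(p')-1]$ according to how $\mbox{arr}(q)$ compares with $\mbox{arr}(p)$ and $\mbox{arr}(p')$. First I would record two structural facts. Since $g(p)<g(p')$, Lemma~\ref{LMA:k.01} gives $\ell(t,p)<\ell(t,p')$, so $p$ sits in front of $p'$; hence $p$ is present throughout $[\mbox{arr}(p),\mbox{arr}(p')]$, and because the queue has size $B$ and obeys FIFO discipline, at most $B-1$ delivery subphases occur in this interval. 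Second, I would observe that a packet stored in $MF$'s buffer is automatically valid for $MF$: once $MF$ accepts $p'$ (Case 2.2 requires validity), the only ways a later drop could invalidate its frame are a preemption of a corresponding packet or a flush of block $g(p')$, and either of these would also remove $p'$ from the buffer; thus $p'$ (and $p$) is valid at $t$, which lets me invoke Lemma~\ref{LMA:k.02}.

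Next I would split the packets $q$ into three groups. For $q$ with $\mbox{arr}(q)>\mbox{arr}(p')$, Lemma~\ref{LMA:k.02} applied to the valid packet $p'$ bounds their number by $B$, since $g(q)\in[g(p),g(p')-1]\subseteq[1,g(p')-1]$. For $q$ with $\mbox{arr}(p)\le\mbox{arr}(q)\le\mbox{arr}(p')$, the packet arrives inside the short window above; because $OPT$ never preempts, the number it can accept there is at most the number it transmits in the window plus its final buffer content, i.e.\ at most $(B-1)+B=2B-1$. The delicate group is $q$ with $\mbox{arr}(q)<\mbox{arr}(p)$. Here, because $MF$ accepts the $1$-packet of $p$, the contrapositive of Lemma~\ref{LMA:k.00}(b) forces $g(q)=g(p)$, so every such $q$ has block number exactly $g(p)$; passing to the corresponding $1$-packets (which, by order-respecting inputs, arrive no later than the $1$-packet $p_1$ of $p$) reduces the task to counting $1$-packets of block $g(p)$ accepted by $OPT$ up to $\mbox{arr}(p_1)$.

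To bound this last count I would combine Lemma~\ref{LMA:ap.1}(b), which guarantees that $p_1$ is among the first $A$ one-packets that $GR_1$ accepts in block $g(p)$, with Lemma~\ref{LMA:k.20}: taking the interval from the start of block $g(p)$ to $t_{p_1}+$, $GR_1$ accepts $w\le A$ one-packets, so $OPT$ accepts at most $w+B-1\le A+B-1$ of them from the first $GR_1$-acceptance onward. When $g(p)=1$ the block begins before the input, so the stronger pre-input clause of Lemma~\ref{LMA:k.20} gives the sharper bound $w\le A$ with no leftover packets; this is precisely the source of the smaller constant $4B+A-3$ in the second case, whereas summing $B$, $2B-1$, and the group-one bound in the general case yields $5B+A-4$.

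I expect the main obstacle to be the boundary one-packets of block $g(p)$ that $GR_1$ rejects before its first acceptance in that block (when $g(p)\ge2$), since Lemma~\ref{LMA:k.20} counts $OPT$ only from the first $GR_1$-acceptance. I would argue that all such rejected one-packets lie in the single arrival subphase straddling the block boundary: $GR_1$ rejects only when full, and once the following delivery frees a slot it accepts again, so $OPT$ can gain at most $O(B)$ of them, and a careful tally of this ``gap'' against the slack left in the other two groups produces the exact additive constant. This constant-chasing, rather than any conceptual step, is where I anticipate the real work.
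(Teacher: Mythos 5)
Your overall architecture is exactly the paper's: the same three-way split of the $OPT$-accepted $x$-packets $q$ according to whether $\mbox{arr}(q)<\mbox{arr}(p)$, $\mbox{arr}(p)\le\mbox{arr}(q)\le\mbox{arr}(p')$, or $\mbox{arr}(q)>\mbox{arr}(p')$, with Lemma~\ref{LMA:k.02} giving $B$ for the third group, a FIFO/delivery-count argument for the middle group, Lemma~\ref{LMA:k.00} forcing $g(q)=g(p)$ in the first group, and Lemmas~\ref{LMA:ap.1}(b) and~\ref{LMA:k.20} (with the pre-input clause when $g(p)=1$) counting the corresponding 1-packets. The genuine gap is in your bookkeeping for the first group, and since this lemma's entire content is an exact constant, it matters. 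Order-respecting yields only $\mbox{arr}(q_1)\le\mbox{arr}(p_1)$, with equality possible; your window ending at $t_{p_1}+$ counts only packets whose decision times precede $p_1$'s, so the up to $B$ 1-packets that arrive in the \emph{same} arrival subphase as $p_1$ but are processed after $p_1$ (all of which $OPT$ may accept) escape your count entirely. The paper treats them as a separate additive term: it applies Lemma~\ref{LMA:k.20} on $[t_{p'_1}-,\,t_{p_1}-]$ with $w=j-1\le A-1$, giving $A+B-2$ for the strictly-earlier packets, adds $B$ for the equal-subphase ones (group one totals $2B+A-2$), and uses the sharper middle-group bound $B+w'-w\le 2B-2$ rather than your $2B-1$, so that $(2B+A-2)+(2B-2)+B=5B+A-4$ exactly. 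Your stated figures sum to $4B+A-2$, which does not match the total you claim; and once the missing $+B$ is restored, your tally overshoots $5B+A-4$. So as written the proposal proves neither the stated constant nor a correct sharper one.

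Regarding the obstacle you single out as the ``real work'' --- block-$g(p)$ 1-packets that $GR_1$ rejects before its first acceptance in that block --- the paper spends no effort there: it takes $p'_1$ to be the first 1-packet accepted by $MF$ (equivalently by $GR_1$) with block number $g(p)$, starts the Lemma~\ref{LMA:k.20} window at $t_{p'_1}-$, and allots no budget to any earlier $OPT$-accepted 1-packets of that block. Your observation that such rejected packets can only lie in the single arrival subphase straddling the block boundary is correct, but your proposed patch of absorbing an extra $O(B)$ for them would inflate the bound beyond $5B+A-4$ rather than rescue it. In short, the constant-chasing you deferred is not optional slack to be distributed later: reaching the lemma's exact constants requires the paper's tighter windows ($t_2=t_{p_1}-$ with $w=j-1$, plus the explicit $+B$ equal-subphase term, plus $2B-2$ in the middle group), and the accounting you sketch cannot be rearranged to produce them.
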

\ifnum \count14 < 1
\begin{proof}
	\ifnum \count10 > 0
	（■）
	最初に
	$g(p) \geq 2$が成立する場合を考える。
	まず、
	ある通し番号$u (\in [g(p), g(p')-1])$に対して、
	時刻$\mbox{arr}(p)$より前にarriveする様な$OPT$がacceptするpacketで、
	その通し番号が$u$である$x$-packetを考える。
	$p_{1}$を$p$の1-packetとし、
	$p_{1}$を$MF$がacceptする$g(p)$の$j(\in [1, A])$番目の1-packetであると仮定する。
	また、
	$p'_{1}$を$MF$がacceptする$g(p)$の$1$番目の1-packetであると仮定する。
	このとき、
	補題~\ref{LMA:ap.1}(b)より、
	$MF$がacceptする$g(p)$の$j(\in [1, A])$番目の1-packetを
	$GR_{1}$もまたacceptする。
	よって、
	$t_{1} = t_{p'_{1}}-$、
	$t_{2} = t_{p_{1}}-$
	すなわち、
	$w = j-1$
	として、
	補題~\ref{LMA:k.20}を適用すると、
	$g(p) \geq 2$の場合、
	$p_{1}$のarrival eventより前に$OPT$がacceptする様な、通し番号が$g(p_{1})(=g(p))$である1-packetの数は高々$j-1+B-1 \leq A-1+B-1$個である。
	また、
	$p_{1}$と同時刻にarriveし、
	$OPT$がacceptする1-packetの集合を$P$とする。
	このとき、$P$の要素数は、バッファの大きさより、高々$B$個である。
	その$B$個の1-packet以外の、通し番号が$g(p_{1})$である任意の$1$-packetを$\hat{p}_{1}$とし、
	その$x$-packetを$\hat{p}$とする。
	$\mbox{arr}(\hat{p}_{1}) > \mbox{arr}(p_{1})$より、
	到着するパケットはorder-respectingなので、
	$\mbox{arr}(\hat{p}) \geq \mbox{arr}(p)$が成立する。
	よって、
	通し番号が$g(p_{1}) (= g(p))$であり、$OPT$のacceptする$x$-packetで、
	時刻$\mbox{arr}(p)$より前にarriveする様なpacketの数は、
	高々$2B + A - 2$個である。
	補題~\ref{LMA:k.00}より、
	任意の通し番号$u' (\in [g(p)+1, g(p')-1])$に対して、
	$g(q') = u'$が成立する様な任意の$x$-packet $q'$に対して、
	$\mbox{arr}(p) \leq \mbox{arr}(q')$が成立する。
	以上より、
	時刻$\mbox{arr}(p)$より前にarriveする様な$OPT$がacceptし、通し番号が$u (\in [g(p), g(p')-1])$である様な$x$-packetの数は、
	高々$2B + A - 2$個である。
	次に、
	時刻$\mbox{arr}(p)$以後にarriveする様な、
	$OPT$がacceptするpacketであり、通し番号が$u (\in [g(p), g(p')-1])$である$x$-packet $q''$を考える。
	$t = \mbox{arr}(p)$とおき、
	$d$の直前のdelivery eventのおこる整数時間を$t'$とおく。
	このとき、
	時刻$[t, t']$の間にtransmitされるpacketの数は、
	高々$t' - t + 1$である。
	また、
	$p$は$d$までは、$MF$のバッファ内に存在するので、
	$t' - t + 1 \leq B - 1$
	が成立する。
	時刻$[t, t']$の間に到着して、
	$OPT$にacceptされる$x$-packetの数は、
	高々$B + t' - t$である。
	以上の式より、
	$B + t' - t \leq 2B - 2$
	が成立する。
	よって、
	$OPT$がacceptするpacketであり、
	通し番号が$u (\in [g(p), g(p')-1])$の$x$-packetで、
	時刻$[\mbox{arr}(p), \mbox{arr}(p')]$の間に到着するpacketの数は、
	高々$2B - 2$個である。
	補題\ref{LMA:k.02}より、
	$OPT$がacceptするpacketであり、
	その通し番号が$g(p)$th, ..., もしくは$(g(p')-1)$の$x$-packetで、
	$\mbox{arr}(p') < \mbox{arr}(q')$が成立する様な$x$-packet $q'$の数は、
	高々$B$個である。
	よって、
	$OPT$がacceptするpacketであり、
	その通し番号が$g(p)$から$(g(p')-1)$である$x$-packetの数は、高々$2B + A - 2 + 2B - 2 + B = 5B + A - 4$である。
	次に
	$g(p) = 1$が成立する場合を考える。
	(a)と同様に考えて、
	補題~\ref{LMA:ap.1}と
	補題~\ref{LMA:k.20}より、
	$p_{1}$のarrival eventより前に$OPT$がacceptする様な、通し番号が$1$の1-packetは高々$j-1 \leq A-1$個である。
	よって、上記と同様の議論より、
	$OPT$がacceptするpacketであり、
	その通し番号が$1$から$(g(p')-1)$である様な$x$-packetの数は、高々$B + A - 1 + 2B - 2 + B = 4B + A - 3$である。
	\fi
	\ifnum \count11 > 0
	\com{（■英語）}
	First, 
	we consider the case of $g(p) \geq 2$. 
	Let $q$ be an $x$-packet satisfying the conditions of the lemma, 
	i.e., 
	an $x$-packet $q$ such that $g(q) \in [g(p), g(p')-1]$ 
	and $q$ is accepted by $OPT$. 
	We count the number of such $q$ for each of the cases 
	(i) $\mbox{arr}(q) < \mbox{arr}(p)$, 
	(ii) $\mbox{arr}(p) \leq \mbox{arr}(q) \leq \mbox{arr}(p')$, 
	and (iii) $\mbox{arr}(p') < \mbox{arr}(q)$.
	(i) 
	First, 
	note that there is no $q$ such that $g(q) \in [g(p)+1, g(p')-1]$ by Lemma~\ref{LMA:k.00}, 
	since $\mbox{arr}(q) < \mbox{arr}(p)$. 
	Hence, 
	we suppose that $g(q) = g(p)$. 
	Let $p_{1}$ and $q_{1}$ be the 1-packets corresponding to $p$ and $q$, respectively, 
	and 
	suppose that $p_{1}$ ($p'_{1}$) is the $j$th (first) 1-packet accepted by $MF$ with block number $g(p)$. 
	To count the number of $q$ satisfying the condition, 
	we count the number of corresponding $q_{1}$. 
	Note that $g(q_{1}) = g(p_{1})$ since $g(q) = g(p)$. 
	By Lemma~\ref{LMA:ap.1}(b), 
	the $j(\in [1, A])$th 1-packet accepted by $MF$ is also accepted by $GR_{1}$. 
	The number of $q_{1}$ such that
	$\mbox{arr}(q_{1}) < \mbox{arr}(p_{1})$ is at most $j-1+B-1$ 
	by applying Lemma~\ref{LMA:k.20} with $t_{1} = t_{p'_{1}}-$ and $t_{2} = t_{p_{1}}-$, i.e., $w=j-1$, 
	and this is at most $A+B-2$ since $j\leq A$ by Lemma~\ref{LMA:ap.1}(b). 
	The number of $q_{1}$ such that $\mbox{arr}(q_{1}) = \mbox{arr}(p_{1})$ is at most $B$, 
	since the buffer size is $B$. 
	Finally, 
	the number of $q_{1}$ such that $\mbox{arr}(q_{1}) > \mbox{arr}(p_{1})$ 
	is zero by the order-respecting assumption 
	because $\mbox{arr}(q) < \mbox{arr}(p)$. 
	Hence, 
	the number of $q$ in Case (i) is at most $(A+B-2)+B=2B+A-2$.
	(ii) 
	Let $t$ be any non-event time when both $p$ and $p'$ are stored in $MF$'s buffer. 
	Let $w = \mbox{arr}(p)$ and 
	suppose that the delivery subphase just before $t$ is at the $w'$th phase. 
	Then, 
	the number of delivery subphases during $[w, w']$ is $w'-w+1$. 
	Since $p$ is still stored in $MF$'s buffer at $t$, 
	$w'-w+1 \leq B-1$ 
	(as otherwise, $MF$ must have transmitted $p$ before $t$). 
	The number of $x$-packets which arrive during $[w, w']$ and are accepted by $OPT$ is at most $B+w'-w \leq 2B-2$. 
	Thus, 
	the number of $x$-packets $q$ in this case is at most $2B-2$.
	(iii) 
	By Lemma~\ref{LMA:k.02}, 
	the number of $x$-packets $q$ in this case is at most $B$.
	Putting (i), (ii), and (iii) together, 
	the number of $x$-packets $q$ is at most $(2B+A-2) + (2B-2) + B = 5B + A - 4$.
	For $g(p)=1$, 
	the argument is the same as the case of $g(p)\geq 2$,
	except that at an application of Lemma~\ref{LMA:k.20} in Case (i), 
	we let $t$ be the time before the beginning of the input. 
	Then, 
	the number of $q_{1}$ such that $q_{1}$ is accepted by $OPT$, 
	$g(q_{1}) = g(p_{1})$, and $\mbox{arr}(q_{1}) < \mbox{arr}(p_{1})$ is at most $A - 1$, 
	instead of $A+B-2$ in the case of $g(p)\geq 2$. 
	Then the number of $x$-packets $q$ in question is at most $(B+A-1)+(2B-2)+B=4B+A-3$.
	\fi
\end{proof}
\fi
\ifnum \count10 > 0
補題~\ref{LMA:k.21}を用いるための準備をする。
具体的には、
連続する2つのgood通し番号のpacketは
ある時間において、
$MF$のバッファ内に同時に存在することを示す。
\fi
\ifnum \count11 > 0
\com{（■英語）}
We prepare for using Lemma~\ref{LMA:k.21}. 
Specifically, 
we show that some two packets whose block numbers are $a_{j}$ and $a_{j+1}$, respectively, exist in $MF$'s buffer at some time simultaneously. 
\fi
%

%
\begin{LMA}\label{LMA:k.22}
	\ifnum \count10 > 0
	（■）
	任意のnon-event time $d$に対して、
	$a_{j}(d) + 2 \leq a_{j+1}(d)$
	が成立する様な、
	任意の$j \in [1, m(d)-1]$に対して、
	あるnon-event time $d' (< d)$、ある$x \in [2, k]$に対して、
	$d'$に、
	通し番号が$a_{j}(d)$である$x$-packet $q''$
	と
	通し番号が$a_{j+1}(d)$である$x$-packet $q'$
	が
	共に$MF$のバッファ内に存在する。
	\fi
	\ifnum \count11 > 0
	\com{（■英語）}
	Suppose that $a_{j+1} - a_{j} \geq 2$ for an integer $j (\in [1, m-1])$.
	Then there exist two $x$-packets $q$ and $q'$ for some
	integer $x \in [2, k]$ such that $g(q) = a_{j}$,
	$g(q') = a_{j+1}$, 
	and both $q$ and $q'$ are stored in $MF$'s buffer at the same time.
	\fi
\end{LMA}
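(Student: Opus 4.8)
The plan is to produce $q$ and $q'$ from a single flush operation that lies ``between'' $a_j$ and $a_{j+1}$, following the picture sketched in Section~\ref{sec:idea}. First I would record the basic fact that drives everything: for a block number $u < M$, $u$ is bad (i.e.\ $u \notin G$) exactly when $MF$ flushes it. Indeed $h_{MF,u}(\cdot)$ is non-increasing and equals $A > \lfloor A/2 \rfloor$ once the $A$ accepted 1-packets of block $u$ are in the buffer, and by the definition of $MF$ it can decrease only through a middle-drop of a block-$u$ causing packet (Case 2.2.2) or a flush (Case 2.2.2.1); hence the step that first brings it below $\lfloor A/2 \rfloor$ is a middle-drop whose causing packet $p'$ has $h_{MF,g(p')}(t_{p}-) \le \lfloor A/2 \rfloor$, triggering Case 2.2.2.1. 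Since $a_{j+1} - a_j \ge 2$, every block number in $[a_j+1, a_{j+1}-1]$ is bad, so each is flushed at some event time.

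Next I would fix an appropriate flush and read off the two packets. Consider an event time $t$ at which $MF$ preempts a causing packet $p'$ with $g(p') = b \in (a_j, a_{j+1})$ from some $x$-subbuffer ($x \ge 2$), so that $\ell(t-, p') = \lfloor A/2 \rfloor + 1$. By Lemma~\ref{LMA:k.01} the $A$ $x$-packets present at $t-$ are queued in non-decreasing order of block number, and the block-$b$ packets form one contiguous run that contains position $\lfloor A/2 \rfloor + 1$. Because there are at most $h_{MF,b}(t-) \le \lfloor A/2 \rfloor$ of them, this run starts at a position $\ge 2$, so the packet immediately in front of it, call it $q$, satisfies $g(q) < b$ and sits in the buffer at $t-$; symmetrically the packet immediately behind the run (or, when the run reaches the tail, the incoming packet being accepted) yields a packet $q'$ with $g(q') > b$ present at essentially the same instant. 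Thus $q$ and $q'$ coexist, and it remains only to pin down their block numbers.

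To upgrade ``$g(q) < b$ and $g(q') > b$'' to ``$g(q) = a_j$ and $g(q') = a_{j+1}$'', I would first observe that $g(q)$ is necessarily good: a packet occupying a front slot (position $\le \lfloor A/2 \rfloor$) of its subbuffer can never later become a causing packet, since by Lemma~\ref{LMA:k.00} every subsequently accepted $x$-packet has a block number at least as large and hence enters behind it, while deliveries only remove packets ahead of it, so its slot index never rises to $\lfloor A/2 \rfloor + 1$; consequently its block is never flushed and is good. As $a_j$ is the largest good block number below $a_{j+1}$, it follows that $g(q) \le a_j$, with equality as soon as a block-$a_j$ $x$-packet is present at $t-$. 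For $q'$ I would choose the flush to be one triggered by the arrival of an $x$-packet of the good block $a_{j+1}$; then $q'$ can be taken to be that incoming packet, giving $g(q') = a_{j+1}$ directly and disposing of the boundary case in which the block-$b$ run reaches position $A$.

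The main obstacle is exactly the simultaneity required above: I must choose the witnessing flush (equivalently, the intermediate bad block $b$ and the subbuffer $x$) so that at $t-$ a block-$a_j$ $x$-packet has not yet been delivered while a block-$a_{j+1}$ $x$-packet has already arrived. This is delicate because an earlier flush favours the survival of block $a_j$ whereas a later one favours the presence of block $a_{j+1}$, and the two demands pull in opposite directions. I expect to resolve it through the ``burst'' phenomenon of Section~\ref{sec:idea}: because $a_j$ and $a_{j+1}$ are good, $h_{MF,a_j}$ and $h_{MF,a_{j+1}}$ stay at least $\lfloor A/2 \rfloor$ throughout, and the whole family of bad blocks $a_j+1, \dots, a_{j+1}-1$ must be crammed into the $A$ slots of a subbuffer (and then flushed) during the short span in which a front block-$a_j$ $x$-packet is still waiting; quantifying this span via the FIFO delivery count, as in Lemmas~\ref{LMA:ap.1} and~\ref{LMA:k.20}, should show that some block-$a_j$ $x$-packet is still in the buffer when the block-$a_{j+1}$ packet triggering the flush arrives, completing the identification.
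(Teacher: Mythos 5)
Your opening reductions are sound and match the paper's setup: badness of a block (other than $M$) is indeed witnessed by a middle-drop whose causing packet has that block number together with the ensuing flush, and the sortedness from Lemma~\ref{LMA:k.01} does place packets with $g(q)<b$ and $g(q')>b$ around the block-$b$ run. The proof breaks at the upgrade step. Your key claim --- that $g(q)$ is good because a packet occupying a slot $\le \lfloor A/2 \rfloor$ can never become a causing packet, ``consequently its block is never flushed'' --- is a non sequitur. A flush (Case 2.2.2.1) is triggered blockwise: any \emph{other} packet of the same block, sitting at position $\lfloor A/2 \rfloor + 1$ of some $j$-subbuffer (possibly a different $j$, possibly later), can be the causing packet, and the flush then preempts \emph{every} buffered packet of that block, including those parked in front slots. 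Front position protects $q$ itself from being the middle-drop victim, not its block from being flushed. The converse direction is equally unavailable: since $MF$ accepts every valid $j(\ge 2)$-packet regardless of its block's status, a frame of an already-flushed block survives whenever none of its packets were buffered at flush time, and its later packets legitimately re-enter the buffer; so a buffered packet's block need not even be good at the current moment. With the goodness of $g(q)$ gone, your bound $g(q)\le a_j$, its intended equality, and the symmetric statement for $q'$ all collapse.

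What is missing is the specific selection the paper makes and the qualitative (not quantitative) simultaneity argument built on it. The paper takes $u$ to be the \emph{last} block in $(a_j,a_{j+1})$ to turn bad, so that at its turn-bad time $t_{p'}$ every other gap block is already bad; combining the sorted order with the count of valid block-$u$ frames, it deduces that the position-1 packet $p$ satisfies $g(p)\le a_j$ and the arriving packet $p'$ satisfies $g(p')\ge a_{j+1}$. Your substitutes do not fill this role: a flush ``triggered by the arrival of an $x$-packet of block $a_{j+1}$'' need not exist (even at the last flush, the arriving packet's block may exceed $a_{j+1}$), and your closing plan to force coexistence by counting delivery subphases in the style of Lemmas~\ref{LMA:ap.1} and~\ref{LMA:k.20} is left entirely unexecuted --- and is not how the difficulty is actually resolved. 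The paper finishes without any counting: because $a_j$ and $a_{j+1}$ are good at $t_{p'}+$, valid frames with these exact block numbers exist; by Lemma~\ref{LMA:k.00} (order-respecting inputs) the arrival times of their $x$-packets are sandwiched between $\mbox{arr}(p)$ and $\mbox{arr}(p')$; and since $p$ is still buffered and FIFO order respects processing order, those $x$-packets must still be in the buffer at $t_{p'}+$. That anchor --- the surviving front packet $p$ together with arrival-order monotonicity, rather than any burst-size estimate --- is the idea your proposal lacks, so as written the simultaneity you flag as the ``main obstacle'' remains unproved.
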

\ifnum \count14 < 1
\begin{proof}
	\ifnum \count10 > 0
	For a non-event time $t$, 
	we say that a block number $u$ is {\em good} at $t$ 
	if $u = M$ or at least $\lfloor A/2 \rfloor$ frames with the block number $u$ are valid at $t$, 
	and {\em bad} at $t$ otherwise. 
	Note that the set of good block numbers at the end of the input coincides the set $G$ 
	(see Sec.~\ref{overview_k} for the definition of $G$).
	Since $a_{j} + 2 \leq a_{j+1}$, 
	there must be at least one block number between $a_{j}$ and $a_{j+1}$. 
	Those block numbers were initially good but turned bad at some event by the execution of $MF$, 
	since $a_{j}$ and $a_{j+1}$ are consecutive good block numbers at the end of the input.
	Let $u$ ($a_{j} < u < a_{j+1}$) be the block number that turned bad lastly during 入力が始まってから終わるまでの間. 
	通し番号$u$がturn badになるのは、
	ある$x (\in [2, k])$-packet $p'$が到着したときであるdecision time $t_{p'}$に起こる。
	具体的には、
	$t_{p'}$において$MF$は$p'$を受理し、
	Case 2.2.2を実行して、
	${\ell}(t_{p'}-, p) = \lfloor A/2 \rfloor + 1$が成立する
	通し番号が$u(=g(p))$の$x$-packet $p$をpreemptする。
	更に、
	$MF$がCase 2.2.2.1を実行して、
	$MF$のバッファ内の通し番号が$u$のpacketを全て破棄したときに起こる。
	以下では、
	$q''$と$q'$が
	$t_{p'}+$において、
	$MF$のバッファ内に存在することを示す。
	ここで、
	$t_{p'}$の前後にバッファ内に存在するパケットの通し番号について確認する。
	$MF$のCase 2.2.2の定義より、
	$t_{p'}-$における$MF$のバッファ内の$x$-packetの数は$A$であり、
	$t_{p'}-$における
	$MF$のバッファ内の、通し番号が$g(p)$である$x$-packetの数は、
	$p$を除いて$\lfloor A/2 \rfloor - 1$個である。
	更に、
	補題~\ref{LMA:k.01}より、
	$MF$のバッファ内の$x$-パケットの通し番号は昇順である。
	よって、
	$g(p) > g(p'')$
	である。
	（式(a)）
	ただし、
	${\ell}(t_{p'}-, p'') = {\ell}(t_{p'}+, p'') = 1$
	が成立する$x$-packetを$p''$とする。
	また、
	$p'$をacceptしてCase 2.2.2.1を実行してバッファ内の通し番号が$g(p)$であるpacketを全てpreemptするので、
	$g(p') \ne g(p)$が成立する。
	よって、
	補題~\ref{LMA:k.01}より、
	$g(p) < g(p')$
	が成立する。
	（式(b)）
	Now we discuss the existence of $q''$. 
	まず$a_{j} < g(p'')$と仮定してみる。
	このとき、
	(式(a))と合わせて、
	$g(p) > g(p'') > a_{j}$
	が成立する。
	これは、
	$a_{j}$と$u(=g(p))$の間にgood通し番号が存在しない、
	という$u$の定義に反する。
	よって、矛盾である。
	すなわち、
	$a_{j} \geq g(p'')$
	が成立している。
	同様に、
	$a_{j+1} > g(p')$と仮定すると、
	(式(b))と合わせて、
	$g(p) < g(p') < a_{j+1}$
	が成立し、矛盾を導ける。
	結果、
	$a_{j+1} \leq g(p')$ (式(c))
	が成立している。
	次に、
	$a_{j} = g(p'')$が成立する場合を考えよう。
	このとき、
	$q'' = p''$と定義する。
	確かに、
	$q''$は$t_{p'}+$においてバッファ内に存在しており、
	題意を満たす。
	最後に、
	$a_{j} > g(p'')$が成立する場合を考える。
	$q''$の定義より
	$a_{j} = g(q'')$が成立しており、
	(式c)と合わせて、
	$g(p') \geq a_{j+1} > a_{j} = g(q'') > g(p'')$
	が成立する。
	よって、
	補題~\ref{LMA:k.00}より、
	$\mbox{arr}(p') \geq \mbox{arr}(q'') \geq \mbox{arr}(p'')$
	が成立する。
	$p'$と$p''$は$t_{p'}+$において$MF$のバッファ内に存在し、
	$t_{p'}+$において$q''$はvalidであるので、
	$q''$は$t_{p'}+$において$MF$のバッファ内に存在する。
	同様に$q'$についても考えて、
	$a_{j+1} = g(p')$が成立する場合、
	$q' = p'$と定義する。
	また、
	$a_{j+1} < g(p')$が成立する場合、
	$g(p') > g(q') = a_{j+1} > a_{j} \geq g(p'')$
	が成立し、
	補題を示すことが出来る。
	\fi
	\ifnum \count11 > 0
	\com{（■英語）}
	For a non-event time $t$, 
	we say that a block number $u$ is {\em good} at $t$ 
	if $u = M$ or at least $\lfloor A/2 \rfloor$ frames with the block number $u$ are valid at $t$, 
	and {\em bad} at $t$ otherwise. 
	Note that the set of good block numbers at the end of the input coincides the set $G$ 
	(see Sec.~\ref{overview_k} for the definition of $G$). 
	Since $a_{j+1} - a_{j} \geq 2$, 
	there must be at least one block number between $a_{j}$ and $a_{j+1}$. 
	Those block numbers were initially good but turned bad at some event, 
	since $a_{j}$ and $a_{j+1}$ are good block numbers that are consecutive at the end of the input.
	Let $u$ ($a_{j} < u < a_{j+1}$) be the block number that turned bad lastly. 
	The event time when block number $u$ turns bad is the decision time $t_{p'}$ when some $x (\in [2, k])$-packet $p'$ arrives. 
	Specifically, 
	$MF$ accepts $p'$ at $t_{p'}$, and 
	preempts an $x$-packet $p''$ with block number $u(= g(p''))$ at Case 2.2.2 
	such that ${\ell}(t_{p'}-, p'') = \lfloor A/2 \rfloor + 1$. 
	Moreover, 
	$MF$ preempts all the packets with block number $u$ in $MF$'s buffer by executing Case 2.2.2.1. 
	Now we discuss the block numbers of packets in $MF$'s buffer before or after $t_{p'}$. 
	By the definition of Case 2.2.2 in $MF$, 
	the number of $x$-packets in $MF$'s buffer at $t_{p'}-$ is $A$, and 
	the number of $x$-packets whose block number is $g(p'')$ in $MF$'s buffer at $t_{p'}-$ except $p''$ is $\lfloor A/2 \rfloor - 1$. 
	In addition, 
	all the $x$-packets in $MF$'s buffer are queued in ascending order by their block numbers by Lemma~\ref{LMA:k.01}. 
	Hence,
	(a) $g(p'') > g(p)$ holds, 
	where $p$ is the $x$-packet such that ${\ell}(t_{p'}-, p) = {\ell}(t_{p'}+, p) = 1$. 
	Also, 
	$MF$ accepts $p'$, and preempts all the packets with block number $g(p'')$ at Case 2.2.2.1. 
	Thus, 
	$g(p') \ne g(p'')$, 
	which means that (b) $g(p'') < g(p')$ holds according to Lemma~\ref{LMA:k.01}. 
	Now if $a_{j} < g(p)$, 
	then $g(p'') > g(p) > a_{j}$ by (a). 
	This contradicts the definition of $u$, namely, 
	the definition that there does not exist any good block number between $a_{j}$ and $u(=g(p''))$. 
	Hence $a_{j} \geq g(p)$. 
	In the same way, 
	if $a_{j+1} > g(p')$, 
	then $g(p'') < g(p') < a_{j+1}$ by (b). 
	We have the contradiction as well, which means that 
	(c) $a_{j+1} \leq g(p')$. 
	In the following, 
	we prove that $q$ and $q'$ mentioned in this lemma exist in the buffer at time $t_{p'}+$.
	We first show the existence of $q$. 
	Let us consider the case of $a_{j} = g(p)$. 
	In this case, 
	$p$ is clearly stored in $MF$'s buffer at $t_{p'}+$, and 
	$p$ satisfies the condition of $q$. 
	Next, we consider the case of $a_{j} > g(p)$. 
	Since $a_{j}$ is a good block number by definition, 
	there must be a packet $p'''$ such that $a_{j} = g(p''')$ and $p'''$ is valid at $t_{p'}+$.
	Then, 
	$g(p') \geq a_{j+1} > a_{j} = g(p''') > g(p)$ by (c) and 
	hence 
	$\mbox{arr}(p') \geq \mbox{arr}(p''') \geq \mbox{arr}(p)$ by Lemma~\ref{LMA:k.00}. 
	Note that 
	$MF$ stores both $p'$ and $p$ in its buffer at $t_{p'}+$, 
	and $p'''$ is valid at $t_{p'}+$ by the above definition. 
	Therefore, 
	$p'''$ is stored in $MF$'s buffer at $t_{p'}+$, and 
	thus this $p'''$ satisfies the condition of $q$.
	The case of $q'$ can be proven in the same way as $q$. 
	Namely, if $a_{j+1} = g(p')$, 
	then let $q' = p'$. 
	Also, 
	if $a_{j+1} < g(p')$, 
	then
	there must be $q'$ satisfying $g(p') > g(q') = a_{j+1} > a_{j} \geq g(p)$. 
	This completes the proof. 
	\fi
\end{proof}
\fi
\ifnum \count10 > 0
Now we are ready to give the last key lemma. 
\fi
\ifnum \count11 > 0
\com{（■英語）}
Now we are ready to give the last key lemma. 
\fi
%

%
\begin{LMA}\label{LMA:k.2}
	\ifnum \count10 > 0
	(■)
	通し番号が$a_{1}$から$a_{2}-1$である様な$OPT$のcompleteなフレームの数は、
	高々$4B + A - 3$である。
	また、
	任意の$i \in [2, m-1]$に対して、
	通し番号が$a_{i}$から$a_{i+1}-1$である様な$OPT$のcompleteなフレームの数は、
	高々$5B + A - 4$である。
	更に、
	通し番号が$a_{m}$である$OPT$のcompleteなフレームの数は、
	高々$4B - 1$である。
	\fi
	\ifnum \count11 > 0
	\com{（■英語）}
	(a) 
	The number of frames $f$ completed by $OPT$ 
	such that $g(f) = a_{m}$ is at most $4B-1$. 
	(b) 
	The number of frames $f$ completed by $OPT$ such that $g(f) \in [a_{1}, a_{2}-1]$ is at most $4B+A-3$. 
	(c) 
	For any $i \in [2, m-1]$, 
	the number of frames $f$ completed by $OPT$ such that $g(f) \in [a_{i}, a_{i+1}-1]$ is at most $5B+A-4$.
	\fi
\end{LMA}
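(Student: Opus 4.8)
The plan is to turn every count of $OPT$'s completed frames into a count of packets that $OPT$ accepts, and then feed the right pair of block numbers into Lemmas~\ref{LMA:k.22} and~\ref{LMA:k.21}. The enabling reduction is the following: since $OPT$ completes a frame exactly when it accepts all $k$ of its packets and, by our assumption on $OPT$, never accepts a packet of an incompleted frame, for each fixed type $x \in [1,k]$ the number of frames $f$ completed by $OPT$ with $g(f)$ in a given set $R$ equals the number of $x$-packets accepted by $OPT$ whose block number lies in $R$. This identity holds for all $x$ at once, so I am free to pick whichever type is convenient. (For part (a), recall that at $\tau$ a frame is valid for $OPT$ iff it is completed, so $h_{OPT,M}(\tau)$ is precisely the number of $OPT$-completed frames with block number $M$.)

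For parts (b) and (c) I first handle the generic subcase $a_{i+1}-a_i \ge 2$ (with $i=1$ for part (b)). Here Lemma~\ref{LMA:k.22} supplies two $x$-packets $q,q'$, for some $x \ge 2$, that lie in $MF$'s buffer simultaneously and satisfy $g(q)=a_i$ and $g(q')=a_{i+1}$. Applying Lemma~\ref{LMA:k.21} to the pair $(q,q')$ — whose gap $a_{i+1}-a_i$ is at least $2$, as the lemma requires — bounds the number of $x$-packets accepted by $OPT$ with block number in $[a_i,a_{i+1}-1]$. Because $a_1=1$ by Lemma~\ref{LMA:k.035}, part (b) is exactly the $g(q)=1$ case of Lemma~\ref{LMA:k.21}, yielding $4B+A-3$; for part (c) we have $i \ge 2$, hence $a_i \ge a_2 \ge 2$, so the $g(q) \ge 2$ case applies and yields $5B+A-4$. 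By the reduction above, these are the claimed bounds on the completed-frame counts.

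What remains is the degenerate subcase $a_{i+1}-a_i=1$ of parts (b) and (c), together with the whole of part (a): in each the block-number range is a single value $u$ (namely $u=a_i$, or $u=M$ for part (a)), so Lemmas~\ref{LMA:k.22} and~\ref{LMA:k.21} no longer apply. For these I instead take $x=1$ in the reduction: the number of completed frames with block number $u$ equals the number of $1$-packets accepted by $OPT$ with block number $u$. All such $1$-packets are decided inside the time window of block $u$, during which $GR_1$ accepts at most $3B$ of them (exactly $3B$ when $u<M$, since that block is then complete, and at most $c \le 3B$ when $u=M$). Feeding this window to Lemma~\ref{LMA:k.20} bounds $OPT$'s accepted $1$-packets by $3B+(B-1)=4B-1$, and by $3B$ when $u=1$ via the ``from the beginning'' strengthening of Lemma~\ref{LMA:k.20}. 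Since $A \ge 2$ and $B \ge 2k \ge 4$, we get $4B-1 \le 5B+A-4$ (degenerate part (c) and part (a)) and $3B \le 4B+A-3$ (degenerate part (b)), matching the stated bounds.

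I expect the main obstacle to be pinning down the block-$u$ window in this last step. The delicate point is that a $1$-packet can acquire block number $u$ and yet be rejected by $GR_1$ if $GR_1$'s queue happens to be full just after block $u-1$ closes; such packets are decided before the first $GR_1$-accepted $1$-packet $p$ of the block. I must choose the interval $[t_1,t_2]$ passed to Lemma~\ref{LMA:k.20} so that every block-$u$ $1$-packet that $OPT$ accepts falls inside $[t_p-,t_2]$, and check that the head-start quantity appearing in the proof of Lemma~\ref{LMA:k.20} (the surplus of $GR_1$'s $1$-packets over $OPT$'s at $t_p-$) is at most $B-1$, so the count is $4B-1$ rather than $4B$. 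For the final competitive-ratio calculation the weaker $4B$ would in fact still suffice, so the tightness here is a matter of matching the lemma as stated rather than a genuine threat to the theorem.
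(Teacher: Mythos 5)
Your proposal follows the paper's own proof essentially step for step: the same reduction from $OPT$'s completed frames to $OPT$-accepted packets of one fixed type (using that $OPT$ neither preempts nor accepts packets of incompleted frames), Lemmas~\ref{LMA:k.22} and~\ref{LMA:k.21} for the subcase $a_{i+1}-a_{i}\geq 2$ with the $g(p)=1$ versus $g(p)\geq 2$ split giving $4B+A-3$ and $5B+A-4$, and a $GR_{1}$-window application of Lemma~\ref{LMA:k.20} with $w=3B$ (bound $4B-1$, with the ``slight modification'' for $u=M$) for part (a) and the degenerate single-block cases, where $4B-1\leq 4B+A-3\leq 5B+A-4$ since $A\geq 2$. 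The corner case you flag --- block-$u$ 1-packets rejected by $GR_{1}$ and decided before the block's first $GR_{1}$-accepted packet $q$ --- is in fact silently elided by the paper as well (it asserts all block-$u$ packets arrive in $[t_{q}-,t_{q''}-]$ without justifying the left endpoint), and it is harmless for the reason you give: such packets can only arrive in the same arrival subphase as the $3B$th accepted packet of block $u-1$ that increments {\tt Block} (after any delivery $GR_{1}$'s buffer is no longer full, so the next arriving 1-packet is accepted and equals $q$), so widening the window to include that packet yields $w=3B+1$ and at worst $4B$, which still supports the theorem's final calculation.
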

%
%
\begin{proof}
	\ifnum \count10 > 0
	Lemma~\ref{LMA:k.22}より、
	$a_{j'}(\tau) + 2 \leq a_{j'+1}(\tau)$
	が成立する様な任意の$j' \in [1, m(\tau)-1]$、
	あるnon-event time $d'(< \tau)$、ある$x (\in [2, k])$に対して、
	$d'$に、
	通し番号が$a_{j'}(\tau)$の$x$-packet $p$
	と
	通し番号が$a_{j'+1}(\tau)$の$x$-packet $p'$
	が
	共に$MF$のバッファ内に存在する。
	Lemma~\ref{LMA:k.21}より、
	任意の$y \in [2, k]$に対して、
	$a_{1}(\tau) + 2 \leq a_{2}(\tau)$が成立する様な
	通し番号が$a_{1}(\tau)$から$a_{2}(\tau)-1$である様な$OPT$の$y$-packetの数は、高々$4B + A - 3$である。
	また、
	$a_{j}(\tau) + 2 \leq a_{j+1}(\tau)$が成立する様な
	任意の$j \in [2, m(\tau)-1]$に対して、
	通し番号が$a_{j}(\tau)$から$a_{j+1}(\tau)-1$である様な$OPT$の$y$-packetの数は、
	高々$5B + A - 4$である。
	また、
	任意の通し番号$u$に対して、
	$GR_{1}$がacceptする通し番号$u$の最初のパケットを$q$、
	$GR_{1}$がacceptする通し番号$u$の最後、すなわち、$3B$番目のパケットを$q'$とする。
	このとき、
	$t_{1} = t_{q}-$、
	$t_{2} = t_{q'}+$、
	すなわち、
	$w = 3B$として
	補題~\ref{LMA:k.20}を適用すると、
	$OPT$の任意の通し番号の1-packetの数は、
	高々$3B + B - 1 = 4B - 1$である。
	\com{（■（$\gamma+B-1$）}
	よって、
	$OPT$の各通し番号の$y$-packetの数は、高々$4B - 1$である。
	すなわち、
	$j = m$
	と
	$a_{j}(\tau) + 1 = a_{j+1}(\tau)$が成立する様な
	任意の$j \in [2, m(\tau)-1]$に対して、
	通し番号が$a_{j}(\tau)$である様な$OPT$の$y$-packetの数は、
	高々$4B - 1$である。
	よって、
	任意の$j \in [2, m(\tau)-1]$に対して、
	通し番号が$a_{j}(\tau)$から$a_{j+1}(\tau)-1$である様な$OPT$の$y$-packetの数は、
	高々$5B + A - 4$である。
	仮定より、
	$OPT$はcompleteなフレームに属するpacketのみacceptする。
	よって、
	題意が満たされる。
	\fi
	\ifnum \count11 > 0
	\com{（■英語）}
	Fix the block number $u (\neq M)$. 
	We count the number of 1-packets $p$ accepted by $OPT$ such that $g(p)=u$. 
	Note that the number of 1-packets with block number $u$ accepted by $GR_{1}$ is at most $3B$. 
	Let $q$ ($q'$) be the first (last, i.e., $3B$th) 1-packet accepted by $GR_{1}$ with block number $u$. 
	Also, let $q''$ be the first 1-packet accepted by $GR_{1}$ after $t_{q'}+$. 
	Then $q''$ has the block number $u+1$ by definition, 
	and hence any packet with block number $u$ arrive during time $[t_{q}-, t_{q''}-]$. 
	By applying Lemma~\ref{LMA:k.20} with $t_{1} = t_{q}-$ and $t_{2} = t_{q''}-$, i.e., $w=3B$, 
	the number of 1-packets $p$ accepted by $OPT$ such that $g(p)=u$ is at most $3B+B-1=4B-1$. 
	When $u=M$, 
	the same upper bound can be obtained by a slight modification of the above argument. 
	We use this fact several times in the following.
	(a) 
	By the above discussion, the number of 1-packets $p$ accepted by
	$OPT$ such that $g(p)=a_{m}$ is at most $4B-1$.  Since $OPT$ never
	preempts a packet and $OPT$ accepts only packets of completed frames by
	assumption, the number of frames $f$ completed by $OPT$ such that
	$g(f)=a_{m}$ is at most $4B-1$.
	(b) 
	In the case of $a_{1}=a_{2}-1$, by the same argument as (a) we can
	conclude that the number of completed frames is at most $4B-1 \leq 4B+A-3$. 
	If $a_{1}+2 \leq a_{2}$, 
	we know by Lemma~\ref{LMA:k.22} that
	two $x$-packets $\hat{p}$ and $\tilde{p}$ such that 
	$g(\hat{p})=a_{1}$ and $g(\tilde{p})=a_{2}$ are stored in $MF$'s buffer at the same time. 
	Then
	by Lemma~\ref{LMA:k.21}, 
	the number of $x$-packets $p$ accepted by $OPT$ such that $g(p) \in [a_{1}, a_{2}-1]$ is at most $4B+A-3$ (recall that
	$a_{1}=1$). 
	By the same argument as above, 
	we can conclude that the number of frames completed by $OPT$ such that 
	$g(f) \in [a_{1}, a_{2}-1]$ is also at most this number.
	(c) 
	The argument is almost the same as (b) and hence is omitted.
	\fi
\end{proof}
%
%

\section{Lower Bound for Deterministic Algorithms} \label{sec:LB}
\ifnum \count10 > 0
本節では、
決定性アルゴリズムに対する下限を示す。
\fi
\ifnum \count11 > 0
In this section, 
we show a lower bound for any deterministic algorithm. 
\fi
%
%
\ifnum \count14 > 0
%
The proof of the following theorem is included in Appendix~\ref{sec:ap.3}. 
\fi
\begin{THM}\label{thm:2}
	\ifnum \count10 > 0
	$k \geq 2$とする。、
	$B \geq k-1$ならば、
	任意の決定性アルゴリズムに対する競合比は少なくとも$\frac{2B}{\lfloor {B/(k-1)} \rfloor} + 1$である。
	また、
	$B \leq k-2$ならば、
	任意の決定性アルゴリズムに対する競合比は発散する。
	\fi
	\ifnum \count11 > 0
	\com{（■英語）}
	Suppose that $k \geq 2$. 
	The competitive ratio of any deterministic algorithm is 
	at least $\frac{2B}{\lfloor {B/(k-1)} \rfloor} + 1$ 
	if $B \geq k-1$, and unbounded if $B \leq k-2$. 
	\fi
\end{THM}
\ifnum \count14 < 1
\begin{proof}
	\ifnum \count10 > 0
	オンラインアルゴリズム$ALG$を固定して考える。
	次の様な入力$\sigma$を考える。
	phase $0$に$2B$個の1-パケットが到着する。
	このとき、$ALG$は$x (\leq B)$個のパケットを受理する。
	一方で、$OPT$は$ALG$が受理しないパケットを$B$個受理する。
	$ALG$が受理する$x$個のパケットの集合を$C$と呼び、
	$OPT$が受理する$B$個のパケットの集合を$D$と呼ぶ。
	（図~\ref{fig:LBCR2km1}参照。）
	$B$回の送信サブフェイズの後、
	$B + \lfloor \frac{B}{k-1} \rfloor$個の1-パケットが到着する。
	このとき、$ALG$は$y (\leq B)$個のパケットを受理する。
	一方で、$OPT$は$ALG$が受理しないパケットを$\lfloor \frac{B}{k-1} \rfloor$個だけ受理する。
	$ALG$が受理する$y$個のパケットの集合を$E$と呼び、
	$OPT$が受理する$\lfloor \frac{B}{k-1} \rfloor$個のパケットの集合を$F$と呼ぶ。
	$B$回の送信サブフェイズの後、
	phase $2B$に$2B$個の1-パケットが到着する。
	このとき、$ALG$は$z (\leq B)$個のパケットを受理する。
	一方で、$OPT$は$ALG$が受理しないパケットを$B$個受理する。
	$ALG$が受理する$z$個のパケットの集合を$G$と呼び、
	$OPT$が受理する$B$個のパケットの集合を$H$と呼ぶ。
	以上で、入力中の全ての1-packetは到着した。
	時刻$2B$より後は、各$j( \geq 2)$-packetが到着する。
	各$j = 2, ..., k$に対して、
	phase $3B + (j - 2)B = (j+1)B$に、
	パケット集合$D$の$B$個の$j$-パケットが到着し、
	$OPT$はそれらを全て受理し送信する。
	一方、
	phase $(k + 2)B$に、
	パケット集合$C, E, F, G$の2-パケットから$k$-パケット全てが一度に到着する。
	このとき、
	$OPT$は、$F$に対応する$\lfloor \frac{B}{k-1} \rfloor (k-1) \leq B$個のpacketをacceptする。
	一方、
	$ALG$がaccpetできるそれらのpacketは高々$B$個であるので、
	$ALG$がcompleteできるフレームの数は高々$\lfloor \frac{B}{k-1} \rfloor$個である。
	その到着フェイズの後、
	パケット集合$H$の2-パケットから$k$-パケットが到着し、
	$OPT$はそれらを全て受理し送信する。
	以上より、
	$V_{ALG}(\sigma) \leq \lfloor \frac{B}{k-1} \rfloor$
	と
	$V_{OPT}(\sigma) = 2B + \lfloor \frac{B}{k-1} \rfloor$
	が成立する。
	よって、
	$B \geq k-1$ならば、
	$\frac{V_{OPT}(\sigma)}{V_{ALG}(\sigma)} 
		\geq \frac{2B + \lfloor \frac{B}{k-1} \rfloor}{\lfloor \frac{B}{k-1} \rfloor} = \frac{2B}{\lfloor \frac{B}{k-1} \rfloor} + 1$が成立する。
	また、
	$B \leq k-2$ならば、
	競合比は発散する。
	\fi
	\ifnum \count11 > 0
	\com{（■英語）}
	Fix an online algorithm $ALG$. 
	Let us consider the following input $\sigma$. 
	(See Figure~\ref{fig:LBCR2km1}.) 
	At the 0th phase, 
	$2B$ 1-packets arrive. 
	$ALG$ accepts at most $B$ 1-packets, 
	and $OPT$ accepts $B$ 1-packets that are not accepted by $ALG$. 
	Let $C$ ($D$, respectively) be the set of the 1-packets accepted by $ALG$ ($OPT$, respectively). 
	At the $i$th phase ($i \in [1,B-1]$), 
	no packets arrive. 
	Hence, just after the $(B-1)$st phase, 
	both $ALG$'s and $OPT$'s queues are empty 
	(since $B$ delivery subphases occur).
	At the $B$th phase, 
	$B + \lfloor \frac{B}{k-1} \rfloor$ 1-packets arrive in the same manner as the first $2B$ 1-packets. 
	$ALG$ can accept at most $B$ 1-packets, and 
	$OPT$ accepts $\lfloor \frac{B}{k-1} \rfloor$ 1-packets that are not accepted by $ALG$. 
	Let $E$ ($F$, respectively) be the set of the packets accepted by $ALG$ ($OPT$, respectively). 
	At the $i$th phase ($i \in [B+1,2B-1]$), 
	no packets arrive, 
	and both $ALG$'s and $OPT$'s queues are empty
	just after the $(2B-1)$st phase. 
	Once again
	at the $2B$th phase, $2B$ 1-packets arrive. 
	$ALG$ accepts at most $B$ 1-packets, 
	and $OPT$ accepts $B$ 1-packets that are not accepted by $ALG$. 
	Let $G$ ($H$, respectively) be the set of the 1-packets accepted by $ALG$ ($OPT$, respectively). 
	This is the end of the arrivals and deliveries of 1-packets.
	At the $i$th phase ($i \in [2B+1,3B-1]$), 
	no packets arrive, 
	and hence just before the $3B$th phase, 
	both $ALG$'s and $OPT$'s queues are empty. 
	For each $j = 2, ..., k$, 
	the $B$ $j$-packets corresponding to 1-packets in $D$ arrive at the $(j+1)B$th phase. 
	$OPT$ accepts and transmits them.
	(There is no incentive for $ALG$ to accept them.)
	Next, all the packets corresponding to all the 1-packets in $C \cup E \cup F \cup G$ arrive at the $(k+2)B$th phase. 
	Since $ALG$ needs to accept all the $k-1$ packets of the same frame to complete it, 
	the number of frames $ALG$ can complete is at most $\lfloor \frac{B}{k-1} \rfloor$. 
	$OPT$ accepts all the $\lfloor \frac{B}{k-1} \rfloor (k-1)$ packets corresponding to all the 1-packets in $F$. 
	Note that this is possible because $\lfloor \frac{B}{k-1} \rfloor (k-1) \leq B$. 
	Hence, $OPT$ completes all the $\lfloor \frac{B}{k-1} \rfloor$ frames of $F$. 
	After which 
	all the packets corresponding to 1-packets in $H$ arrive one after the other, and 
	$OPT$ can accept and transmit them. 
	Note that the input sequence is order-respecting.
	By the above argument, 
	we have $V_{ALG}(\sigma) \leq \lfloor \frac{B}{k-1} \rfloor$
	and 
	$V_{OPT}(\sigma) = 2B + \lfloor \frac{B}{k-1} \rfloor$. 
	Therefore, 
	if $B \geq k-1$, 
	$\frac{V_{OPT}(\sigma)}{V_{ALG}(\sigma)} \geq \frac{2B}{\lfloor \frac{B}{k-1} \rfloor} + 1$. 
	If $B \leq k-2$, 
	the competitive ratio of $ALG$ is unbounded. 
	\fi
\end{proof}
\fi
\ifnum \count14 < 1
\ifnum \count12 > 0
\begin{figure*}
	 \begin{center}
	  \includegraphics[width=150mm]{./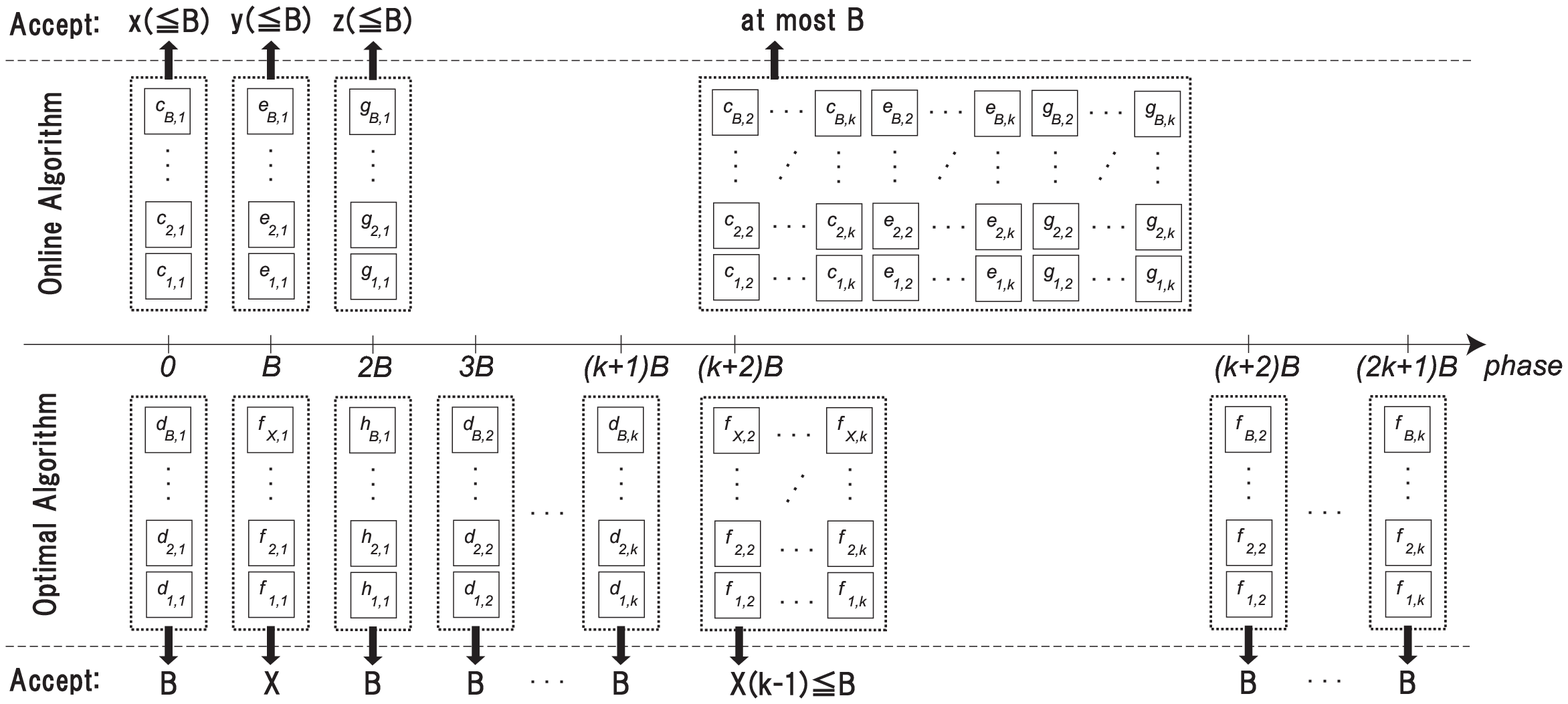}
	 \end{center}
	 \caption{Lower Bound Instance. 
	 	Each square denotes an arriving packet accepted by an online algorithm or $OPT$. 
		In the figure $X = \lfloor \frac{B}{k-1} \rfloor$. 
	 	}
	\label{fig:LBCR2km1}
\end{figure*}
\fi
\fi
%

\section{Lower Bound for Randomized Algorithms} \label{sec:rand}
%

%
\ifnum \count14 > 0
%
The proof of the following theorem is included in Appendix~\ref{sec:ap.3}.
\fi
\begin{THM}\label{thm:4}
	\ifnum \count10 > 0
	$k \geq 3$に対して、
	任意の確率アルゴリズムの競合比は少なくとも$k-1$である。
	\fi
	\ifnum \count11 > 0
	\com{（■英語）}
	When $k \geq 3$, 
	the competitive ratio of any randomized algorithm is at least $k - 1 - \epsilon$ for any constant $\epsilon$ against an oblivious adversary.
	\fi
\end{THM}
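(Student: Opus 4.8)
The plan is to prove this randomized lower bound via Yao's minimax principle: it suffices to exhibit a single probability distribution $\mathcal{D}$ over order-respecting input sequences such that every \emph{deterministic} online algorithm $ALG$ satisfies ${\mathbb E}_{\sigma \sim \mathcal{D}}[V_{ALG}(\sigma)] \le \frac{1}{k-1-\epsilon}\,{\mathbb E}_{\sigma \sim \mathcal{D}}[V_{OPT}(\sigma)]$. Since an oblivious adversary fixes its (randomized) input before $ALG$'s coin tosses, this inequality for all deterministic $ALG$ immediately yields the claimed bound for all randomized algorithms.

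The distribution $\mathcal{D}$ I would use is built around a hidden target chosen uniformly at random from $k-1$ equally likely, mutually indistinguishable scenarios. Concretely, I would introduce $(k-1)B$ frames whose $1$-packets all arrive in a first block of phases, in a fixed index order, and partition them into $k-1$ consecutive groups $W_1,\dots,W_{k-1}$ of $B$ frames each. Then draw $r$ uniformly from $\{1,\dots,k-1\}$. The remaining packets (the $2$- through $k$-packets of every frame) are released on staggered, index-consistent schedules so that (i) the frames of the target group $W_r$ receive their later packets spaced out enough to be completed within a buffer of size $B$, while (ii) the frames of every non-target group are ``sabotaged'': their later packets are delayed and bunched into bursts far larger than $B$, so that no algorithm can complete them. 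Crucially, the schedule is arranged so that all $k-1$ groups behave identically up to the moment the later packets begin to arrive, i.e., up to the point at which $ALG$ has already had to decide irrevocably which $1$-packets to keep.

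For the analysis, $OPT$ knows $r$ and simply commits its whole buffer to $W_r$: it accepts the $B$ one-packets of $W_r$ and, that group being spaced to fit, subsequently accepts and transmits all of their later packets, so $V_{OPT}(\sigma) = B$ for every outcome of one gadget. For a deterministic $ALG$, the key observation is that its behavior on the common prefix is independent of $r$; in particular the set $S$ of $1$-packets it retains (of size at most $B$, by the buffer bound) is the same across all scenarios. Since a non-target group is uncompletable, $ALG$ completes at most $|S \cap W_r|$ frames, and averaging over the uniform target gives ${\mathbb E}_r[\,|S \cap W_r|\,] = \frac{1}{k-1}\sum_{i=1}^{k-1} |S \cap W_i| = \frac{|S|}{k-1} \le \frac{B}{k-1}$. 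Hence ${\mathbb E}[V_{ALG}] \le \frac{B}{k-1}$ up to lower-order terms, and the ratio is at least $k-1$. Repeating the gadget many times in sequence (draining the buffer between repetitions) makes the additive lower-order slack negligible relative to the growing main term and yields $k-1-\epsilon$ for any constant $\epsilon>0$.

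The main obstacle is realizing step (ii) within the order-respecting model: because the arrival order of the $j$-packets must replicate the arrival order of the $1$-packets for every $j$, the adversary controls only the arrival \emph{times}, which must stay monotone in the frame index. I would therefore keep the groups contiguous in index and choose the staggering so that the sabotaging bursts of the non-target groups genuinely overflow the buffer (and exhaust the available delivery events) while leaving the target group drainable, and I would verify that $ALG$ cannot exploit adaptivity after the target is revealed to recover more than $|S \cap W_r|$ completions. Checking order-respecting feasibility of this schedule, and confirming that the indistinguishability of the common prefix is exact, are the delicate points; the averaging step and the computation of $V_{OPT}$ are then routine.
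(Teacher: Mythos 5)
Your framing via Yao's principle is legitimate (and the paper's device of choosing the good group as the argmin of the \emph{expected} acceptances $A_i$ is essentially a derandomized version of the same idea, valid for an oblivious adversary), but your construction has a fatal gap: with only $B$ frames per group, the non-target groups are \emph{not} uncompletable, and the claim that $ALG$ completes at most $|S \cap W_r|$ frames fails. A burst can never prevent acceptance outright --- if all $(k-1)B$ later packets of a bad group $W_i$ arrive in one phase into a (drained) buffer, $ALG$ can still accept $B$ of them and complete $\lfloor B/(k-1) \rfloor$ frames of $W_i$. This exactly matches the hedging strategy your gadget invites: $ALG$ keeps $B/(k-1)$ one-packets from \emph{each} group at phase $0$, transmits them, and then, once $r$ is revealed, accepts the $k-1$ later packets of each of its surviving frames in every group (that is $B$ packets per burst, which fits the buffer since your bursts are time-separated). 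This completes roughly $(k-1)\cdot B/(k-1) = B$ frames, i.e.\ $V_{ALG} \approx V_{OPT}$, so the single gadget yields only a constant ratio. Repetition with fresh random targets cannot rescue this, because the bad-group completions are proportional to the main term \emph{per gadget}: both grow linearly in the number of repetitions, and the ``slack'' never becomes negligible.

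The paper's proof shows what is actually needed. Each group has $yB$ frames (not $B$), and the packets are scheduled in a diagonal pattern of rounds so that, over a long window of $y$ subrounds, all $k-1$ groups simultaneously present $B$ packets per subround; the buffer constraint then caps the algorithm's total investment at $B$ per subround, hence $\sum_i A_i \leq yB$ over the whole horizon, and the group $z$ with minimum expected investment satisfies $A_z \leq yB/(k-1)$. The bad groups' $k$-packets each arrive in a \emph{single} burst for the entire input, so their total contribution is at most $(k-2)B$ --- a one-time additive cost, not a per-gadget one --- while $OPT$ completes all $yB$ frames of $F(z)$. Taking $y$ large then gives $k-1-\epsilon$. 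In short, the hidden choice must constrain the algorithm's budget repeatedly over a horizon that scales with $y$ under a \emph{single} hidden index, via simultaneous arrivals from all groups; a one-shot split of $B$ one-packets followed by per-group bursts, as in your proposal, provably cannot separate $ALG$ from $OPT$ by more than a constant factor.
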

\ifnum \count14 < 1
\begin{proof}
	\ifnum \count10 > 0
	まず、本証明を概説する。
	オンライン確率アルゴリズムには$k-1$通りの選択肢をもつ入力が与えられる。
	$k-1$通りの選択肢の中には、1つだけ多くのframeをcomplete出来る選択肢がある（called good）。
	のこりの$k-2$の選択肢は、どんなに頑張ってもそれと比べてごくわずかのframeしかcomplete出来ない。
	オンラインアルゴリズムには選択を終えた後に、その結果が提示される。
	オンラインアルゴリズムは、もちろん、good選択肢を知らないが、
	$OPT$はその選択肢をしっている。
	例えば、1番目の選択肢がgoodだとしよう。
	オンラインはたまたま1番目の選択肢を重視しているかもしれない。
	そこで、
	更に、
	$k-1$個の全く同じ選択肢をもち、当たりの選択肢だけが異なる$k-1$通りの入力を考えよう。
	オンラインアルゴリズムが選択を行う前の段階では、
	$k-1$個の入力は全て同じ入力にみえる様になっている。
	すなわち、
	1番目の選択肢を重視するアルゴリズムは、
	残りの$k-2$個の入力において大損する可能性があるのである。
	用語を定義する。
	任意のevent time $t$、任意の整数$x \in [1, k]$に対して、
	に$x$-パケットが$B$個到着し、
	そのあと、$t$から$t + B-1$まで$B$回送信フェイズが実行される。
	この部分入力列を{\em $\boldsymbol{x}$-サブラウンド}と呼ぶ。
	十分大きな整数$y$に対して、
	$x$-サブラウンドが$y$回繰り返された場合、
	その部分入力列を{\em $\boldsymbol{x}$-ラウンド}と呼ぶ。
	（図~\ref{fig:LBR1}参照。）
	$1$-ラウンドから$k$-ラウンドまで連続した部分入力列を、
	{\em 良いラウンド}と呼ぶ。
	$1$-ラウンドから$k-1$-ラウンドまで連続した後に、
	それらの$By$個の$k$-パケットが全て一度に到着する入力を、
	{\em 悪いラウンド}と呼ぶ。
	それらを踏まえて、
	次の様$\sigma$な入力を考える。
	$\sigma$は、$k-1$個のラウンド：
	$1$個の良いラウンドと$k-2$個の悪いラウンドから構成されている。
	具体的には、$i \in [1, k-2]$に対して、
	$i+1$番目のラウンドは、$i$番目のラウンドの$1$-サブラウンドが終了した直後に開始される。
	よって、
	ある時刻に
	$1$番目のラウンドの$k-1$-サブラウンド、
	$2$番目のラウンドの$k-2$-サブラウンド、
	$3$番目のラウンドの$k-3$-サブラウンド、
	$\cdots$、
	$k-1$番目のラウンドの$1$-サブラウンドが同時に開始される。
	（図~\ref{fig:LBR2}参照。）
	このとき、
	オンラインアルゴリズム$ALG$はどのラウンドが良いフレームなのか分からないので、
	良いラウンドの$yB$個のフレームのうち、
	高々$yB/(k-1)$個しか完成させることが出来ない。
	一方で、$OPT$は良いラウンドの全てのフレームを構成するパケットを受理することが出来る。
	よって、
	$\frac{V_{OPT}(\sigma)}{{\mathbb E}[V_{ALG}(\sigma)]} \geq \frac{yB}{yB / (k-1)} = k - 1$が成立する。
	\fi
	\ifnum \count11 > 0
	\com{（■英語）}
	Fix an arbitrary randomized online algorithm $ALG$.  Let $y$ be a large
	integer that will be fixed later.  Our adversarial input $\sigma$
	consists of $(k-1)yB$ frames.  These frames are divided into $k-1$
	groups each with $yB$ frames. 
	Also, frames of each group are divided into $y$ subgroups each with $B$ frames.  For each $i (\in [1,k-1])$ and
	$j (\in [1,y])$, let $F(i,j)$ be the set of frames in the $j$th subgroup
	of the $i$th group and let $F(i)=\cup_{j} F(i,j)$.  For each $x (\in
	[1,k])$, let $P(i,j,x)$ be the set of $x$-packets of the frames in
	$F(i,j)$ and let $P(i,x)=\cup_{j} P(i,j,x)$.
	
	We first give a very rough idea of how to construct the adversary.
	Among the $k-1$ groups defined above, one of them is a good group. 
	In the first half of the input (from phase $0$ to phase $(k-1)yB-1$), 
	the adversary gives packets to the online algorithm in such a way that the algorithm cannot distinguish the good group. 
	Also, since the buffer size is bounded,
	the algorithm must give up many frames during the first half; 
	only $yB$ frames can survive at the end of the first half. 
	In the second half of the input, 
	remaining packets are given in such a way that $k$-packets from the bad groups arrive at a burst, 
	while $k$-packets from the good group arrive one by one. 
	Hence, 
	if the algorithm is lucky enough to keep many packets of the good group (say, Group 1) at the end of the first half, then it can complete many frames eventually. 
	However, 
	such an algorithm behaves very poorly for an input in which Group 1 is bad. 
	Therefore, the best strategy of an online algorithm (even randomized one) is to keep equal number of frames from each group during the first half. 
	
	Before showing our adversarial input, we define a subsequence of an input. 
	For any $t$, suppose that $B$
	packets of $P(i,j,x)$ arrive at the $t$th phase and no packets arrive
	during $t+1$ through $(t+B-1)$st phases. 
	Let us call this subsequence a {\em subround of $P(i,j,x)$ starting at the $t$th phase}. 
	Notice that if we focus on
	a single subround, an algorithm can accept and transmit all the packets
	of $P(i,j,x)$ by the end of the subround.  A {\em round of $P(i,x)$
	starting at the $t$th phase} is a concatenation of $y$ subrounds of $P(i,j,x)$ ($j\in [1, y]$), 
	where each subround of $P(i,j,x)$ starts at the $(t+(j-1)B)$th phase.
	(See the left figure in Fig.~\ref{fig:LBR1}.)
	
	Our input consists of rounds of $P(i,x)$ starting at the $(i+x-2)yB$th phase, 
	for $i\in [1, k-1]$ and $x\in [1, k-1]$. 
	(See Fig.~\ref{fig:LBR2}.)
	Note that any two rounds $P(i,x)$ and $P(i',x')$ start simultaneously if $i + x = i' + x'$. 
	Currently, 
	the specification of the arrival of packets in $P(i,x)$ for $x=k$ is missing. 
	This is the key for the construction of our adversary and will be explained shortly.
	
	Consider $k-1$ rounds (of $P(1,k-1), P(2,k-2), \cdots, P(k-1,1)$) starting at the $(k-2)yB$th phase, 
	which occur simultaneously.  Note that for each
	$j$, at the $j$th subround of these $k-1$ rounds, $ALG$ can accept at
	most $B$ packets (out of $(k-1)B$ ones) because of the size constraint
	of the buffer.  For each $j \in [1,y]$, let $A_{i,j}$ denote the expected
	number of packets that $ALG$ accepts from $P(i,j,k-i)$.  By the above
	argument, we have that $\Sigma_{i} A_{i,j} \leq B$ and hence $\Sigma_{i}
	\Sigma_{j} A_{i,j} \leq yB$. 
	Let $A_{i}=\Sigma_{j}A_{i,j}$ and let $A_{z}$ be the minimum among $A_{1}, A_{2}, \cdots, A_{k-1}$ 
	(ties are broken arbitrarily). 
	Note that $A_{z} \leq \frac{yB}{k-1}$ since $\Sigma_{i} A_{i} = \Sigma_{i} \Sigma_{j} A_{i,j} \leq yB$.
	Also, note that since $A_{i}$ is an expectation, $z$ is determined only by the description of $ALG$ (and not by the actual behavior of $A$).
	
	We now explain the arrival of packets in $P(i,k)$ ($i\in [1, k-1]$).
	(See the right figure in Fig.~\ref{fig:LBR1}.)
	For $i\neq z$, all the $yB$ packets in $P(i,k)$ arrive simultaneously
	at the $(i+k-2)yB$th phase.  As for $i=z$, packets are given as a
	usual round, i.e., we have a round of $P(z,k)$ starting at
	$(z+k-2)yB$. 
	It is not hard to verify that this input is order-respecting.
	Also, 
	it can be easily verified that our adversary is oblivious because the construction of the input does not depend on the actual behavior of $ALG$. 
	Specifically, 
	$z$ depends on only the values of $A_{i,j}$ $(i \in [1, k-1], j \in [1, y])$, and 
	$\sigma$ can be constructed not with time but in advance. 
	
	First, note that $OPT$ can accept and transmit all the packets in
	$P(z,x)$ for any $x$.  Therefore, $OPT$ can complete all the $yB$
	frames in $F(z)$ and hence $V_{OPT}(\sigma) \geq yB$.  On the other
	hand, since all the packets in $P(i,k)$ ($i\neq z$) arrive
	simultaneously, $ALG$ can accept at most $B$ packets of them and hence can
	complete at most $B$ frames of $F(i)$ for each $i$.  As for $F(z)$,
	$ALG$ can complete at most $A_{z} \leq \frac{yB}{k-1}$ frames of them and
	hence ${\mathbb E}[V_{ALG}(\sigma)] \leq \frac{yB}{k-1}+(k-2)B$.  If we
	take $y \geq \frac{(k-1)^{2}(k-2)}{\epsilon}-(k-1)(k-2)$, we have that
	\begin{equation*}
		\frac{V_{OPT}(\sigma)}{{\mathbb E}[V_{ALG}(\sigma)]} \geq
		 \frac{yB}{(yB)/(k-1)+(k-2)B}
		 = k-1-\frac{(k-1)^{2}(k-2)}{y+(k-1)(k-2)}
		 \geq k-1-\epsilon. 
	\end{equation*}
	\fi
\end{proof}
%
\fi

%
\ifnum \count14 < 1
\ifnum \count12 > 0
\begin{figure*}
	 \begin{center}
	  \includegraphics[width=150mm]{./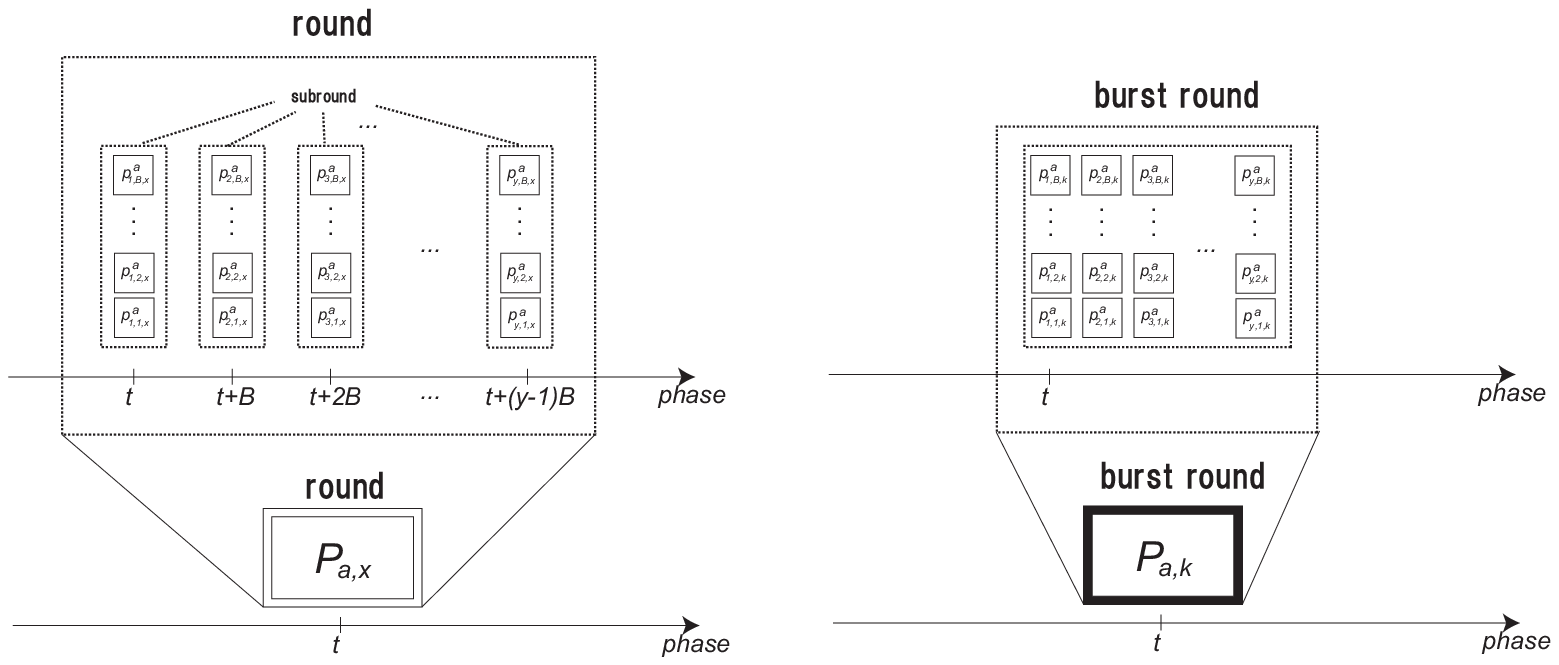}
	 \end{center}
	 \caption{
		$P(a,x)$ is written as $P_{a,x}$ in this figure. 
		The left figure shows a round of $P(a,x)$ 
		except for the case where $a \ne z$ and $x = k$. 
		On the other hand, 
		the right figure shows a round of $P(a,k)$ 
		for each $a (\in [1, k-1])$ such that $a \ne z$. 
	 }
	\label{fig:LBR1}
\end{figure*}
\begin{figure*}
	 \begin{center}
	  \includegraphics[width=150mm]{./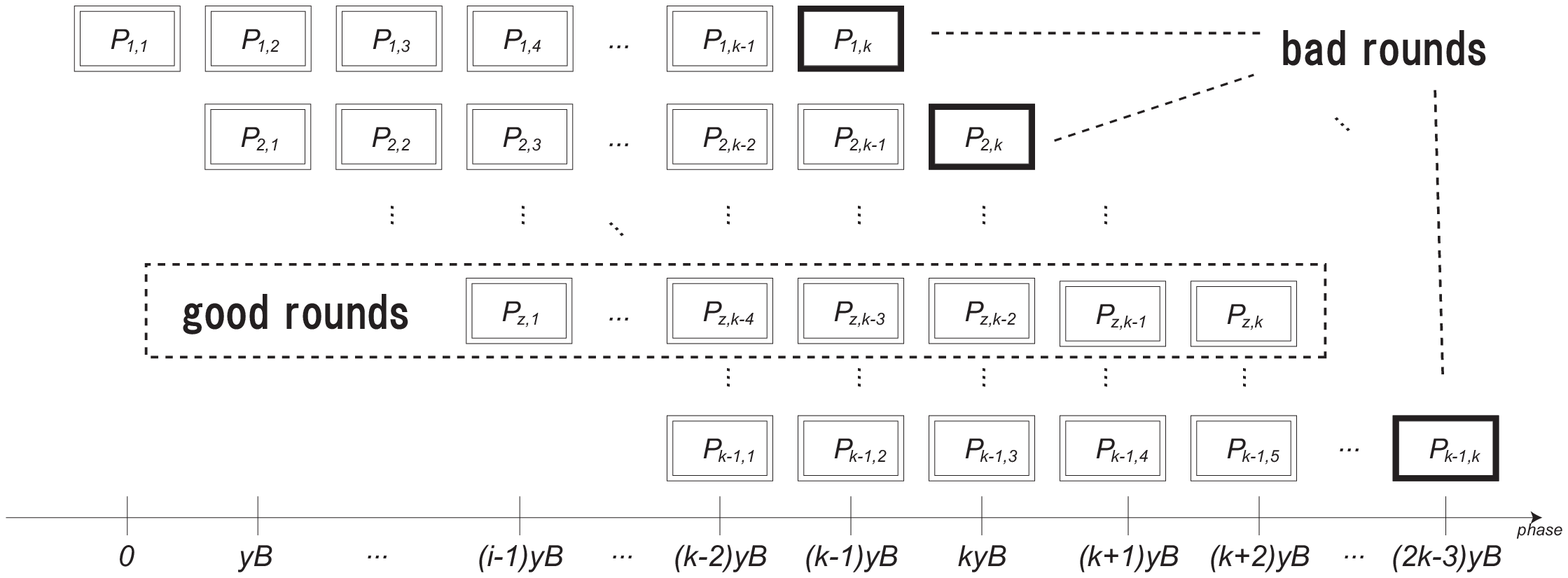}
	 \end{center}
	 \caption{
	 	Lower bound instance for randomized algorithms. 
		}
	\label{fig:LBR2}
\end{figure*}
\fi
%
\fi

\newpage

\newpage
\appendix

\section{Lower bound for SP} \label{sec:ap.1}

We give an input $\sigma$ for which $SP$'s competitive ratio is as bad
as $\Omega(k^{2})$.  Let $D = 3B$ and $N=3 \times 2^{k-1}$.  $\sigma$
consists of $NB$ frames $f_1,\ldots,f_{NB}$.  For any $i (\in [1,NB])$
and any $j (\in [1,k])$, let $p_{i,j}$ denote the $j$-packet of $f_i$.
Fig.~\ref{tab:spalower} shows a pseudocode of generating $\sigma$.  Note
that in $\sigma$, all the 1-packets arrive first.  After that, all the
2-packets arrive, then all the 3-packets arrive, and so on.  An example
of the arrival sequence of $\sigma$ for $k=5$ is depicted in
Figs.~\ref{fig:spa_mf_graph1} through \ref{fig:spa_mf_graph5}, corresponding
to 1- through 5-packets, respectively.  Each figure consists of two
graphs.  An upper graph shows the arrival phase of each packet, where
the horizontal axis shows the packet indices and vertical axis shows the
phases.  For example, Fig.~\ref{fig:spa_mf_graph1} shows that 1-packets
$p_{i,1}$ ($i\in [1,B]$) arrive at the $0$th phase, indicated as (1),
1-packets $p_{i,1}$ ($i\in [B+1,2B]$) arrive at the $B$th phase,
indicated as (2), and so on.

First, consider $SP$'s behavior.  Without loss of generality, we assume
that $SP$ prioritizes frames with smaller indices, e.g., if two packets
$p_{i,j}$ and $p_{i',j}$ with $i<i'$ arrive at the same time and $SP$ is
able to accept only one packet, then $SP$ accepts $p_{i,j}$ and rejects
$p_{i',j}$.  The lower graphs of Figs.~\ref{fig:spa_mf_graph1} through
\ref{fig:spa_mf_graph5} show the behavior of $SP$ and $OPT$. 
For example, 
Fig.~\ref{fig:spa_mf_graph1} shows that $SP$ accepts 1-packets $p_{i,1}$ ($i\in [1,A]$), 
indicated as (5'), $p_{i,1}$ ($i\in [B+1, B+A]$), 
indicated as (6'), and so on. 
Now, for each $w\in [1,N]$, $SP$ accepts $A$ 1-packets $p_{(w-1)B+1,1}, p_{(w-1)B+2,1},\ldots,p_{(w-1)B+A,1}$, hence
$NA$ 1-packets in total, and rejects the rest.  Next, for each integer
$j (\in [2,k])$ and each integer $y (\in [0,2^{k-j}-1])$, $SP$ accepts
$A$ $j$-packets
$p_{y2^{j-1}D+1,j},\ldots,\allowbreak{}p_{y2^{j-1}D+A,j}$ but rejects
others.  Note that the number of $j$-packets accepted by $SP$ is
$2^{k-j}A$.  In particular, the number of $k$-packets accepted by $SP$
is $A$. Therefore, $V_{SP}(\sigma)=A$.

Next, consider $OPT$'s behavior.  Let $b_1 = 0$ and $b_z =
\sum_{j=1}^{z-1}2^{k-j-1}D$ for each integer $z (\in [2,k-1])$.  $OPT$
completes $f_{b_{z}+D+1},\ldots,f_{b_{z}+D+B}$ for each integer $z (\in
[1,k-1])$. 
Therefore, $V_{OPT}(\sigma)=(k-1)B$ and
$\frac{V_{OPT}(\sigma)}{V_{SP}(\sigma)}=\frac{(k-1)B}{A}=\Omega(k^{2})$ since $A = \lfloor B/k \rfloor$.

On the other hand, 
$MF$ completes $f_{b_{z} +1},\ldots,f_{b_{z} + \lfloor A/2 \rfloor}$
for each integer $z (\in [1,k-1])$ and
$f_{b_{k}+ \lfloor A/2 \rfloor +1},\ldots,f_{b_{k} +A}$.
Therefore,
$V_{MF}(\sigma) = (k-1) \lfloor A/2 \rfloor + A - \lfloor A/2 \rfloor \ge k \lfloor A/2 \rfloor$
and $\frac{V_{OPT}(\sigma)}{V_{MF}(\sigma)} \le \frac{(k-1)B}{k \lfloor A/2 \rfloor}
\le \frac{B}{\lfloor A/2 \rfloor}$ hold.

\begin{figure}
\begin{center}
\begin{tabularx}{\textwidth}{|X|}
\hline
$t := 0$. \\
{\bf for} $w = 1,\ldots,N$ {\bf do} \\
\hspace{3mm} 1-packets $p_{(w-1)B+1,1},\ldots,p_{wB,1}$ arrive at the $t$th phase. \\
\hspace{3mm} $t := t + B$.\\
{\bf end for} \\
{\bf for} $j = 2,\ldots,k$ {\bf do} \\
\hspace{3mm} $t := (j-1)NB$. \\
\hspace{3mm} {\bf for} $y = 0,\ldots,2^{k-j}-1$ {\bf do} \\
    \hspace{6mm} $j$-packets $p_{y2^{j-1}D+1,j},\ldots,p_{y2^{j-1}D+2^{j-2}D+D,j}$ arrive at the $t$th phase. \\
    \hspace{6mm} $t := t + B$. \\
    \hspace{6mm} {\bf for} $x = 1,\ldots,2^{j-2}-1$ {\bf do} \\
        \hspace{9mm} $j$-packets $p_{y2^{j-1}D+2^{j-2}D+xD+1,j},\ldots,p_{y2^{j-1}D+2^{j-2}D+xD+D,j}$ arrive at the $t$th phase. \\
        \hspace{9mm} $t := t + B$.\\
    \hspace{6mm} {\bf end for} \\
\hspace{3mm} {\bf end for} \\
{\bf end for} \\
\hline
\end{tabularx}
\caption{Pseudocode of arriving packets in $\sigma$ for the example of $SP$.}
\label{tab:spalower}
\end{center}
\end{figure}

\ifnum \count12 > 0
\begin{figure*}
	 \begin{center}
	  \includegraphics[width=140mm]{./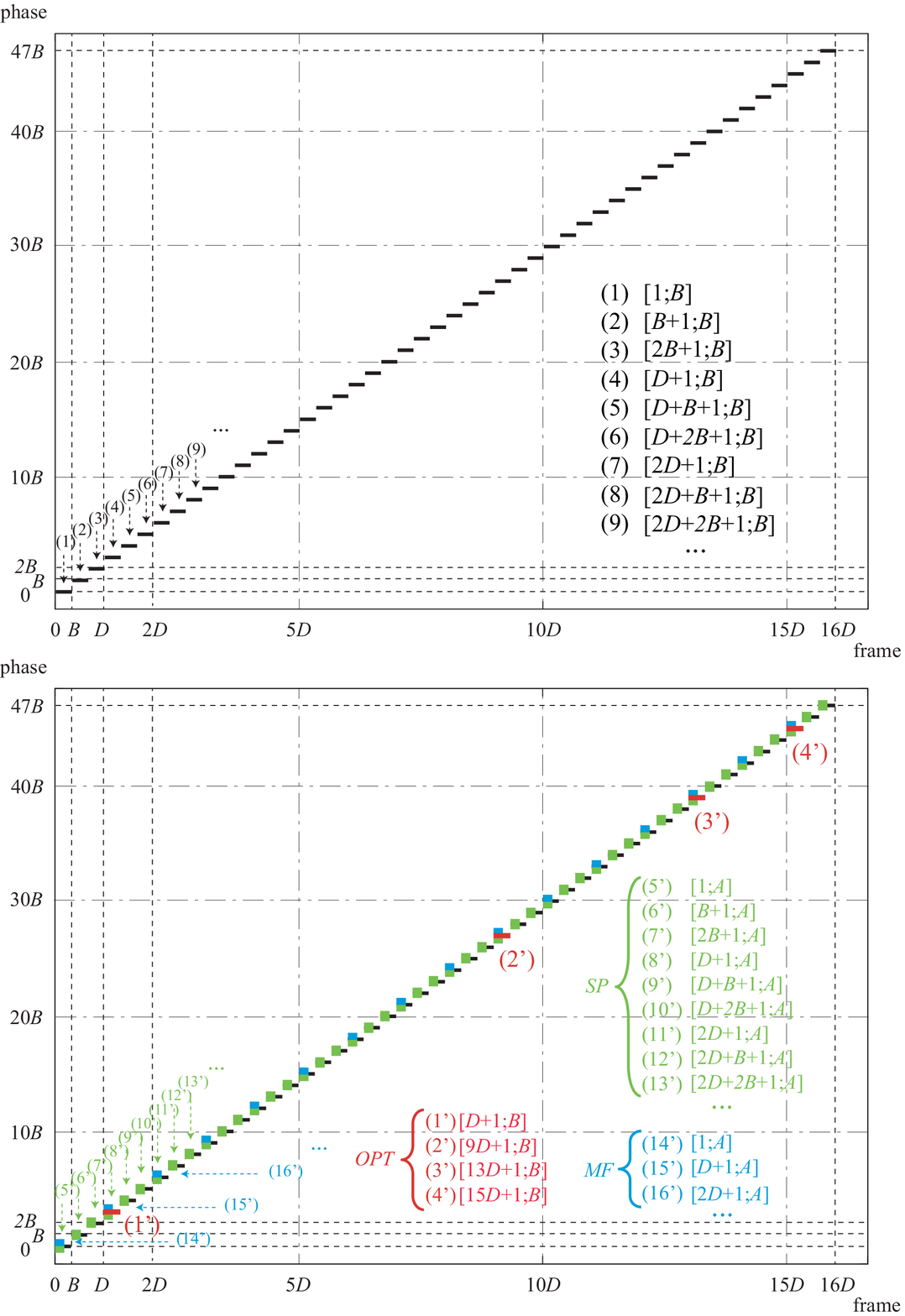}
	 \end{center}
	 \caption{Arriving 1-packets in $\sigma$ for the example of $SP$ for $k=5$. 
	 $[i; x]$ in the figure denotes all the 1-packets in $f_{i},\ldots,f_{i+x-1}$ for any integers $i$ and $x$. }
	\label{fig:spa_mf_graph1}
\end{figure*}
\fi
\ifnum \count12 > 0
\begin{figure*}
	 \begin{center}
	  \includegraphics[width=160mm]{./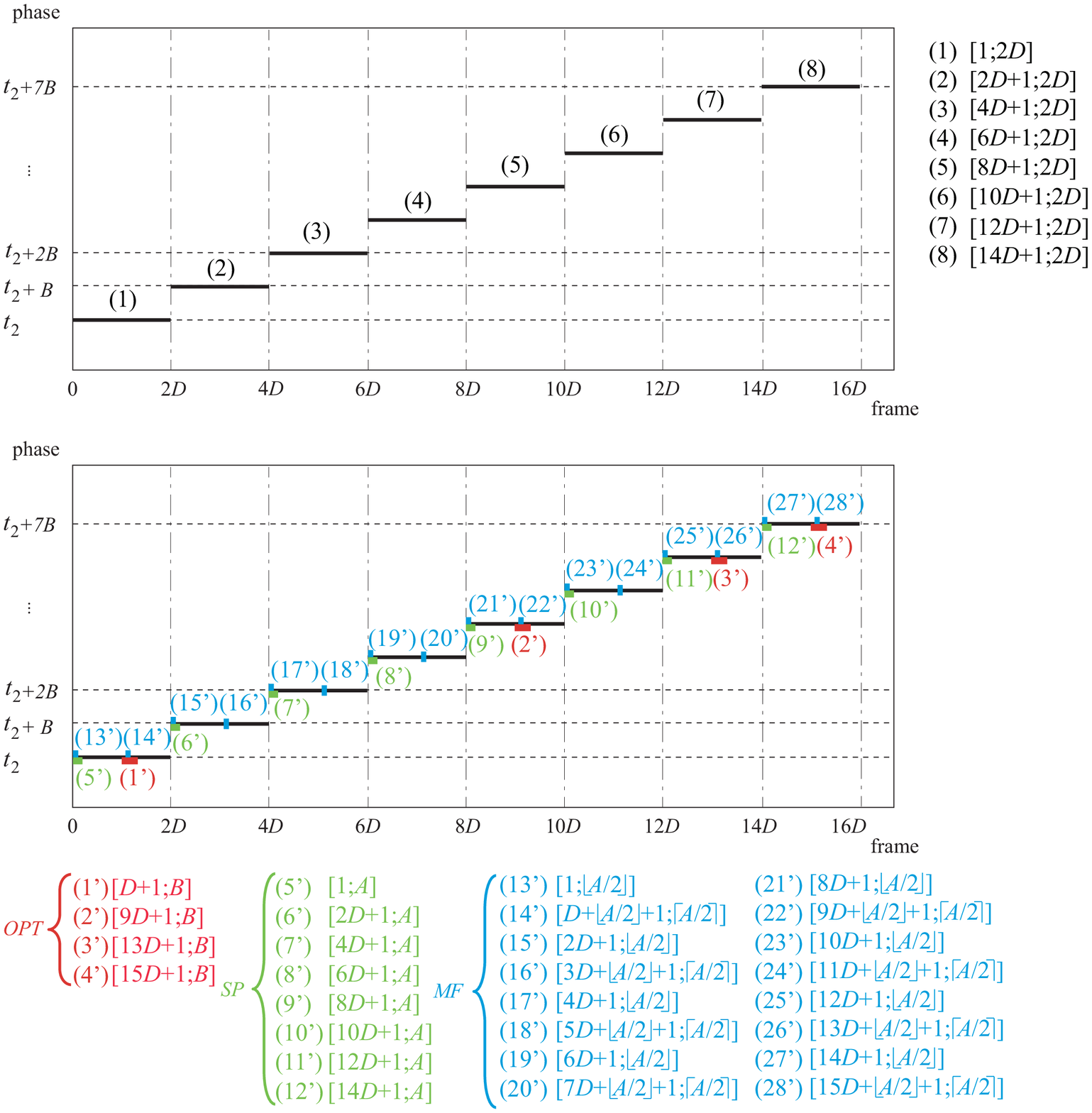}
	 \end{center}
	 \caption{Arriving 2-packets in $\sigma$ for the example of $SP$ for $k=5$.
	 }
	\label{fig:spa_mf_graph2}
\end{figure*}
\fi
\ifnum \count12 > 0
\begin{figure*}
	 \begin{center}
	  \includegraphics[width=160mm]{./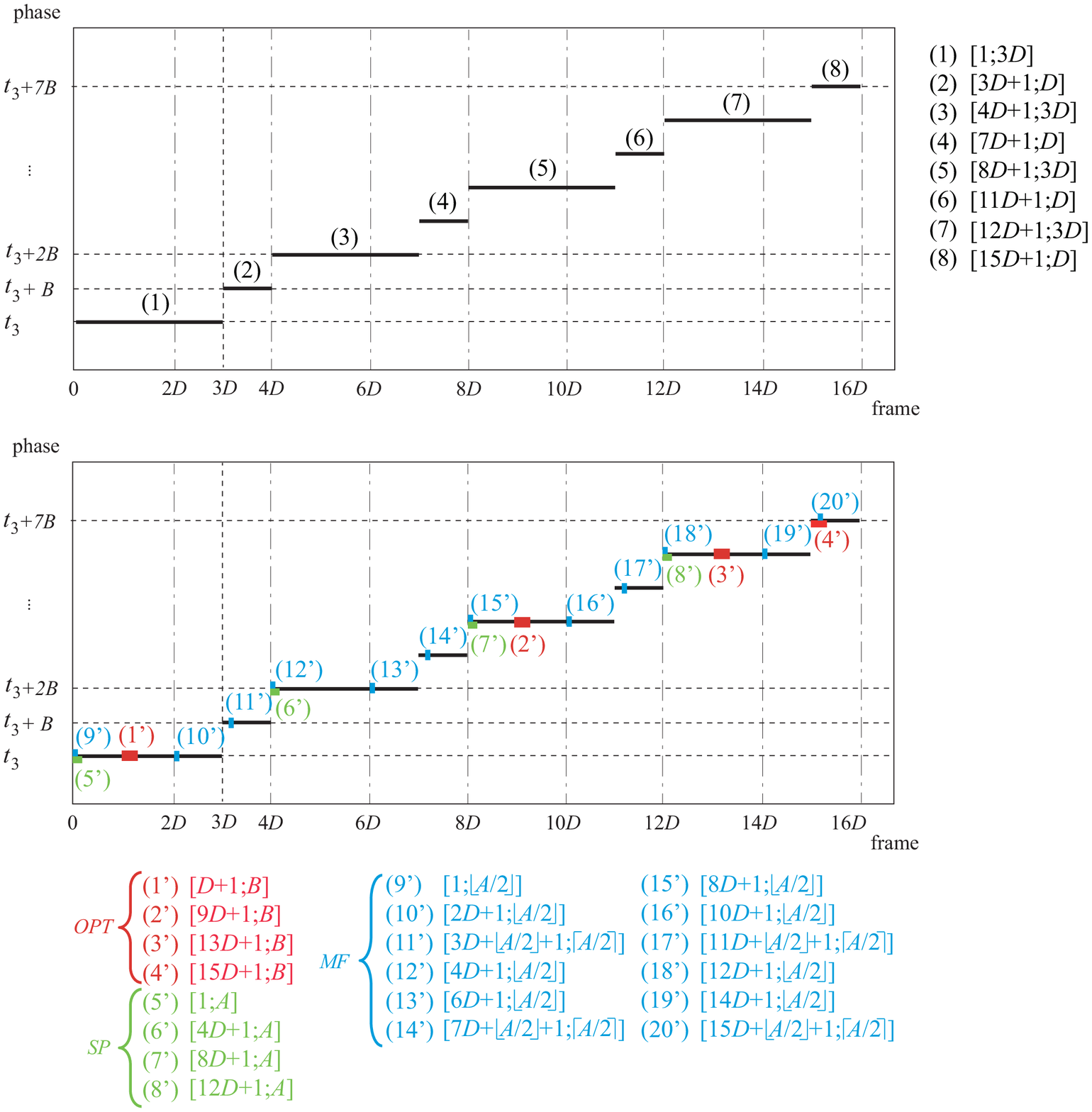}
	 \end{center}
	 \caption{Arriving 3-packets in $\sigma$ for the example of $SP$ for $k=5$.
	 }
	\label{fig:spa_mf_graph3}
\end{figure*}
\fi
\ifnum \count12 > 0
\begin{figure*}
	 \begin{center}
	  \includegraphics[width=160mm]{./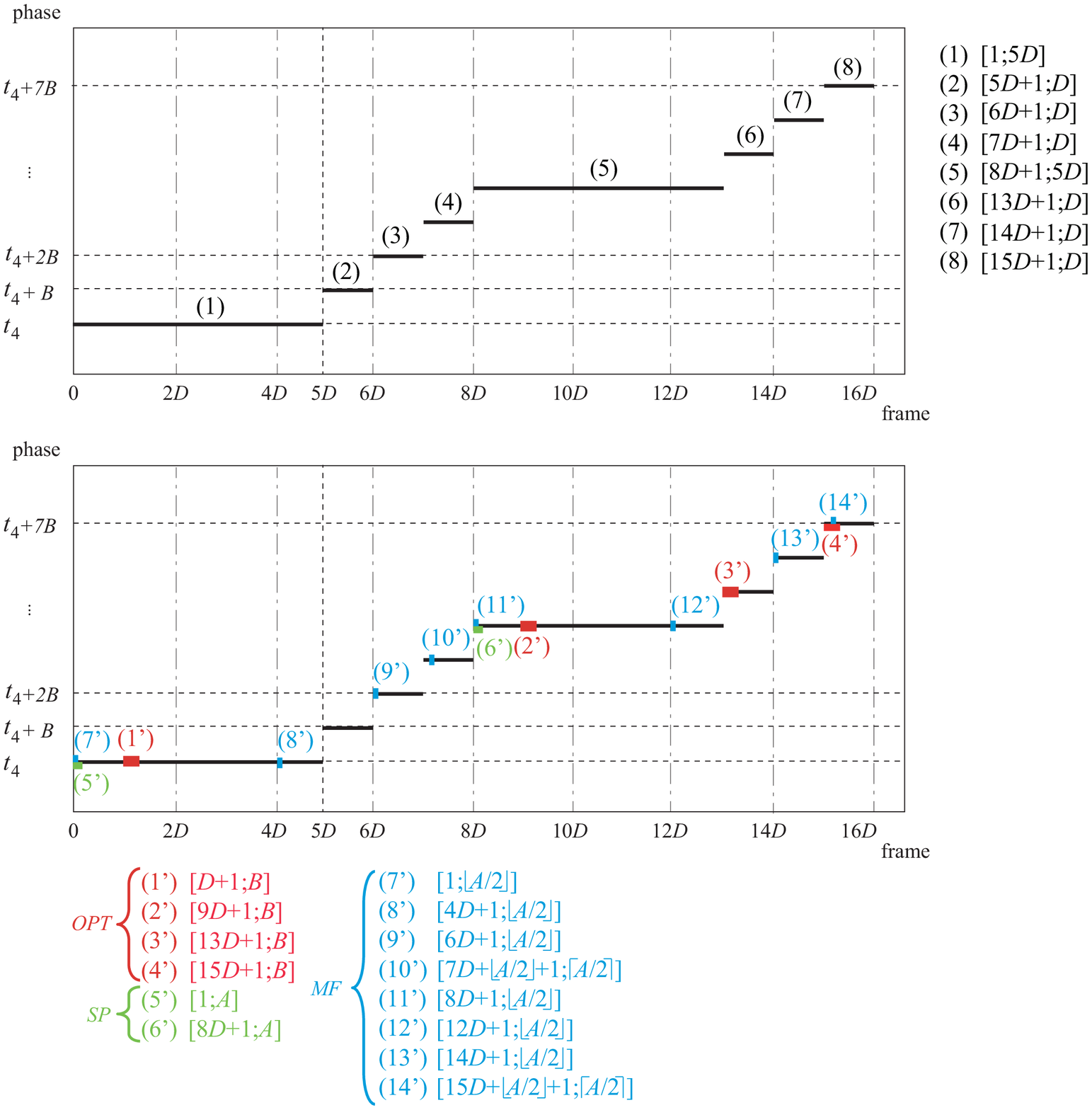}
	 \end{center}
	 \caption{Arriving 4-packets in $\sigma$ for the example of $SP$ for $k=5$.
	 }
	\label{fig:spa_mf_graph4}
\end{figure*}
\fi
\ifnum \count12 > 0
\begin{figure*}
	 \begin{center}
	  \includegraphics[width=160mm]{./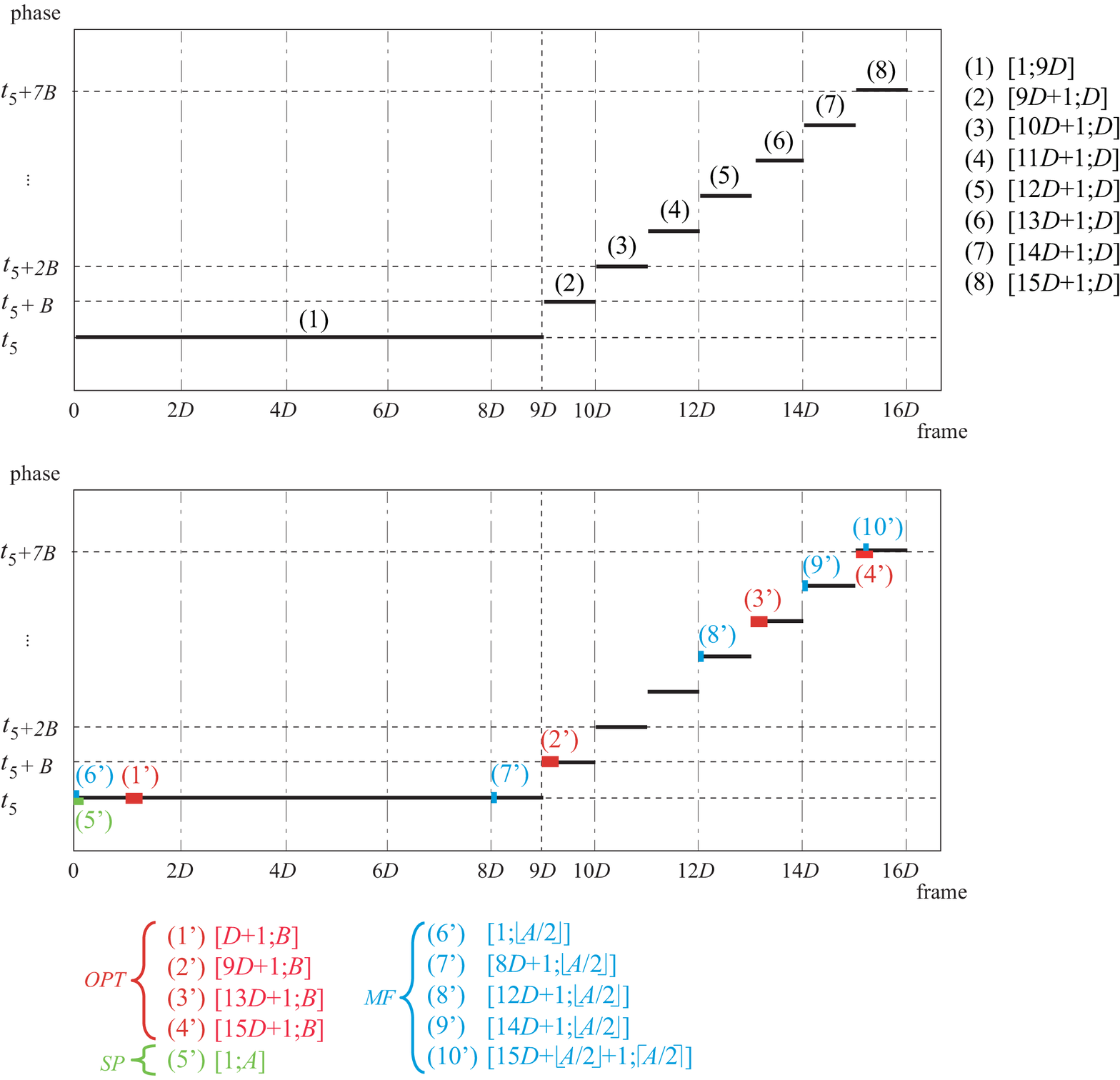}
	 \end{center}
	 \caption{Arriving 5-packets in $\sigma$ for the example of $SP$ for $k=5$.
	 }
	\label{fig:spa_mf_graph5}
\end{figure*}

\section{Execution Example of $MF$} \label{sec:ap.2}
\ifnum \count10 > 0
本節では、
表~\ref{tab:example_mf1}、\ref{tab:example_mf2}で与える入力$\sigma$によって$MF$の動作例を与える。
$k=3, B=12$ とする。
すなわち、$A=12/3=4$である。
$\sigma$ は $120$ 個のフレーム$f_1,\ldots,f_{120}$からなる。
また、任意の$i (\in [1,120])$について、
$f_i$に含まれる1-packet,2-packet,3-packetをそれぞれ$p_i, q_i, r_i$とする。
ここで、$\mbox{arr}(p_1) \le \mbox{arr}(p_2) \le \cdots \le \mbox{arr}(p_{120})$である。
1-packetは表~\ref{tab:example_mf1}で、2,3-packetは表~\ref{tab:example_mf2}で示されるように到着する。
表の各列は、順に、
パケットが到着するinteger time、到着するpacketsの名前、$GR_1$の動作（表~\ref{tab:example_mf1}のみ）、
$MF$の動作、$MF$が実行するCaseの名前、
各パケットのblock numberの値（表~\ref{tab:example_mf1}のみ）
をそれぞれ表す。
たとえば、
時刻$0$に$p_{1}, p_{2}, p_{3}, p_{4}$が到着し、
$MF$は、Case 1.2.1を実行してそれらを受理する。
（図~\ref{fig:MFex}参照。）
これらの1-packetの通し番号は1である。
とりわけ、
時刻$120$に$MF$は2-packet $q_{85}$をacceptし、
バッファ内の$q_{51}$を(middle-dropで)preemptする。
また、
時刻$120$に3-packet $r_{85}$をacceptする場合、
Case 2.2.2を実行して$r_{49}$をpreemptし、
$f_{49}$はvalidでなくなるので2-packet $q_{49}$もpreemptする。
更に、
通し番号が$2$であり、
$MF$が1-packetをacceptするフレームは、
$f_{49}, f_{50}, f_{51}, f_{52}$であるが、
event time $120_{q_{85}}$とevent time $120_{q_{86}}$に、
$q_{51}$と$q_{52}$がpreemptされている。
結果として、
event time $120_{r_{85}}$にvalidなframeの数は$\lfloor A/2 \rfloor = 2$より少なくなるので、
$MF$はCase 2.2.2.1を実行して、
$MF$のバッファ内の通し番号が2である全てのpacketをpreemptする。
\fi
\ifnum \count11 > 0
\com{（■英語）}
In this section, 
we give an execution example of $MF$ for a given input $\sigma$ in Tables~\ref{tab:example_mf1} and \ref{tab:example_mf2}. 
We suppose that $k=3$ and $B=12$, which means $A=12/3=4$. 
$\sigma$ includes $120$ frames $f_1,\ldots,f_{120}$. 
For each $i (\in [1,120])$, 
$p_i, q_i$ and $r_i$ denote the 1-packet, 2-packet and 3-packet in $f_{i}$, respectively. 
We suppose that $\mbox{arr}(p_1) \le \mbox{arr}(p_2) \le \cdots \le \mbox{arr}(p_{120})$. 
All 1-packets (all 2-packets and 3-packets) arrive as shown in Table~\ref{tab:example_mf1} (Table~\ref{tab:example_mf2}). 
Columns starting from the left in the tables present the arrival times of packets, the names of arriving packets, actions by $GR_{1}$ for arriving packets (only in Table~\ref{tab:example_mf1}), actions by $MF$ for arriving packets, the names of cases executed by $MF$ and the block numbers of arriving packets (only in Table~\ref{tab:example_mf1}). 
For example, 
1-packets $p_{1}, p_{2}, p_{3}$ and $p_{4}$ arrive at phase $0$, 
$MF$ executes Case 1.2.1, and accepts these packets. 
(See Figure~\ref{fig:MFex}.)
The block numbers of these 1-packets are 1. 
In particular, 
$MF$ accepts 2-packet $q_{85}$ at phase $120$, and 
preempts $q_{51}$ that is stored in its buffer at $t_{q_{85}}-$. 
(That is, $MF$ discards $q_{51}$ using a ``middle-drop'' policy.)
Moreover, 
when $MF$ accepts 3-packet $r_{85}$ at the $120$th phase, 
$MF$ executes Case 2.2.2, and preempts $r_{49}$. 
Hence, $f_{49}$ becomes invalid for $MF$. 
At this time, $MF$ preempts 2-packet $q_{49}$ as well. 
In addition, 
the frames $f$ such that the 1-packets in $f$ are accepted by $MF$, and $g(f) = 2$ are $f_{49}, f_{50}, f_{51}$ and $f_{52}$. 
At event times $t_{q_{85}}$ and $t_{q_{86}}$, 
$q_{51}$ and $q_{52}$ are preempted, respectively. 
That is, 
the number of valid frames of $MF$ with block number 2 decreases to less than $\lfloor A/2 \rfloor = 2$ at event time $t_{r_{85}}$. 
Thus, 
$MF$ further executes Case 2.2.2.1, and preempts all the packets whose block numbers are 2 in its buffer. 
\fi
\ifnum \count12 > 0
\begin{figure*}
	 \begin{center}
	  \includegraphics[width=150mm]{./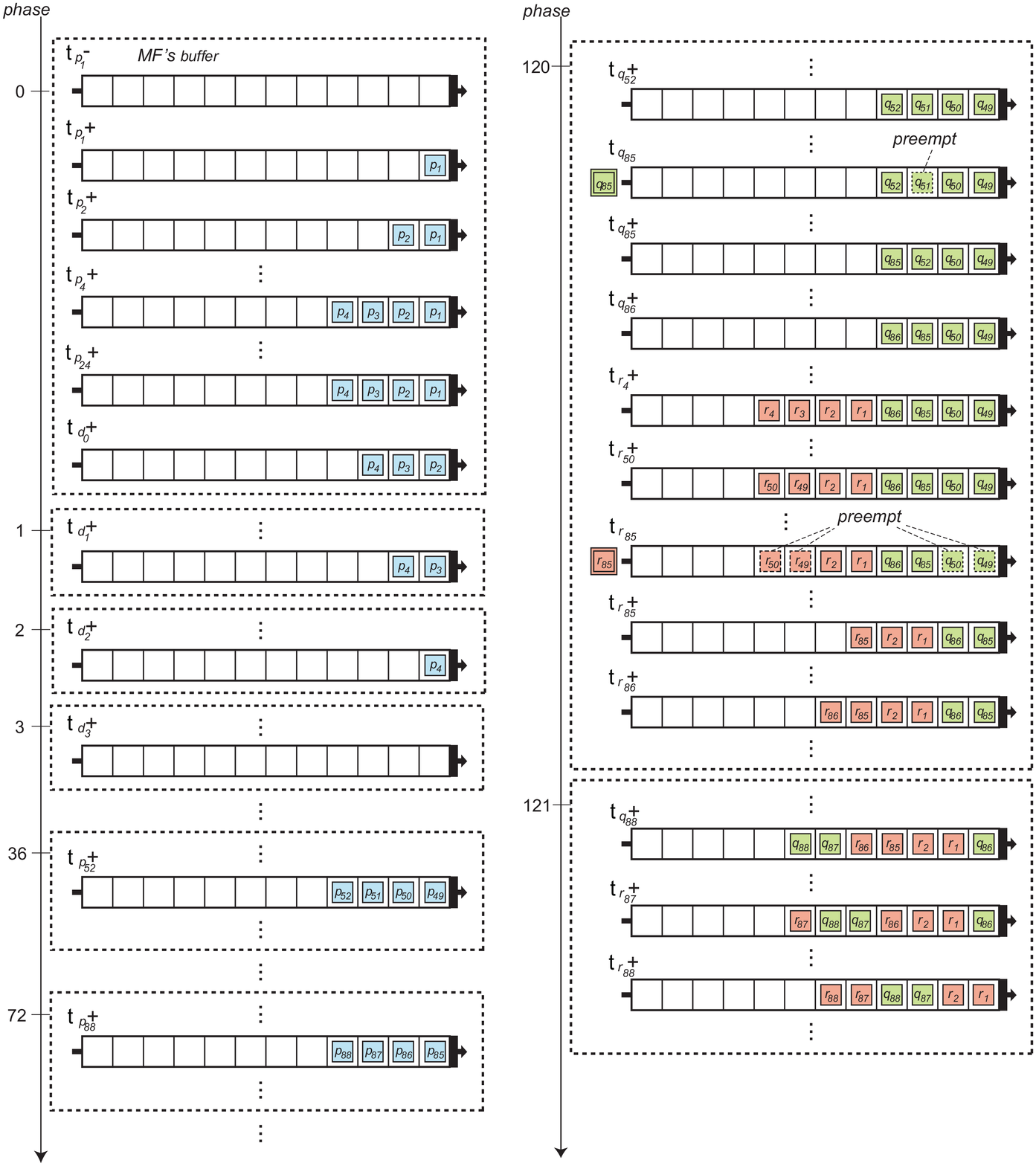}
	 \end{center}
	 \caption{
	 	Execution example of $MF$. 
		$t_{d_{i}}$ denotes the $i$th delivery time. 
	 	}
	\label{fig:MFex}
\end{figure*}
\fi

\begin{table}
\caption{Arriving 1-packets in $\sigma$}
\label{tab:example_mf1}
\begin{center}
\small
\begin{tabular}[t]{|c|c|c|c|c|c|}
\hline
     & Arrival & $GR_1$'s  & $MF$'s &      & \\
Time & Packets & Action    & Action & Case & Block \\
\hline
\multirow{3}*{0}  & $p_{1},\ldots,p_{4}$ & accept & accept & 1.2.1 & 1 \\ \cline{2-6}
                  & $p_{5},\ldots,p_{12}$ & accept & reject & 1.2.2 & 1 \\ \cline{2-6}
                  & $p_{13},\ldots,p_{24}$ & reject & reject & 1.1 & 1  \\
\hline
12                & $p_{25},\ldots,p_{36}$ & accept & reject & 1.2.2 & 1 \\
\hline
\multirow{2}*{24} & $p_{37},\ldots,p_{47}$ & accept & reject & 1.2.2 & 1 \\ \cline{2-6}
                  & $p_{48}$ & accept & reject & 1.2.3 & 1 \\
\hline
\multirow{2}*{36} & $p_{49},\ldots,p_{52}$ & accept & accept & 1.2.1 & 2  \\ \cline{2-6}
                  & $p_{53},\ldots,p_{60}$ & accept & reject & 1.2.2 & 2 \\ \cline{2-6}
\hline
48                & $p_{61},\ldots,p_{72}$ & accept & reject & 1.2.2 & 2 \\
\hline
\multirow{2}*{60} & $p_{73},\ldots,p_{83}$ & accept & reject & 1.2.2 & 2 \\ \cline{2-6}
                  & $p_{84}$ & accept & reject & 1.2.3 & 2 \\
\hline
\multirow{2}*{72} & $p_{85},\ldots,p_{88}$ & accept & accept & 1.2.1 & 3  \\ \cline{2-6}
                  & $p_{89},\ldots,p_{96}$ & accept & reject & 1.2.2 & 3 \\ \cline{2-6}
\hline
84                & $p_{97},\ldots,p_{108}$ & accept & reject & 1.2.2 & 3 \\
\hline
\multirow{2}*{96} & $p_{109},\ldots,p_{119}$ & accept & reject & 1.2.2 & 3 \\ \cline{2-6}
                  & $p_{120}$ & accept & reject & 1.2.3 & 3 \\
\hline
\end{tabular}
\end{center}
\end{table}

\begin{table}
\caption{Arriving 2-packets and 3-packets in $\sigma$}
\label{tab:example_mf2}
\begin{center}
\small
\begin{tabular}[t]{|c|c|c|c|}
\hline
     & Arrival & $MF$'s &      \\
Time & Packets & Action & Case \\
\hline
\multirow{2}*{108} & $q_{1},\ldots,q_{4}$ & accept & 2.2.1 \\ \cline{2-4}
                  & $q_{5},\ldots,q_{48}$ & reject & 2.1 \\
\hline
\multirow{17}*{120} & $q_{49},\ldots,q_{52}$ & accept & 2.2.1 \\ \cline{2-4}
                  & $q_{53},\ldots,q_{84}$ & reject & 2.1 \\ \cline{2-4}
                  & \multirow{2}*{$q_{85}$} & preempt $q_{51}$ & \multirow{2}*{2.2.2} \\
                  &          & accept $q_{85}$ &       \\ \cline{2-4}
                  & \multirow{2}*{$q_{86}$} & preempt $q_{52}$ & \multirow{2}*{2.2.2} \\
                  &          & accept $q_{86}$ &       \\ \cline{2-4}
                  & $r_{1},\ldots,r_{4}$ & accept & 2.2.1 \\ \cline{2-4}
                  & $r_{5},\ldots,r_{48}$ & reject & 2.1 \\ \cline{2-4}
                  & \multirow{2}*{$r_{49}$} & preempt $r_{3}$ & \multirow{2}*{2.2.2} \\
                  &          & accept $r_{49}$ &       \\ \cline{2-4}
                  & \multirow{2}*{$r_{50}$} & preempt $r_{4}$ & \multirow{2}*{2.2.2} \\
                  &          & accept $r_{50}$ &       \\ \cline{2-4}
                  & $r_{51},\ldots,r_{84}$ & reject & 2.1 \\ \cline{2-4}
                  & \multirow{3}*{$r_{85}$} & preempt $r_{49},q_{49}$ & \multirow{2}*{2.2.2} \\
                  &          & accept $r_{85}$ &       \\
                  &                         & preempt $r_{50},q_{50}$ & 2.2.2.1 \\ \cline{2-4}
                  & $r_{86}$ & accept & 2.2.1 \\ \cline{2-4}
\hline
\multirow{7}*{121} & $q_{87},q_{88}$ & accept & 2.2.1 \\ \cline{2-4}
                  & $q_{89},\ldots,q_{120}$ & reject & 2.1 \\ \cline{2-4}
                  & \multirow{2}*{$r_{87}$} & preempt $r_{85}$ & \multirow{2}*{2.2.2} \\
                  &          & accept $r_{87}$ &       \\ \cline{2-4}
                  & \multirow{2}*{$r_{88}$} & preempt $q_{86},r_{86}$ & \multirow{2}*{2.2.2} \\
                  &                         & accept $r_{88}$ &       \\ \cline{2-4}
                  & $r_{89},\ldots,r_{120}$ & reject & 2.1 \\
\hline
\end{tabular}
\end{center}
\end{table}

\ifnum \count14 > 0
\section{Proofs of Lemmas and Theorems} \label{sec:ap.3}
\subsection{Proofs of Lemmas}
\subsection{Proof of Theorem~\ref{thm:2}}
%

	%
	\ifnum \count10 > 0
	オンラインアルゴリズム$ALG$を固定して考える。
	次の様な入力$\sigma$を考える。
	phase $0$に$2B$個の1-パケットが到着する。
	このとき、$ALG$は$x (\leq B)$個のパケットを受理する。
	一方で、$OPT$は$ALG$が受理しないパケットを$B$個受理する。
	$ALG$が受理する$x$個のパケットの集合を$C$と呼び、
	$OPT$が受理する$B$個のパケットの集合を$D$と呼ぶ。
	（図~\ref{fig:LBCR2km1}参照。）
	$B$回の送信サブフェイズの後、
	$B + \lfloor \frac{B}{k-1} \rfloor$個の1-パケットが到着する。
	このとき、$ALG$は$y (\leq B)$個のパケットを受理する。
	一方で、$OPT$は$ALG$が受理しないパケットを$\lfloor \frac{B}{k-1} \rfloor$個だけ受理する。
	$ALG$が受理する$y$個のパケットの集合を$E$と呼び、
	$OPT$が受理する$\lfloor \frac{B}{k-1} \rfloor$個のパケットの集合を$F$と呼ぶ。
	$B$回の送信サブフェイズの後、
	phase $2B$に$2B$個の1-パケットが到着する。
	このとき、$ALG$は$z (\leq B)$個のパケットを受理する。
	一方で、$OPT$は$ALG$が受理しないパケットを$B$個受理する。
	$ALG$が受理する$z$個のパケットの集合を$G$と呼び、
	$OPT$が受理する$B$個のパケットの集合を$H$と呼ぶ。
	以上で、入力中の全ての1-packetは到着した。
	時刻$2B$より後は、各$j( \geq 2)$-packetが到着する。
	各$j = 2, ..., k$に対して、
	phase $3B + (j - 2)B = (j+1)B$に、
	パケット集合$D$の$B$個の$j$-パケットが到着し、
	$OPT$はそれらを全て受理し送信する。
	一方、
	phase $(k + 2)B$に、
	パケット集合$C, E, F, G$の2-パケットから$k$-パケット全てが一度に到着する。
	このとき、
	$OPT$は、$F$に対応する$\lfloor \frac{B}{k-1} \rfloor (k-1) \leq B$個のpacketをacceptする。
	一方、
	$ALG$がaccpetできるそれらのpacketは高々$B$個であるので、
	$ALG$がcompleteできるフレームの数は高々$\lfloor \frac{B}{k-1} \rfloor$個である。
	その到着フェイズの後、
	パケット集合$H$の2-パケットから$k$-パケットが到着し、
	$OPT$はそれらを全て受理し送信する。
	以上より、
	$V_{ALG}(\sigma) \leq \lfloor \frac{B}{k-1} \rfloor$
	と
	$V_{OPT}(\sigma) = 2B + \lfloor \frac{B}{k-1} \rfloor$
	が成立する。
	よって、
	$B \geq k-1$ならば、
	$\frac{V_{OPT}(\sigma)}{V_{ALG}(\sigma)} 
		\geq \frac{2B + \lfloor \frac{B}{k-1} \rfloor}{\lfloor \frac{B}{k-1} \rfloor} = \frac{2B}{\lfloor \frac{B}{k-1} \rfloor} + 1$が成立する。
	また、
	$B \leq k-2$ならば、
	競合比は発散する。
	\fi
	\ifnum \count11 > 0
	\com{（■英語）}
	Fix an online algorithm $ALG$. 
	Let us consider the following input $\sigma$. 
	(See Figure~\ref{fig:LBCR2km1}.) 
	At the 0th phase, 
	$2B$ 1-packets arrive. 
	$ALG$ accepts at most $B$ 1-packets, 
	and $OPT$ accepts $B$ 1-packets that are not accepted by $ALG$. 
	Let $C$ ($D$, respectively) be the set of the 1-packets accepted by $ALG$ ($OPT$, respectively). 
	At the $i$th phase ($i \in [1,B-1]$), 
	no packets arrive. 
	Hence, just after the $(B-1)$st phase, 
	both $ALG$'s and $OPT$'s queues are empty 
	(since $B$ delivery subphases occur).
	At the $B$th phase, 
	$B + \lfloor \frac{B}{k-1} \rfloor$ 1-packets arrive in the same manner as the first $2B$ 1-packets. 
	$ALG$ can accept at most $B$ 1-packets, and 
	$OPT$ accepts $\lfloor \frac{B}{k-1} \rfloor$ 1-packets that are not accepted by $ALG$. 
	Let $E$ ($F$, respectively) be the set of the packets accepted by $ALG$ ($OPT$, respectively). 
	At the $i$th phase ($i \in [B+1,2B-1]$), 
	no packets arrive, 
	and both $ALG$'s and $OPT$'s queues are empty
	just after the $(2B-1)$st phase. 
	Once again
	at the $2B$th phase, $2B$ 1-packets arrive. 
	$ALG$ accepts at most $B$ 1-packets, 
	and $OPT$ accepts $B$ 1-packets that are not accepted by $ALG$. 
	Let $G$ ($H$, respectively) be the set of the 1-packets accepted by $ALG$ ($OPT$, respectively). 
	This is the end of the arrivals and deliveries of 1-packets.
	At the $i$th phase ($i \in [2B+1,3B-1]$), 
	no packets arrive, 
	and hence just before the $3B$th phase, 
	both $ALG$'s and $OPT$'s queues are empty. 
	For each $j = 2, ..., k$, 
	the $B$ $j$-packets corresponding to 1-packets in $D$ arrive at the $(j+1)B$th phase. 
	$OPT$ accepts and transmits them.
	(There is no incentive for $ALG$ to accept them.)
	Next, all the packets corresponding to all the 1-packets in $C \cup E \cup F \cup G$ arrive at the $(k+2)B$th phase. 
	Since $ALG$ needs to accept all the $k-1$ packets of the same frame to complete it, 
	the number of frames $ALG$ can complete is at most $\lfloor \frac{B}{k-1} \rfloor$. 
	$OPT$ accepts all the $\lfloor \frac{B}{k-1} \rfloor (k-1)$ packets corresponding to all the 1-packets in $F$. 
	Note that this is possible because $\lfloor \frac{B}{k-1} \rfloor (k-1) \leq B$. 
	Hence, $OPT$ completes all the $\lfloor \frac{B}{k-1} \rfloor$ frames of $F$. 
	After which 
	all the packets corresponding to 1-packets in $H$ arrive one after the other, and 
	$OPT$ can accept and transmit them. 
	Note that the input sequence is order-respecting.
	By the above argument, 
	we have $V_{ALG}(\sigma) \leq \lfloor \frac{B}{k-1} \rfloor$
	and 
	$V_{OPT}(\sigma) = 2B + \lfloor \frac{B}{k-1} \rfloor$. 
	Therefore, 
	if $B \geq k-1$, 
	$\frac{V_{OPT}(\sigma)}{V_{ALG}(\sigma)} \geq \frac{2B}{\lfloor \frac{B}{k-1} \rfloor} + 1$. 
	If $B \leq k-2$, 
	the competitive ratio of $ALG$ is unbounded. 
	\fi
	%
%
\ifnum \count12 > 0
\begin{figure*}
	 \begin{center}
	  \includegraphics[width=150mm]{./LBCR2km1.eps}
	 \end{center}
	 \caption{Lower Bound Instance. 
	 	Each square denotes an arriving packet accepted by an online algorithm or $OPT$. 
		In the figure $X = \lfloor \frac{B}{k-1} \rfloor$. 
	 	}
	\label{fig:LBCR2km1}
\end{figure*}
\fi
\subsection{Proof of Theorem~\ref{thm:4}}
%

	%
	\ifnum \count10 > 0
	まず、本証明を概説する。
	オンライン確率アルゴリズムには$k-1$通りの選択肢をもつ入力が与えられる。
	$k-1$通りの選択肢の中には、1つだけ多くのframeをcomplete出来る選択肢がある（called good）。
	のこりの$k-2$の選択肢は、どんなに頑張ってもそれと比べてごくわずかのframeしかcomplete出来ない。
	オンラインアルゴリズムには選択を終えた後に、その結果が提示される。
	オンラインアルゴリズムは、もちろん、good選択肢を知らないが、
	$OPT$はその選択肢をしっている。
	例えば、1番目の選択肢がgoodだとしよう。
	オンラインはたまたま1番目の選択肢を重視しているかもしれない。
	そこで、
	更に、
	$k-1$個の全く同じ選択肢をもち、当たりの選択肢だけが異なる$k-1$通りの入力を考えよう。
	オンラインアルゴリズムが選択を行う前の段階では、
	$k-1$個の入力は全て同じ入力にみえる様になっている。
	すなわち、
	1番目の選択肢を重視するアルゴリズムは、
	残りの$k-2$個の入力において大損する可能性があるのである。
	用語を定義する。
	任意のevent time $t$、任意の整数$x \in [1, k]$に対して、
	に$x$-パケットが$B$個到着し、
	そのあと、$t$から$t + B-1$まで$B$回送信フェイズが実行される。
	この部分入力列を{\em $\boldsymbol{x}$-サブラウンド}と呼ぶ。
	十分大きな整数$y$に対して、
	$x$-サブラウンドが$y$回繰り返された場合、
	その部分入力列を{\em $\boldsymbol{x}$-ラウンド}と呼ぶ。
	（図~\ref{fig:LBR1}参照。）
	$1$-ラウンドから$k$-ラウンドまで連続した部分入力列を、
	{\em 良いラウンド}と呼ぶ。
	$1$-ラウンドから$k-1$-ラウンドまで連続した後に、
	それらの$By$個の$k$-パケットが全て一度に到着する入力を、
	{\em 悪いラウンド}と呼ぶ。
	それらを踏まえて、
	次の様$\sigma$な入力を考える。
	$\sigma$は、$k-1$個のラウンド：
	$1$個の良いラウンドと$k-2$個の悪いラウンドから構成されている。
	具体的には、$i \in [1, k-2]$に対して、
	$i+1$番目のラウンドは、$i$番目のラウンドの$1$-サブラウンドが終了した直後に開始される。
	よって、
	ある時刻に
	$1$番目のラウンドの$k-1$-サブラウンド、
	$2$番目のラウンドの$k-2$-サブラウンド、
	$3$番目のラウンドの$k-3$-サブラウンド、
	$\cdots$、
	$k-1$番目のラウンドの$1$-サブラウンドが同時に開始される。
	（図~\ref{fig:LBR2}参照。）
	このとき、
	オンラインアルゴリズム$ALG$はどのラウンドが良いフレームなのか分からないので、
	良いラウンドの$yB$個のフレームのうち、
	高々$yB/(k-1)$個しか完成させることが出来ない。
	一方で、$OPT$は良いラウンドの全てのフレームを構成するパケットを受理することが出来る。
	よって、
	$\frac{V_{OPT}(\sigma)}{{\mathbb E}[V_{ALG}(\sigma)]} \geq \frac{yB}{yB / (k-1)} = k - 1$が成立する。
	\fi
	\ifnum \count11 > 0
	\com{（■英語）}
	Fix an arbitrary randomized online algorithm $ALG$.  Let $y$ be a large
	integer that will be fixed later.  Our adversarial input $\sigma$
	consists of $(k-1)yB$ frames.  These frames are divided into $k-1$
	groups each with $yB$ frames.  Also, frames of each group is divided
	into $y$ subgroups each with $B$ frames.  For each $i (\in [1,k-1])$ and
	$j (\in [1,y])$, let $F(i,j)$ be the set of frames in the $j$th subgroup
	of the $i$th group and let $F(i)=\cup_{j} F(i,j)$.  For each $x (\in
	[1,k])$, let $P(i,j,x)$ be the set of $x$-packets of the frames in
	$F(i,j)$ and let $P(i,x)=\cup_{j} P(i,j,x)$.
	
	We first give a very rough idea of how to construct the adversary.
	Among the $k-1$ groups defined above, one of them is a good group. 
	In the first half of the input (from phase $0$ to phase $(k-1)yB-1$), 
	the adversary gives packets to the online algorithm in such a way that the algorithm cannot distinguish the good group. 
	Also, since the buffer size is bounded,
	the algorithm must give up many frames during the first half; 
	only $yB$ frames can survive at the end of the first half. 
	In the second half of the input, 
	remaining packets are given in such a way that $k$-packets from the bad groups arrive at a burst, 
	while $k$-packets from the good group arrive one by one. 
	Hence, 
	if the algorithm is lucky enough to keep many packets of the good group (say, Group 1) at the end of the first half, then it can complete many frames eventually. 
	However, 
	such an algorithm behaves very poorly for an input in which Group 1 is bad. 
	Therefore, the best strategy of an online algorithm (even randomized one) is to keep equal number of frames from each group during the first half. 
	
	Before showing our adversarial input, we define a subsequence of an input.
	For any $t$, suppose that $B$
	packets of $P(i,j,x)$ arrive at the $t$th phase and no packets arrive
	during $t+1$ through $(t+B-1)$st phases. 
	Let us call this subsequence a {\em subround of $P(i,j,x)$ starting at the $t$th phase}. 
	Notice that if we focus on
	a single subround, an algorithm can accept and transmit all the packets
	of $P(i,j,x)$ by the end of the subround.  A {\em round of $P(i,x)$
	starting at the $t$th phase} is a concatenation of $y$ subrounds of $P(i,j,x)$ ($j\in [1, y]$), 
	where each subround of $P(i,j,x)$ starts at the $(t+(j-1)B)$th phase.
	(See the left figure in Fig.~\ref{fig:LBR1}.)
	
	Our input consists of rounds of $P(i,x)$ starting at the $(i+x-2)yB$th phase, 
	for $i\in [1, k-1]$ and $x\in [1, k-1]$. 
	(See Fig.~\ref{fig:LBR2}.)
	Note that any two rounds $P(i,x)$ and $P(i',x')$ start simultaneously if $i + x = i' + x'$. 
	Currently, 
	the specification of the arrival of packets in $P(i,x)$ for $x=k$ is missing. 
	This is the key for the construction of our adversary and will be explained shortly.
	
	Consider $k-1$ rounds (of $P(1,k-1), P(2,k-2), \cdots, P(k-1,1)$) starting at the $(k-2)yB$th phase, 
	which occur simultaneously.  Note that for each
	$j$, at the $j$th subround of these $k-1$ rounds, $ALG$ can accept at
	most $B$ packets (out of $(k-1)B$ ones) because of the size constraint
	of the buffer.  For each $j \in [1,y]$, let $A_{i,j}$ denote the expected
	number of packets that $ALG$ accepts from $P(i,j,k-i)$.  By the above
	argument, we have that $\Sigma_{i} A_{i,j} \leq B$ and hence $\Sigma_{i}
	\Sigma_{j} A_{i,j} \leq yB$. 
	Let $A_{i}=\Sigma_{j}A_{i,j}$ and let $A_{z}$ be the minimum among $A_{1}, A_{2}, \cdots, A_{k-1}$ 
	(ties are broken arbitrarily). 
	Note that $A_{z} \leq \frac{yB}{k-1}$ since $\Sigma_{i} A_{i} = \Sigma_{i} \Sigma_{j} A_{i,j} \leq yB$.
	Also, note that since $A_{i}$ is an expectation, $z$ is determined only by the description of $ALG$ (and not by the actual behavior of $A$).
	
	We now explain the arrival of packets in $P(i,k)$ ($i\in [1, k-1]$).
	(See the right figure in Fig.~\ref{fig:LBR1}.)
	For $i\neq z$, all the $yB$ packets in $P(i,k)$ arrive simultaneously
	at the $(i+k-2)yB$th phase.  As for $i=z$, packets are given as a
	usual round, i.e., we have a round of $P(z,k)$ starting at
	$(z+k-2)yB$. 
	It is not hard to verify that this input is order-respecting.
	Also, 
	it can be easily verified that our adversary is oblivious because the construction of the input does not depend on the actual behavior of $ALG$. 
	Specifically, 
	$z$ depends on only the values of $A_{i,j}$ $(i \in [1, k-1], j \in [1, y])$, and 
	$\sigma$ can be constructed not with time but in advance. 
	
	First, note that $OPT$ can accept and transmit all the packets in
	$P(z,x)$ for any $x$.  Therefore, $OPT$ can complete all the $yB$
	frames in $F(z)$ and hence $V_{OPT}(\sigma) \geq yB$.  On the other
	hand, since all the packets in $P(i,k)$ ($i\neq z$) arrive
	simultaneously, $ALG$ can accept at most $B$ packets of them and hence can
	complete at most $B$ frames of $F(i)$ for each $i$.  As for $F(z)$,
	$ALG$ can complete at most $A_{z} \leq \frac{yB}{k-1}$ frames of them and
	hence ${\mathbb E}[V_{ALG}(\sigma)] \leq \frac{yB}{k-1}+(k-2)B$.  If we
	take $y \geq \frac{(k-1)^{2}(k-2)}{\epsilon}-(k-1)(k-2)$, we have that
	\begin{equation*}
		\frac{V_{OPT}(\sigma)}{{\mathbb E}[V_{ALG}(\sigma)]} \geq
		 \frac{yB}{(yB)/(k-1)+(k-2)B}
		 = k-1-\frac{(k-1)^{2}(k-2)}{y+(k-1)(k-2)}
		 \geq k-1-\epsilon. 
	\end{equation*}
	\fi
	%
%
%
\ifnum \count12 > 0
\begin{figure*}
	 \begin{center}
	  \includegraphics[width=150mm]{./LBR1_m.eps}
	 \end{center}
	 \caption{
		$P(a,x)$ is written as $P_{a,x}$ in this figure. 
		The left figure shows a round of $P(a,x)$ 
		except for the case where $a \ne z$ and $x = k$. 
		On the other hand, 
		the right figure shows a round of $P(a,k)$ 
		for each $a (\in [1, k-1])$ such that $a \ne z$. 
	 }
	\label{fig:LBR1}
\end{figure*}
\begin{figure*}
	 \begin{center}
	  \includegraphics[width=150mm]{./LBR2_m.eps}
	 \end{center}
	 \caption{
	 	Lower bound instance for randomized algorithms. 
		}
	\label{fig:LBR2}
\end{figure*}
\fi
\fi

\end{document}